\newcommand{\N}{\mathbb{N}}
\newcommand{\Z}{\mathbb{Z}}
\newcommand{\Zn}[1]{\Z/{#1}\Z}
\newcommand{\Zd}{\Z_{2}}
\newcommand{\ST}[1]{\mathcal{ST}(#1)}
\newcommand{\RST}[1]{\mathcal{RST}(#1)}
\newcommand{\HST}[1]{\mathcal{HST}(#1)}
\newcommand{\DST}[1]{\mathcal{DST}(#1)}
\newcommand{\DSTz}[1]{\mathcal{DST}_0(#1)}
\newcommand{\PT}[1]{\mathcal{PT}(#1)}
\newcommand{\RPT}[1]{\mathcal{RPT}(#1)}
\newcommand{\HPT}[1]{\mathcal{HPT}(#1)}
\newcommand{\DPT}[1]{\mathcal{DPT}(#1)}
\newcommand{\ESG}[1]{\mathcal{ESG}(#1)}
\newcommand{\OSG}[1]{\mathcal{OSG}(#1)}
\newcommand{\PRSG}[1]{\mathcal{PRSG}(#1)}
\newcommand{\Strig}[1]{\nabla{#1}}
\newcommand{\iStrig}[1]{\nabla_{\!#1}} 
\newcommand{\Ptrig}[2]{\Delta\left(#1,#2\right)}
\newcommand{\Smat}[1]{\mathrm{M}\left(#1\right)}
\newcommand{\Sgraph}[1]{\mathrm{G}\left(#1\right)}
\newcommand{\SG}[1]{\mathcal{SG}(#1)}
\newcommand{\BS}[3]{\mathrm{S}^{(#1)}_{#2,#3}} 
\newcommand{\ES}[2]{\mathrm{E}^{\left(#1\right)}_{#2}} 
\newcommand{\He}{\mathrm{H}}
\newcommand{\binomd}[2]{{\binom{#1}{#2}}_{\!2}}
\newcommand{\deltamod}[3]{\delta_{#1,#2\,(\mathrm{mod}\: #3)}}
\newcommand{\tpmod}[1]{\left(\mathrm{mod}\ #1\right)}
\newcommand*\MyBinom[2]
\newcounter{c}
\newcommand{\figLine}[9]{
	\foreach[count=\ji from 0] \j in {#5,...,#6}{
		\setcounter{c}{0}
		\foreach[count=\xi] \x in #1{
			\ifnum \x>0
				\ifnum \MyBinom{#4-1}{\j-\xi}>0
					\addtocounter{c}{1}
				\fi
			\fi
		}
		\pgfmathparse{int(mod(\value{c},2))}\let\n\pgfmathresult
		\ifnum \n>0
			\draw[#7,fill=#9] (#2+2*\ji,#3) circle (1);		
		\else
			\draw[#7,fill=#8] (#2+2*\ji,#3) circle (1);
		\fi
		\draw[#7] (#2+2*\ji,#3) node {\n};
	}
}
\newcommand{\figST}[5]{
	\setcounter{c}{0}
	\foreach \x in #1{
		\addtocounter{c}{1}
	}
	\pgfmathparse{\value{c}}\let\size\pgfmathresult
	\pgfmathparse{sqrt(3)}\let\sq\pgfmathresult
	\pgfmathparse{(\size-1)*\sq}\let\v\pgfmathresult
	\pgfmathparse{2*(\size-1)}\let\w\pgfmathresult
	\draw[#2,fill=#5] (-\sq,1) -- (\w+\sq,1) -- (\size-1,-2-\v) -- (-\sq,1);
	\foreach \k in {1,...,\size}{
		\pgfmathparse{sqrt(3)*(\k-1)}\let\l\pgfmathresult
		\figLine{#1}{\k-1}{-\l}{\k}{\k}{\size}{#2}{#3}{#4}
	}
}
\newcommand{\figSTcolsq}[5]{
	\setcounter{c}{0}
	\foreach \x in #1{
		\addtocounter{c}{1}
	}
	\pgfmathparse{\value{c}}\let\size\pgfmathresult
	\pgfmathparse{sqrt(3)}\let\sq\pgfmathresult
	\pgfmathparse{(\size-1)*\sq}\let\v\pgfmathresult
	\pgfmathparse{2*(\size-1)}\let\w\pgfmathresult
	\draw[white,fill=#5] (-\sq-1,2) -- (\w+\sq+1,2) -- (\w+\sq+1,-3-\v) -- (-\sq-1,-3-\v) -- (-\sq-1,2);
	\draw[#2] (-\sq,1) -- (\w+\sq,1) -- (\size-1,-2-\v) -- (-\sq,1);
	\foreach \k in {1,...,\size}{
		\pgfmathparse{sqrt(3)*(\k-1)}\let\l\pgfmathresult
		\figLine{#1}{\k-1}{-\l}{\k}{\k}{\size}{#2}{#3}{#4}
	}
}
\newcommand{\figPT}[5]{
	\setcounter{c}{0}
	\foreach \a in #1{
		\addtocounter{c}{1}
	}
	\pgfmathparse{int((\value{c}+1)*0.5)}\let\n\pgfmathresult
	\pgfmathparse{sqrt(3)}\let\sq\pgfmathresult
	\pgfmathparse{(\n-1)*sqrt(3)}\let\v\pgfmathresult
	\draw[#2,fill=#5] (0,2) -- (-\n+1-\sq,-\v-1) -- (\n-1+\sq,-\v-1) -- (0,2);
	\foreach \a in {1,...,\n} {
		\pgfmathparse{sqrt(3)*(\a-1)}\let\b\pgfmathresult
		\pgfmathparse{\n+\a-1}\let\m\pgfmathresult
		\figLine{#1}{-\a+1}{-\b}{\a}{\n}{\m}{#2}{#3}{#4}
	}
}
\newcommand{\figSG}[3]{
	\setcounter{c}{0}
	\foreach \x in #1 {\addtocounter{c}{1}}
	\pgfmathparse{\value{c}+1}\let\n\pgfmathresult
	\pgfmathparse{\value{c}}\let\nn\pgfmathresult	
	\pgfmathparse{0.5*\n}\let\size\pgfmathresult
	\def \radius {\size cm}
	\def \margin {8} 
	\foreach \s in {1,...,\n}
	{
		\node[#2,draw, circle, fill = #3] (\s) at ({360/\n * (\s - 1)}:\radius) {$\s$};
	}
	\foreach \i in {1,...,\nn}{
		\foreach \j in {\i,...,\nn}{
			\setcounter{c}{0}
			\foreach[count=\xi] \x in #1{
				\ifnum \x>0
					\ifnum \MyBinom{\i-1}{\j-\xi}>0
						\addtocounter{c}{1}
					\fi
				\fi
			}
			\pgfmathparse{int(mod(\value{c},2))}\let\m\pgfmathresult
			\ifnum \m>0
				\pgfmathparse{int(\i)}\let\ii\pgfmathresult
				\pgfmathparse{int(\j+1)}\let\jj\pgfmathresult
				\draw[#2] (\ii) -- (\jj);
			\fi
		}
	}
}
\newcommand{\figSM}[5]{
	\tikzset{
		ncbar angle/.initial=90,
		ncbar/.style={
			to path=(\tikztostart)
			-- ($(\tikztostart)!#1!\pgfkeysvalueof{/tikz/ncbar angle}:(\tikztotarget)$)
			-- ($(\tikztotarget)!($(\tikztostart)!#1!\pgfkeysvalueof{/tikz/ncbar angle}:(\tikztotarget)$)!\pgfkeysvalueof{/tikz/ncbar angle}:(\tikztostart)$)
			-- (\tikztotarget)
		},
		ncbar/.default=0.5cm,
	}

	\tikzset{square left brace/.style={ncbar=0.5cm}}
	\tikzset{square right brace/.style={ncbar=-0.5cm}}

	\tikzset{round left paren/.style={ncbar=0.5cm,out=100,in=-100}}
	\tikzset{round right paren/.style={ncbar=0.5cm,out=80,in=-80}}

	\setcounter{c}{0}
	\foreach \x in #1 {\addtocounter{c}{1}}
	\pgfmathparse{\value{c}}\let\n\pgfmathresult
	\foreach \i in {1,...,\n} {
		\pgfmathparse{2*(\i-1)}\let\y\pgfmathresult
		\draw[#2,fill=#5] (\y,-\y) circle (1);
		\draw[#2] (\y,-\y) node {$0$};
		\foreach \j in {\i,...,\n}{
			\setcounter{c}{0}
			\foreach[count=\zi] \z in #1{
				\ifnum \z>0
					\ifnum \MyBinom{\i-1}{\j-\zi}>0
						\addtocounter{c}{1}
					\fi
				\fi
			}
			\pgfmathparse{int(mod(\value{c},2))}\let\num\pgfmathresult
			\pgfmathparse{2*\j}\let\x\pgfmathresult
			\ifnum \num>0
				\draw[#2,fill=#4] (\x,-\y) circle (1);
				\draw[#2,fill=#4] (\y,-\x) circle (1);
			\else
				\draw[#2,fill=#3] (\x,-\y) circle (1);
				\draw[#2,fill=#3] (\y,-\x) circle (1);
			\fi
			\draw[#2] (\x,-\y) node {$\num$};
			\draw[#2] (\y,-\x) node {$\num$};
		}
	}
	\pgfmathparse{2*\n}\let\y\pgfmathresult
	\draw[#2,fill=#5] (\y,-\y) circle (1);
	\draw[#2] (\y,-\y) node {$0$};
	\draw [#2, thick] (-1,-\y-1) to [round left paren ] (-1,1);
	\draw [#2, thick] (\y+1,-\y-1) to [round right paren ] (\y+1,1);
}
\newcommand{\figLineGS}[9]{
	\foreach[count=\ji from 0] \j in {#5,...,#6}{
		\setcounter{c}{0}
		\foreach[count=\xi] \x in #1{
			\ifnum \x>0
				\ifnum \MyBinom{#4-1}{\j-\xi}>0
					\addtocounter{c}{1}
				\fi
			\fi
		}
		\pgfmathparse{int(mod(\value{c},2))}\let\n\pgfmathresult
		\ifnum \n>0
			\draw[fill=#9] (#2+2*\ji,#3) circle (1);		
		\else
			\draw[fill=#8] (#2+2*\ji,#3) circle (1);
		\fi
		\draw[#7] (#2+2*\ji,#3) node {\n};
	}
}
\newcommand{\figSTGS}[5]{
	\setcounter{c}{0}
	\foreach \x in #1{
		\addtocounter{c}{1}
	}
	\pgfmathparse{\value{c}}\let\size\pgfmathresult
	\pgfmathparse{sqrt(3)}\let\sq\pgfmathresult
	\pgfmathparse{(\size-1)*\sq}\let\v\pgfmathresult
	\pgfmathparse{2*(\size-1)}\let\w\pgfmathresult
	\draw[fill=#5] (-\sq,1) -- (\w+\sq,1) -- (\size-1,-2-\v) -- (-\sq,1);
	\foreach \k in {1,...,\size}{
		\pgfmathparse{sqrt(3)*(\k-1)}\let\l\pgfmathresult
		\figLineGS{#1}{\k-1}{-\l}{\k}{\k}{\size}{#2}{#3}{#4}
	}
}
\newcommand{\PascalMatrix}[4]{
	\pgfmathparse{#1-1}\let\a\pgfmathresult
	\foreach \i in {-#1,...,\a}{
		\foreach \j in {-#1,...,-1}{
			\draw[#2,fill=#3] (2*\j,-2*\i) circle (1);
			\draw[#2] (2*\j,-2*\i) node {$0$};
		}
	}
	\foreach \i in {0,...,\a}{
		\foreach \j in {0,...,\a}{
			\pgfmathparse{\MyBinom{\i}{\j}}\let\n\pgfmathresult
			\ifnum \n>0
				\draw[#2,fill=#4] (2*\j,-2*\i) circle (1);
			\else
				\draw[#2,fill=#3] (2*\j,-2*\i) circle (1);			
			\fi
			\draw[#2] (2*\j,-2*\i) node {$\n$};
		}
	}
	\foreach \i in {-#1,...,-1}{
		\foreach \j in {0,...,\a}{
			\pgfmathparse{\MyBinom{\j-\i-1}{\j}}\let\n\pgfmathresult
			\ifnum \n>0
				\draw[#2,fill=#4] (2*\j,-2*\i) circle (1);
			\else
				\draw[#2,fill=#3] (2*\j,-2*\i) circle (1);			
			\fi
			\draw[#2] (2*\j,-2*\i) node {$\n$};
		}
	}
	\foreach \i in {-#1,...,\a}{
		\draw[#2] (-2*#1-2.5,-2*\i) node {$\cdots$};
		\draw[#2] (2*#1+0.5,-2*\i) node {$\cdots$};
	}
	\foreach \j in {-#1,...,\a}{
		\draw[#2] (2*\j,2*#1+3) node {$\vdots$};
		\draw[#2] (2*\j,-2*#1) node {$\vdots$};
	}
}
\newcommand{\figT}[6]{
	\pgfmathparse{sqrt(3)}\let\sq\pgfmathresult
	\pgfmathparse{(#1-1)*\sq}\let\v\pgfmathresult
	\pgfmathparse{2*(#1-1)}\let\w\pgfmathresult
	\draw[#3,fill=#6] (-\sq,1) -- (\w+\sq,1) -- (#1-1,-2-\v) -- (-\sq,1);
	\foreach[count=\aj from 0] \a in #2{
		\pgfmathparse{#1-\aj+1}\let\jj\pgfmathresult
		\pgfmathparse{\aj*\sq}\let\rj\pgfmathresult
		\foreach[count=\bi from 0] \b in \a{
			\pgfmathparse{2*\bi+\aj}\let\x\pgfmathresult
			\ifnum \b>0
				\draw[#3,fill=#5] (\x,-\rj) circle (1);
			\else
				\draw[#3,fill=#4] (\x,-\rj) circle (1);
			\fi
			\draw[#3] (\x,-\rj) node {$\b$};
		}
	}
}
\newcommand{\figTC}[9]{
	\pgfmathparse{sqrt(3)}\let\sq\pgfmathresult
	\pgfmathparse{(#1-1)*\sq}\let\v\pgfmathresult
	\pgfmathparse{2*(#1-1)}\let\w\pgfmathresult
	\draw[#3] (-\sq,1) -- (\w+\sq,1) -- (#1-1,-2-\v) -- (-\sq,1);
	\foreach[count=\aj from 0] \a in #2{
		\pgfmathparse{#1-\aj+1}\let\jj\pgfmathresult
		\pgfmathparse{\aj*\sq}\let\rj\pgfmathresult
		\foreach[count=\bi from 0] \b in \a{
			\pgfmathparse{2*\bi+\aj}\let\x\pgfmathresult
			\ifnum \b=0
				\draw[#3,fill=#4] (\x,-\rj) circle (1);
			\else
				\ifnum \b=1
					\draw[#3,fill=#5] (\x,-\rj) circle (1);
				\else
					\ifnum \b=2
						\draw[#3,fill=#6] (\x,-\rj) circle (1);
					\else
						\ifnum \b=3
							\draw[#3,fill=#7] (\x,-\rj) circle (1);
						\else
							\ifnum \b=4
								\draw[#3,fill=#8] (\x,-\rj) circle (1);
							\else
								\draw[#3,fill=#9] (\x,-\rj) circle (1);
							\fi
						\fi
					\fi
				\fi
			\fi
			\draw[#3] (\x,-\rj) node {$\b$};
		}
	}
}
\def\restriction#1#2{\mathchoice
              {\setbox1\hbox{${\displaystyle #1}_{\scriptstyle #2}$}
              \restrictionaux{#1}{#2}}
              {\setbox1\hbox{${\textstyle #1}_{\scriptstyle #2}$}
              \restrictionaux{#1}{#2}}
              {\setbox1\hbox{${\scriptstyle #1}_{\scriptscriptstyle #2}$}
              \restrictionaux{#1}{#2}}
              {\setbox1\hbox{${\scriptscriptstyle #1}_{\scriptscriptstyle #2}$}
              \restrictionaux{#1}{#2}}}
\def\restrictionaux#1#2{{#1\,\smash{\vrule height .8\ht1 depth .85\dp1}}_{\,#2}} 
\definecolor{col0}{rgb}{1.0, 1.0, 1.0}
\definecolor{col1}{gray}{0.66}
\definecolor{col2}{rgb}{1.0, 0.5, 0.5}
\definecolor{col3}{rgb}{0.0, 0.72, 0.92}
\renewcommand{\ge}{\geqslant}
\renewcommand{\geq}{\geqslant}
\renewcommand{\le}{\leqslant}
\theoremstyle{plain}
\newtheorem{thm}{Theorem}[section]
\newtheorem{lem}[thm]{Lemma}
\newtheorem{prop}[thm]{Proposition}
\newtheorem{cor}[thm]{Corollary}
\newtheorem*{quest}{Question}
\theoremstyle{definition}
\theoremstyle{remark}
\newtheorem*{rem}{Remark}
\newtheorem{case}{Case}
\providecommand{\msc}[1]{\textbf{\textit{MSC2010: }} #1}
\providecommand{\keywords}[1]{\textbf{\textit{Keywords:}} #1}
\title{Symmetric binary Steinhaus triangles and parity-regular Steinhaus graphs}
\author{Jonathan CHAPPELON\footnote{E-mail address: \href{mailto:jonathan.chappelon@umontpellier.fr}{jonathan.chappelon@umontpellier.fr}}}
\affil{IMAG, Université de Montpellier, CNRS, Montpellier, France}
\date{October 29, 2021}
\begin{document}
\maketitle
\begin{abstract}
A binary Steinhaus triangle is a triangle of zeroes and ones that points down and with the same local rule as the Pascal triangle modulo 2. A binary Steinhaus triangle is said to be rotationally symmetric, horizontally symmetric or dihedrally symmetric if it is invariant under the 120 degrees rotation, the horizontal reflection or both, respectively. The first part of this paper is devoted to the study of linear subspaces of rotationally symmetric, horizontally symmetric and dihedrally symmetric binary Steinhaus triangles. We obtain simple explicit bases for each of them by using elementary properties of the binomial coefficients. A Steinhaus graph is a simple graph with an adjacency matrix whose upper-triangular part is a binary Steinhaus triangle. A Steinhaus graph is said to be even or odd if all its vertex degrees are even or odd, respectively. One of the main results of this paper is the existence of an isomorphism between the linear subspace of even Steinhaus graphs and a certain linear subspace of dihedrally symmetric binary Steinhaus triangles. This permits us to give, in the second part of this paper, an explicit basis for even Steinhaus graphs and for the vector space of parity-regular Steinhaus graphs; i.e., the linear subspace of Steinhaus graphs that are even or odd. Finally, in the last part of this paper, we consider the generalized Pascal triangles, that are triangles of zeroes and ones, that point up now, and always with the same local rule as the Pascal triangle modulo 2. New simple bases for each linear subspace of symmetric generalized Pascal triangles are deduced from the results of the first part.
\end{abstract}
\vspace{1ex}
\noindent\msc{05B30, 05A15, 05A10, 11A99, 11B75, 11B50, 11B65, 11B85, 05C50, 05C30.} \\[2ex]
\noindent\keywords{Steinhaus triangles, symmetric triangles, symmetric sequences, rotational symmetry, dihedral symmetry, binomial coefficients, Steinhaus graphs, parity-regular graphs, even graphs, generalized Pascal triangles.}
\newpage
\tableofcontents
\section{Introduction}\label{sec:1}
A {\em binary Steinhaus triangle (or Steinhaus triangle for short) of size $n$} is a triangle $\left(a_{i,j}\right)_{1\le i\le j\le n}$ of $0$'s and $1$'s with the same local rule as Pascal's triangle modulo $2$; i.e.,
\begin{gather}\tag{LR}\label{eq13}
a_{i,j} \equiv a_{i-1,j-1}+a_{i-1,j} \pmod{2},
\end{gather}
for all integers $i,j$ such that $2\le i\le j\le n$. Note that $(0)$ and $(1)$ are the Steinhaus triangles of size $1$ and $\emptyset$ is the Steinhaus triangle of size $0$. An example of a Steinhaus triangle of size $7$ is depicted in Figure~\ref{fig1}.

\begin{figure}[htbp]
\centerline{
\begin{tikzpicture}[scale=0.25]
\figST{{0,0,1,0,1,0,0}}{black}{col0}{col1}{white}
\end{tikzpicture}}
\caption{A Steinhaus triangle of size $7$}\label{fig1}
\end{figure}

It is clear that a Steinhaus triangle $(a_{i,j})_{1\le i\le j\le n}$ is completely determined by its first row $(a_{1,j})_{1\le j\le n}$. Indeed, by induction on $i$ and using \eqref{eq13}, we obtain
\begin{equation}\label{eq14}
a_{i,j} \equiv \sum_{k=0}^{i-1}\binom{i-1}{k}a_{1,j-k} \pmod{2},
\end{equation}
for all integers $i,j$ such that $1\le i\le j\le n$, where the binomial coefficient $\binom{a}{b}$ is the coefficient of the monomial $X^b$ in the expansion of $(1+X)^a$, for all non-negative integers $a$ and $b$ such that $b\le a$. In the sequel, the Steinhaus triangle whose first row is the sequence $S$ is denoted by $\Strig{S}$. The Steinhaus triangle in Figure~\ref{fig1} is then $\Strig{(0010100)}$.

Since the set $\ST{n}$ of binary Steinhaus triangles of size $n$ is closed under addition modulo $2$, it follows that $\ST{n}$ is a vector space over $\Zn{2}$. Moreover, since a Steinhaus triangle is uniquely determined by its first row, the dimension of $\ST{n}$ is $n$.

This kind of binary triangles was introduced by Hugo Steinhaus in his problem book \cite{Steinhaus:1958aa,Steinhaus:1964aa}, where he posed, as an unsolved problem, the following

\begin{quest}
\hspace{-0.2pt}Does there exist, for every non-negative integer $n$ such that $n\equiv 0$ or $3\tpmod{4}$, a Steinhaus triangle of size $n$ with the exactly the same number of $0$'s and $1$'s?
\end{quest}

The Steinhaus triangle $\Strig{(0010100)}$ depicted in Figure~\ref{fig1} solves this problem for $n=7$, since it contains $14$ zeroes and $14$ ones. Note that since a triangle of size $n$ contains $\binom{n+1}{2}$ elements, the condition $n\equiv 0$ or $3\tpmod{4}$ is a necessary and sufficient condition for having a triangle of size $n$ with an even number of terms. The Steinhaus problem was solved for the first time by Heiko Harborth in 1972 \cite{Harborth:1972aa}. Since then, many solutions of this problem have appeared in the literature \cite{Eliahou:2004aa,Eliahou:2005aa,Eliahou:2007aa,Chappelon:2017aa}. Generalizations of this problem in $\Zn{m}$, for all $m\ge2$, can be found in \cite{Molluzzo:1978aa,Chappelon:2008ab,Chappelon:2011aa,Chappelon:2012aa} and in higher dimensions in \cite{Chappelon:2015aa}.

The local rule \eqref{eq13} can also be written as
$$
a_{i-1,j-1} \equiv a_{i-1,j}+a_{i,j} \pmod{2}\quad\text{or}\quad a_{i,j} \equiv a_{i-1,j}+a_{i-1,j-1} \pmod{2},
$$
for all integers $i,j$ such that $2\le i\le j\le n$. This is the reason why the $120$ degrees rotation and the horizontal reflection of a Steinhaus triangle are also Steinhaus triangles, of the same size. We denote by $r$ and $h$ the corresponding automorphisms of $\ST{n}$; namely
$$
\begin{array}{llll}
r : \ST{n} \longrightarrow \ST{n} & \text{by} & r\left((a_{i,j})_{1\le i\le j\le n}\right) = (a_{j-i+1,n-i+1})_{1\le i\le j\le n} & \text{and} \\[1.5ex]
h : \ST{n} \longrightarrow \ST{n} & \text{by} & h\left((a_{i,j})_{1\le i\le j\le n}\right) = (a_{i,n-j+i})_{1\le i\le j\le n}. \\
\end{array}
$$
These automorphisms satisfy the following identities
$$
r^3 = h^2 = \left(hr\right)^2 = id_{\ST{n}},
$$
where $id_{\ST{n}}$ is the identity map on $\ST{n}$. Therefore, the subgroup $\left\langle r , h\right\rangle$ generated by $r$ and $h$, of the automorphism group of $\ST{n}$, is isomorphic to the dihedral group $D_3$. This induces a faithful representation of $D_3$ on $\ST{n}$. In the sequel, the automorphism subgroup $\left\langle r , h\right\rangle$ is simply denoted by $D_3$. For instance, for $S=(11001)$ and for all $g\in D_3$, the Steinhaus triangles $g\left(\Strig{S}\right)$ are depicted in Figure~\ref{fig2}.

\begin{figure}[htbp]
\centerline{\includegraphics[width=\textwidth]{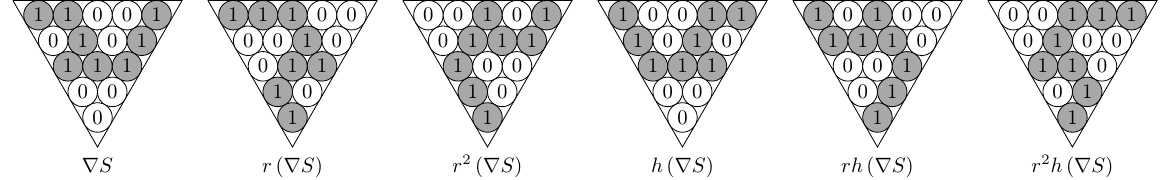}}
\caption{Action of $D_3$ on $\Strig{(11001)}$}\label{fig2}
\end{figure}

For any subgroup $G$ of $D_3$, we consider the linear subspace of invariant triangles of $\ST{n}$ under $G$, that is,
$$
{\ST{n}}^{G} = \left\{ \Strig{}\in\ST{n}\ \middle|\ \forall g\in G, g\left(\Strig{}\right)=\Strig{}\right\}.
$$
It is well known that there are exactly $6$ subgroups of $D_3$, these are $\left\{id_{\ST{n}}\right\}$, $\left\langle h\right\rangle$, $\left\langle rh\right\rangle$, $\left\langle r^2h\right\rangle$, $\left\langle r\right\rangle$ and $D_3$. Obviously, we have ${\ST{n}}^{G}=\ST{n}$ for the trivial subgroup $G=\left\{id_{\ST{n}}\right\}$. Moreover, by the linear maps
$$
\begin{array}{llll}
f : {\ST{n}}^{\left\langle h\right\rangle} \longrightarrow {\ST{n}}^{\left\langle rh\right\rangle} & \text{by} & f\left(\Strig{}\right)=r^2\left(\Strig{}\right) & \text{and} \\[1.5ex]
g : {\ST{n}}^{\left\langle h\right\rangle} \longrightarrow {\ST{n}}^{\left\langle r^2h\right\rangle} & \text{by} & g\left(\Strig{}\right)=r\left(\Strig{}\right),\\
\end{array}
$$
it is clear that the three linear subspaces ${\ST{n}}^{\left\langle h\right\rangle}$, ${\ST{n}}^{\left\langle rh\right\rangle}$ and ${\ST{n}}^{\left\langle r^2h\right\rangle}$ are isomorphic to each other. Therefore, we only consider the linear subspaces ${\ST{n}}^{\left\langle h\right\rangle}$, ${\ST{n}}^{\left\langle r\right\rangle}$ and ${\ST{n}}^{D_3}$, that will be denoted by $\HST{n}$, $\RST{n}$ and $\DST{n}$, respectively, in the sequel of this paper. Obviously, these vector spaces simply correspond to $\ker\left(h-id_{\ST{n}}\right)$, $\ker\left(r-id_{\ST{n}}\right)$ and $\ker\left(r-id_{\ST{n}}\right)\cap\ker\left(h-id_{\ST{n}}\right)$, respectively.

A Steinhaus triangle $\Strig{}$ of $\HST{n}$, $\RST{n}$ or $\DST{n}$ is said to be {\em horizontally symmetric}, {\em rotationally symmetric} or {\em dihedrally symmetric}, respectively, and satisfies $h\left(\Strig{}\right)=\Strig{}$, $r\left(\Strig{}\right)=\Strig{}$ or $r\left(\Strig{}\right)=h\left(\Strig{}\right)=\Strig{}$, respectively. Examples of such symmetric Steinhaus triangles appear in Figure~\ref{fig3}.

\begin{figure}[htbp]
\centerline{
\begin{tabular}{c@{\quad\quad\quad}c@{\quad\quad\quad}c}
\begin{tikzpicture}[scale=0.25]
\figST{{1,1,0,0,1,1}}{black}{col0}{col1}{white}
\end{tikzpicture}
&
\begin{tikzpicture}[scale=0.25]
\figST{{1,0,0,1,1,1}}{black}{col0}{col1}{white}
\end{tikzpicture}
&
\begin{tikzpicture}[scale=0.25]
\figST{{0,1,1,1,1,0}}{black}{col0}{col1}{white}
\end{tikzpicture}
\end{tabular}}
\caption{Triangles of $\HST{6}$, $\RST{6}$ and $\DST{6}$.}\label{fig3}
\end{figure}

In \cite{Barbe:2000aa}, it was proved that, for all non-negative integers $n$, we have
\begin{itemize}
\item
$\dim\HST{n} = \left\lceil\frac{n}{2}\right\rceil$,
\item
$\dim\RST{n} = \left\lfloor\frac{n}{3}\right\rfloor + \deltamod{1}{n}{3}$,
\item
$\dim\DST{n} = \left\lfloor\frac{n+3}{6}\right\rfloor + \deltamod{1}{n}{6}$,
\end{itemize}
where $\deltamod{i}{n}{j}$ is equal to $1$ if $n\equiv i\tpmod{j}$ and $0$ otherwise. Bases of $\HST{n}$, $\RST{n}$ and $\DST{n}$ were obtained in \cite{Brunat:2011aa}. In this paper, we give new bases, for each of these three linear subspaces, which are simpler than those mentioned. They are obtained by considering elementary properties of generalized binomial coefficients.

A {\em Steinhaus graph} of order $n\ge1$ is a simple graph whose adjacency matrix has an upper-triangular part which is a binary Steinhaus triangle of size $n-1$. For any sequence $S=(a_1,a_2,\ldots,a_{n-1})$ of $0$'s and $1$'s of length $n-1$, its associated Steinhaus graph $\Sgraph{S}$ is the simple graph of order $n$ whose adjacency matrix $\Smat{S}=\left(a_{i,j}\right)_{1\le i,j\le n}$ verifies
\begin{enumerate}[i)]
\item
$a_{i,j} = a_{j,i}$, for all $i,j\in\{1,\ldots,n\}$, (symmetry)
\item
$a_{i,i} = 0$, for all $i\in\{1,\ldots,n\}$, (diagonal of zeroes)
\item
$a_{1,j} = a_{j-1}$, for all $j\in\{2,\ldots,n-1\}$, (sequence $S$)
\item
$a_{i,j} \equiv a_{i-1,j-1}+a_{i-1,j}\tpmod{2}$, for all integers $i,j$ such that $2\le i<j\le n$, (local rule of $\Strig{S}$)
\end{enumerate}
where $\left\{x,\ldots,y\right\}$ denotes the set of integers $\left\{i\in\Z\ \middle|\ x\le i\le y\right\}$, for any integers $x$ and $y$. For example, for $S=(0010100)$, the Steinhaus graph $\Sgraph{S}$ and its adjacency matrix $\Smat{S}$ are depicted in Figure~\ref{fig4}.

\begin{figure}[htbp]
\centerline{
\begin{tabular}{cc}
\begin{minipage}{5cm}
\begin{tikzpicture}[scale=0.5]
\figSG{{0,0,1,0,1,0,0}}{black}{white}
\end{tikzpicture}
\end{minipage}
&
\begin{minipage}{5cm}
\begin{tikzpicture}[scale=0.25]
\figSM{{0,0,1,0,1,0,0}}{black}{col0}{col1}{col3}
\end{tikzpicture}
\end{minipage}
\\ \ \\
$\Sgraph{0010100}$ & $\Smat{0010100}$ \\
\end{tabular}}
\caption{The Steinhaus graph $\Sgraph{0010100}$ and its adjacency matrix $\Smat{0010100}$}\label{fig4}
\end{figure}

The set of Steinhaus graphs of order $n$ is denoted by $\SG{n}$.  It is clear that there is a natural correspondence between $\SG{n}$ and $\ST{n-1}$. Therefore, the set $\SG{n}$ is a vector space over $\Zn{2}$ of dimension $n-1$.

The family of Steinhaus graphs was introduced in \cite{Molluzzo:1978aa}. In \cite{Delahan:1998aa}, it was proved that any simple graph of order $n$ is isomorphic to an induced subgraph of a Steinhaus graph of order $\binom{n}{2}+1$. A general problem on Steinhaus graphs is to characterize those, or their associated binary sequences, having a given graph property such as connectedness, planarity, bipartition, regularity, etc. It is easy to see that a Steinhaus graph is either connected or totally disconnected (the edgeless graph). The bipartite Steinhaus graphs are characterized in \cite{Dymacek:1986aa,Dymacek:1995aa,Chang:1999aa} and the planar ones in \cite{Dymacek:2000aa}. In \cite{Dymacek:1979aa,Bailey:1988aa}, it was conjectured that there is only one regular Steinhaus graph of odd degree, the complete graph $K_2 = \Sgraph{1}$, and that the regular Steinhaus graphs of even degree are the edgeless graphs $\overline{K_n} = \Sgraph{00\cdots 0}$ of orders $n$, for all positive integers $n$, and the non-trivial graphs $\Sgraph{110110\cdots 110}$ of orders $n=3m+1$, for all positive integers $m$. This conjecture was verified up to $117$ vertices in \cite{Augier:2008aa} and up to $1500$ vertices in \cite{Chappelon:2009aa} for the odd case.

A Steinhaus graph is said to be {\em even} (resp. {\em odd}) if every vertex has even degree (resp. odd degree). Examples of even and odd Steinhaus graphs are given in Figure~\ref{fig5}. The sets of even Steinhaus graphs and of odd Steinhaus graphs of order $n$ are denoted by $\ESG{n}$ and $\OSG{n}$, respectively. In \cite{Dymacek:1979aa}, it was proved that $\ESG{n}$ is a linear subspace of $\SG{n}$ of dimension $\left\lfloor\frac{n-1}{3}\right\rfloor$ for all positive integers $n$ and $\OSG{n}$ is an affine subspace of direction $\ESG{n}$ for all even numbers $n$. Obviously, since the number of vertices of odd degrees is always even, $\OSG{n}=\emptyset$ when $n$ is odd. According to the terminology used in \cite{Augier:2008aa}, a {\em parity-regular} Steinhaus graph is a Steinhaus graph that is even or odd. The set of parity-regular Steinhaus graphs of order $n$ is denoted by $\PRSG{n}$, that is, $\PRSG{n}=\ESG{n}\cup\OSG{n}$. As shown in \cite{Augier:2008aa}, the set $\PRSG{n}$ is a linear subspace of $\SG{n}$ of dimension $\left\lceil\frac{n}{3}\right\rceil - \deltamod{1}{n}{2}$. Bases of $\PRSG{n}$ have been computed, for $n\le 30$, in \cite{Augier:2008aa}. In this paper, we determine bases of $\ESG{n}$ and $\PRSG{n}$ for all $n\ge1$. This is achieved by showing that the vector space $\ESG{n}$ is isomorphic to a particular linear subspace of $\DST{2n-1}$.

\begin{figure}[htbp]
\centerline{\includegraphics[width=\textwidth]{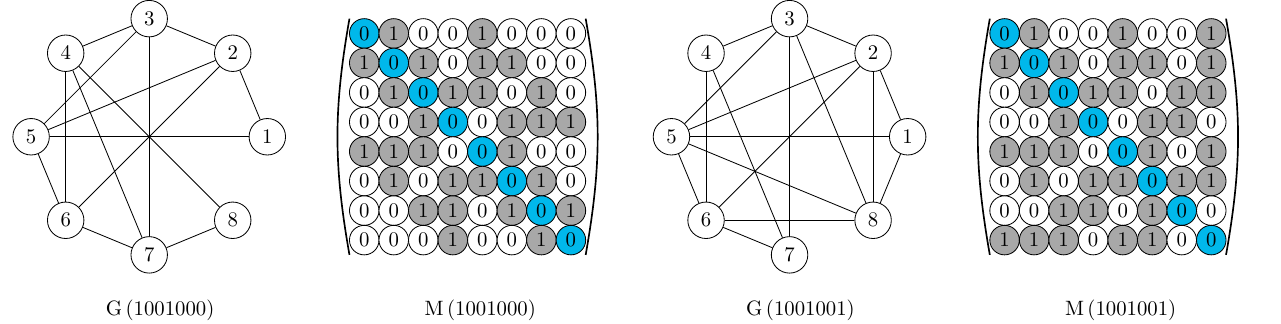}}
\caption{Even and odd Steinhaus graphs}\label{fig5}
\end{figure}

This paper is organized as follows. In Section~\ref{sec:2}, some basic properties on generating index sets of Steinhaus triangles and on derived and antiderived binary sequences are introduced. After that, for all non-negative integers $n$, the linear subspaces $\RST{n}$, $\HST{n}$ and $\DST{n}$ are studied in detail, with the determination of generating index sets and bases for each of them: for $\RST{n}$ in Section~\ref{sec:4}, for $\HST{n}$ in Section~\ref{sec:3} and for $\DST{n}$ in Section~\ref{sec:5}. For any positive integer $n$, a certain correspondence between the vector spaces $\ESG{n}$ and $\DST{2n-1}$ is established and bases of $\ESG{n}$ and $\PRSG{n}$ are given in Section~\ref{sec:6}. Finally, in Section~\ref{sec:7}, we deal with symmetric generalized Pascal triangles, that are binary triangles which point up and always with the same local rule.

\section{Preliminary results}\label{sec:2}

We introduce, in this section, the notions of generating index sets of Steinhaus triangles and of derived and antiderived binary sequences.

\subsection{Generating index sets}

Let $n$ be a positive integer. We denote by $\Strig{(n)}$ the index set of Steinhaus triangles of size $n$; i.e.,
$$
\Strig{(n)} = \left\{ (i,j)\in\N^2\ \middle|\ 1\le i\le j\le n\right\}.
$$
A subset $G$ of $\Strig{(n)}$ is said to be a {\em generating index set of $\ST{n}$} if the knowledge of the values $a_{i,j}$, for all $(i,j)\in G$, uniquely determines the whole Steinhaus triangle $(a_{i,j})_{1\le i\le j\le n}$ where the size of $G$ is minimal; i.e., a subset $G$ that makes the following linear map an isomorphism:
$$
\pi_G : \ST{n} \longrightarrow \Zd^{|G|}\quad\text{by}\quad \pi_G\left((a_{i,j})_{1\le i\le j\le n}\right)=(a_{i,j})_{(i,j)\in G},
$$
where $\Zd=\left\{0,1\right\}$. Since $\dim\ST{n}=n$, we deduce that the cardinality of a generating index set of $\ST{n}$ is always $n$. From \eqref{eq14}, it is clear that the set of top row indices of a Steinhaus triangle of size $n$, that is,
$$
G_{1} = \left\{ (1,1) , (1,2) , \ldots , (1,n)\right\},
$$
is a generating index set of $\ST{n}$. Note that $\pi_{G_1}^{-1}(S)=\Strig{S}$, for all $S\in\Zd^{|G_1|}$. It follows that the set $G$ is a generating index set of $\ST{n}$ if and only if the linear map $\pi_G\circ\pi_{G_1}^{-1} : \Zd^{|G_1|}\rightarrow \Zd^{|G|}$ is an isomorphism. For instance, the $16$ generating index sets of $\ST{3}$ ($4$ up to the action of the dihedral group $D_3$) are depicted in Figure~\ref{fig7}, where a disk is either black if its position is in the generating index set or white otherwise. Moreover, the four $3$-subsets of $\Strig{(3)}$ that are not generating index sets of $\ST{3}$ are given in Figure~\ref{fig7b}.

\begin{figure}[htbp]
\centerline{\includegraphics[width=\textwidth]{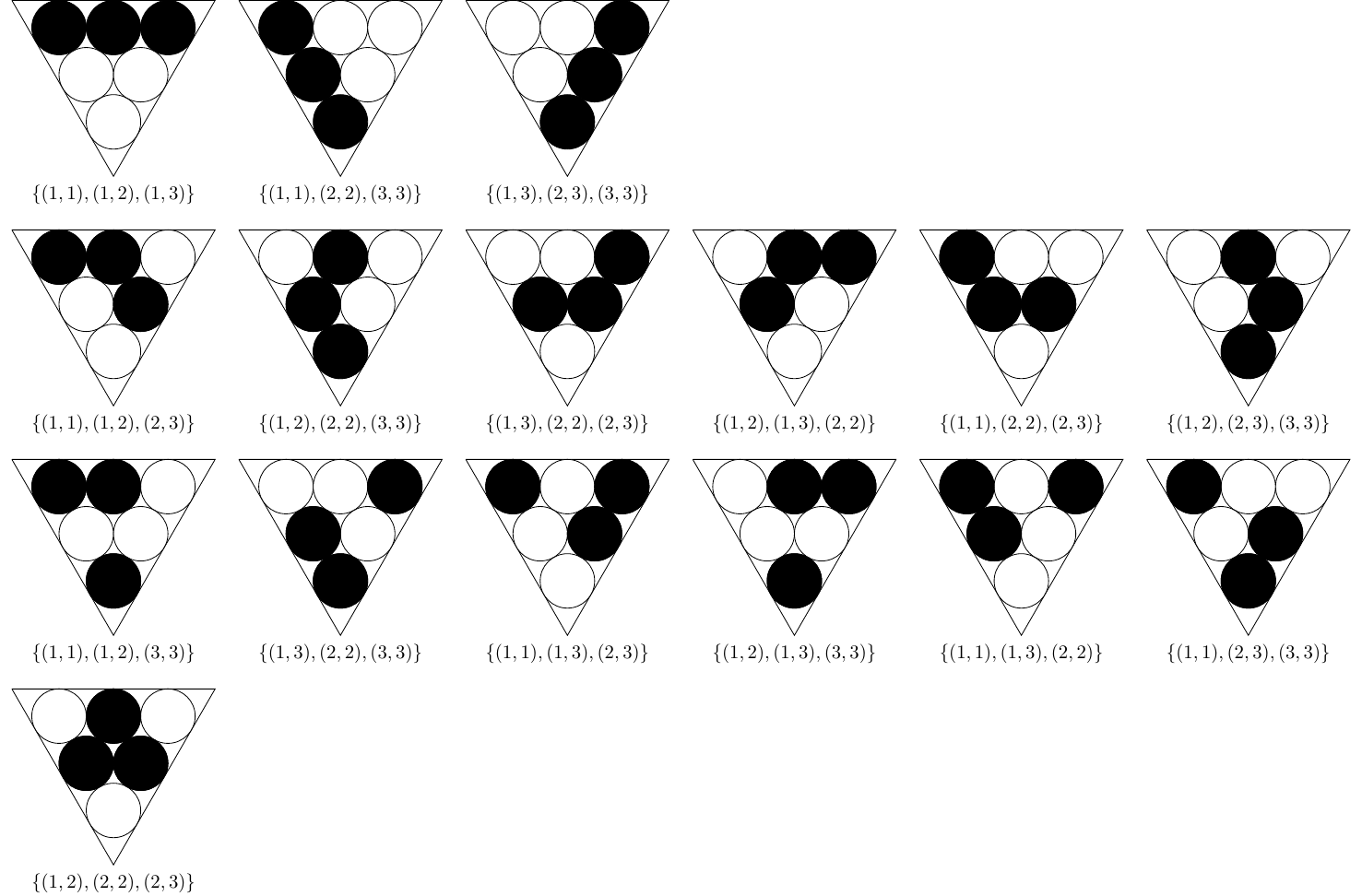}}
\caption{Generating index sets of $\ST{3}$}\label{fig7}
\end{figure}

\begin{figure}[htbp]
\centerline{\includegraphics[width=0.75\textwidth]{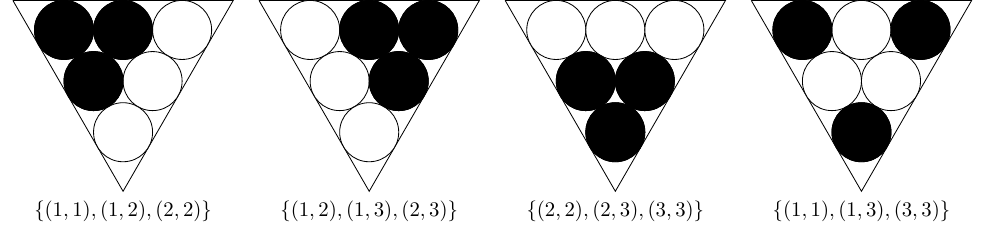}}
\caption{$3$-subsets of $\Strig{(3)}$ that are not generating index sets of $\ST{3}$}\label{fig7b}
\end{figure}

Since the sets of right side indices,
$$
G_r = \left\{ (1,n) , (2,n) , \ldots , (n,n) \right\},
$$
and left side indices,
$$
G_\ell = \left\{ (1,1) , (2,2) , \ldots , (n,n) \right\},
$$
of a Steinhaus triangle $\Strig{}$ of size $n$ can be seen as the sets of top row indices of the Steinhaus triangles $r\left(\Strig{}\right)$ and $r^2\left(\Strig{}\right)$, respectively, it follows that $G_r$ and $G_\ell$ are generating index sets of $\ST{n}$ too. Therefore, each element of a Steinhaus triangle can be expressed as a function of the terms of its first row, of its right side or of its left side.

For any non-negative integers $a$ and $b$ such that $b\le a$, the binomial coefficient $\binom{a}{b}$ is the coefficient of the monomial $X^b$ in the polynomial expansion of the binomial power $(1+X)^{a}$. It corresponds to the number of ways to choose $b$ elements in a set of $a$ elements. Here, we extend this notation by supposing that $\binom{a}{b}=0$, for all integers $b$ such that $b<0$ or $b>a$. For this generalization, the Pascal identity
$$
\binom{a}{b} = \binom{a-1}{b-1} + \binom{a-1}{b}
$$
holds, for all positive integers $a$ and all integers $b$. 

\begin{lem}\label{lem1}
Let $(a_{i,j})_{1\le i\le j\le n}$ be a binary Steinhaus triangle of size $n$. Then, we have
$$
a_{i,j} \equiv \sum_{k=1}^{n}\binom{i-1}{j-k}a_{1,k} \equiv \sum_{k=1}^{n}\binom{n-j}{k-i}a_{k,n} \equiv \sum_{k=1}^{n}\binom{j-i}{k-i}a_{k,k} \pmod{2},
$$
for all integers $i,j$ such that $1\le i\le j\le n$. 
\end{lem}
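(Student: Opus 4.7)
The first identity is essentially a rewrite of the displayed equation~\eqref{eq14} after the change of summation variable $k\mapsto j-k$: since $\binom{i-1}{j-k}$ vanishes whenever $j-k<0$ or $j-k>i-1$, the sum $\sum_{k=0}^{i-1}\binom{i-1}{k}a_{1,j-k}$ is identical to $\sum_{k=1}^{n}\binom{i-1}{j-k}a_{1,k}$, giving the first expression with no work beyond reindexing.

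For the second and third identities, the plan is to exploit the action of the rotation $r$ on $\ST{n}$, which was already observed to be an automorphism. Writing $\Strig{}'=r(\Strig{})=(a'_{i,j})$, we have $a'_{i,j}=a_{j-i+1,n-i+1}$, so in particular the first row of $r(\Strig{})$ is $a'_{1,k}=a_{k,n}$, i.e.\ the right side of $\Strig{}$. Applying the first identity (already proved) to $\Strig{}'$ gives
\[
a_{j-i+1,n-i+1}\equiv\sum_{k=1}^{n}\binom{i-1}{j-k}a_{k,n}\pmod{2}.
\]
Setting $p=j-i+1$ and $q=n-i+1$, so that $i=n-q+1$ and $j-k=p+(n-q)-k$, yields
\[
a_{p,q}\equiv\sum_{k=1}^{n}\binom{n-q}{(n-q)-(k-p)}a_{k,n}\pmod{2},
\]
and the symmetry $\binom{n-q}{(n-q)-(k-p)}=\binom{n-q}{k-p}$ produces the second formula after relabeling $(p,q)\mapsto(i,j)$. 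The same strategy handles the third formula: apply $r$ twice, observe that $r^2(\Strig{})_{1,k}=a_{n-k+1,n-k+1}$ (the left side read from bottom to top), substitute into the first identity, and undo the reindexing (with one extra change of variable $k\mapsto n-k+1$) to land on $\sum_{k=1}^{n}\binom{j-i}{k-i}a_{k,k}$.

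The only real step is verifying that the rotation indeed identifies the top row of $r(\Strig{})$ (respectively, $r^2(\Strig{})$) with the right side (respectively, the left side) of $\Strig{}$; once that is in hand, both new identities reduce to the already established top-row identity composed with an index substitution and the symmetry of the binomial coefficient. Since $r$ is given explicitly on indices in the introduction and $r^2=r\circ r$ can be computed directly, there is no obstacle beyond bookkeeping. Alternatively, one could prove each of the three identities independently by induction on $i$ (respectively, on $n-j$ and on $j-i$) using the Pascal identity $\binom{a}{b}=\binom{a-1}{b-1}+\binom{a-1}{b}$ together with the local rule~\eqref{eq13}; I would mention this as an equivalent route but prefer the rotation argument for brevity.
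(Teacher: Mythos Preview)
Your proposal is correct. The paper's own proof is a one-line hint: ``As for \eqref{eq14}, by induction on $i$ and using the local rule \eqref{eq13}.'' That is precisely the alternative you mention at the end---three separate inductions (on $i$, on $n-j$, and on $j-i$) each driven by the Pascal identity. Your primary route is different: you prove the first identity by reindexing \eqref{eq14} and then obtain the other two by transporting the first identity along the rotation automorphism $r$ (and $r^2$), using the explicit index formula for $r$ given in the introduction together with the symmetry $\binom{a}{b}=\binom{a}{a-b}$. This is more conceptual and aligns well with the remark the paper makes just before the lemma, namely that the right and left sides of $\Strig{}$ are the top rows of $r(\Strig{})$ and $r^2(\Strig{})$; it also avoids repeating essentially the same induction three times. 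The paper's route, by contrast, is self-contained and does not appeal to the $D_3$-action at all. Both arguments are equally short once written out; your rotation argument makes the structural reason for the three formulas transparent, while the paper's induction keeps the lemma logically independent of the symmetry discussion.
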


\begin{proof}
This is easily seen by induction on $i$ using \eqref{eq14} and the local rule.
\end{proof}

\begin{prop}\label{prop2}
Let $G=\{(i_1,j_1),(i_2,j_2),\ldots,(i_n,j_n)\}$ be a subset of $\Strig{(n)}$ whose cardinality is $|G|=n$. Then, the set $G$ is a generating index set of $\ST{n}$ if and only if $\det(M_G)\equiv1\tpmod{2}$, where
$$
M_G = \left(\binom{i_k-1}{j_k-\ell}\right)_{1\le k,\ell\le n}.
$$
\end{prop}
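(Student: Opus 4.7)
The plan is to unfold the definition of ``generating index set'' and recognize the linear map in question as multiplication by the matrix $M_G$. According to the discussion preceding the proposition, $G$ is a generating index set of $\ST{n}$ exactly when $\pi_G\circ{\pi_{G_1}}^{-1} : {\{0,1\}}^{G_1}\rightarrow{\{0,1\}}^{G}$ is an isomorphism. Since both spaces have dimension $n$ over $\Zn{2}$, this map is an isomorphism if and only if the matrix representing it (in the canonical bases induced by the chosen enumerations of $G_1$ and $G$) is invertible over $\Zn{2}$, i.e., has odd determinant.

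The key step is to identify that matrix explicitly. I would take an arbitrary sequence $S=(a_{1,1},\ldots,a_{1,n})\in{\{0,1\}}^{G_1}$, compute $\Strig{S}={\pi_{G_1}}^{-1}(S)$, and then evaluate $\pi_G(\Strig{S})=(a_{i_k,j_k})_{1\le k\le n}$. By the first identity of Lemma~\ref{lem1}, one has
$$
a_{i_k,j_k} \equiv \sum_{l=1}^{n}\binom{i_k-1}{j_k-l}\,a_{1,l} \pmod{2},
$$
for every $k\in\{1,\ldots,n\}$. This is precisely the matrix-vector product $M_G\cdot{}^t(a_{1,1},\ldots,a_{1,n})$, where $M_G=\bigl(\binom{i_k-1}{j_k-l}\bigr)_{1\le k,l\le n}$ is viewed modulo $2$. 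Hence $\pi_G\circ{\pi_{G_1}}^{-1}$ is exactly the linear endomorphism of ${(\Zn{2})}^n$ with matrix $M_G\bmod 2$.

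Putting these two observations together, $G$ is a generating index set of $\ST{n}$ if and only if $M_G$ is invertible over $\Zn{2}$, which is equivalent to $\det(M_G)\equiv 1\pmod 2$, finishing the argument. There is no real obstacle here: the whole content of the proposition is the explicit description via Lemma~\ref{lem1} of the change-of-basis matrix between the ``top-row'' coordinates and the ``$G$-coordinates,'' combined with the standard linear-algebra fact that a square matrix over a field represents an isomorphism iff its determinant is nonzero.
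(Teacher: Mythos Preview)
Your proof is correct and follows essentially the same approach as the paper: both invoke Lemma~\ref{lem1} to express $a_{i_k,j_k}$ as a linear combination of the first-row entries with coefficients $\binom{i_k-1}{j_k-l}$, identify $\pi_G\circ\pi_{G_1}^{-1}$ with multiplication by $M_G$ (the paper writes $S\cdot M_G^t$ on row vectors, you write $M_G\cdot{}^tS$ on column vectors, which is the same), and conclude via the determinant criterion for invertibility over $\Zn{2}$.
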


\begin{proof}
From Lemma~\ref{lem1}, we know that
$$
a_{i_k,j_k} \equiv \sum_{\ell=1}^{n}\binom{i_k-1}{j_k-\ell}a_{1,\ell} \pmod{2},
$$
for all $k\in\{1,\ldots,n\}$. It follows that
$$
\pi_G\circ\pi_{G_1}^{-1}(S) \equiv S.{M_G}^{t} \pmod{2},
$$
for all $S\in\Zd^{|G_1|}$. Finally, the linear map $\pi_G\circ\pi_{G_1}^{-1}$ is an isomorphism if and only if $\det(M_G)\equiv1\tpmod{2}$.
\end{proof}

The notion of generating index sets and the result of Proposition~\ref{prop2} appear in a more general context in \cite{Barbe:2004aa,Barbe:2009aa}, where it is also proved that the set of generating index sets of $\ST{n}$ define a matroid called the Pascal matroid modulo $2$. Note that a generating index set is simply called a generating set in \cite{Barbe:2004aa,Barbe:2009aa}.

The definition of generating index sets can be extended to any linear subspace $\mathcal{V}$ of $\ST{n}$. A subset $G$ of $\Strig{(n)}$ is said to be a {\em generating index set of $\mathcal{V}$} if the linear map
$$
\pi_G : \mathcal{V} \longrightarrow \Zd^{|G|}\quad\text{by}\quad \pi_G\left((a_{i,j})_{1\le i\le j\le n}\right)=(a_{i,j})_{(i,j)\in G}
$$
is an isomorphism. Note that $|G|=\dim\mathcal{V}$, for any generating index set $G$ of $\mathcal{V}$. In this paper, we consider generating index sets of the linear subspaces $\RST{n}$, $\HST{n}$ and $\DST{n}$.

\subsection{Derived and antiderived sequences}

Let $S=\left(a_j\right)_{1\le j\le n}$ be a sequence of $0$'s and $1$'s of length $n$. The {\em derived sequence} $\partial S$ of $S$ is the sequence
\begin{equation}\label{eq15}
\partial S = \left( (a_j+a_{j+1}) \tpmod{2} \right)_{1\le j\le n-1}
\end{equation}
of length $n-1$ when $n\ge2$ and the empty sequence when $n\le1$. The {\em iterated derived sequences} $\partial^{i}S$ of $S$ are recursively defined by $\partial^{i}S=\partial(\partial^{i-1}S)$, for all $i\ge1$, with $\partial^{0}S=S$. The Steinhaus triangle $\Strig{S}$ can then be seen as the collection $\left(\partial^{i}S\right)_{0\le i\le n-1}$, where, for every $i\in\{1,\ldots,n\}$, the $i$th row of $\Strig{S}$ corresponds to the derived sequence $\partial^{i-1}S$.

The set of binary sequences of length $n$ can be seen as a vector space over $\Zn{2}$ of dimension $n$. Indeed, for two binary sequences $S_1=\left(a_j\right)_{1\le j\le n}$ and $S_2=\left(b_j\right)_{1\le j\le n}$ of the same length $n\ge1$, their sum is the sequence $S_1+S_2=\left(a_j+b_j\right)_{1\le j\le n}$ of length $n$. Therefore, it is clear that the derivation map $\partial$ is linear, i.e., $\partial(S_1+S_2)=\partial S_1 + \partial S_2$ for all binary sequences $S_1$ and $S_2$ of same length.

For any $i\in\{1,\ldots,n+1\}$ and any $x\in\Zd$, the {\em antiderived sequence of $S={\left(a_{j}\right)}_{1\le j\le n}$ whose $i$th term is $x$} is the sequence $\int_{i,x}S = (b_j)_{1\le j\le n+1}$ of length $n+1$ defined by
$$
b_j = \left\{\begin{array}{ll}
x + \displaystyle\sum_{k=j}^{i-1}a_k \pmod{2}& \text{for } j\in\{1,\ldots,i-1\},\\
x & \text{for } j=i,\\
x + \displaystyle\sum_{k=i}^{j-1}a_k \pmod{2}& \text{for } j\in\{i+1,\ldots,n+1\}.
\end{array}\right.
$$
In a more concise way, we have
\begin{equation}\label{eq16}
b_j \equiv \left(x+\sum_{k=1}^{i-1}a_k+\sum_{k=1}^{j-1}a_k\right)\pmod{2},
\end{equation}
for all $j\in\{1,\ldots,n+1\}$. Further, it is straightforward to obtain a fundamental theorem of calculus.

For any non-negative integer $n$, the constant sequence of length $n$ equal to $x$ is denoted by $\left(x\right)_{n}$. For $n=1$, the sequence $\left(x\right)_{1}$ is simply denoted $\left(x\right)$.

\begin{prop}
Let $S=\left(a_j\right)_{1\le j\le n}$ be a binary sequence of length $n$. For any $i\in\{1,\ldots,n+1\}$ and any $x\in\Zd=\{0,1\}$, we have that
\begin{enumerate}[i)]
\item\label{item1}
$\partial\int_{i,x}S = S$,
\item\label{item2}
$\int_{i,x}\partial S = S + \left((a_i+x)\tpmod{2}\right)_{n}$.
\end{enumerate}
\end{prop}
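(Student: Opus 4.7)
The plan is to verify both identities by direct computation from the explicit formula \eqref{eq16}, exploiting the telescoping structure of partial sums modulo $2$.

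For \eqref{item1}, I would write $\int_{i,x}S = (b_j)_{1\le j\le n+1}$ with $b_j = x + \sum_{k=1}^{i-1}a_k + \sum_{k=1}^{j-1}a_k \pmod{2}$ and then compute the $j$th term of the derived sequence: $b_j + b_{j+1} \equiv 2x + 2\sum_{k=1}^{i-1}a_k + \sum_{k=1}^{j-1}a_k + \sum_{k=1}^{j}a_k \equiv a_j \pmod{2}$, for all $j\in\{1,\ldots,n\}$. This is exactly the $j$th term of $S$, so $\partial\left(\int_{i,x}S\right)=S$.

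For \eqref{item2}, let $\partial S = (c_k)_{1\le k\le n-1}$ with $c_k = a_k + a_{k+1}\pmod{2}$, and write $\int_{i,x}\left(\partial S\right) = (d_j)_{1\le j\le n}$. Applying \eqref{eq16} to $\partial S$ gives $d_j \equiv x + \sum_{k=1}^{i-1}c_k + \sum_{k=1}^{j-1}c_k \pmod{2}$. The key observation is the telescoping identity $\sum_{k=1}^{m-1}(a_k+a_{k+1}) \equiv a_1 + a_m \pmod 2$ for every $m\in\{1,\ldots,n\}$ (with the empty sum equal to $0$ when $m=1$, consistent with $a_1+a_1\equiv 0$). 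Plugging $m=i$ and $m=j$ makes the two $a_1$ contributions cancel, leaving $d_j \equiv x + a_i + a_j \pmod{2}$, which is precisely the $j$th term of $S + \left(a_i+x\bmod 2\right)_n$.

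There is essentially no obstacle here: both parts are bookkeeping with $\pmod{2}$ sums, and the main technical subtlety is the piecewise definition of $\int_{i,x}$ at $j=i$, which one checks separately (the empty-sum conventions already make \eqref{eq16} numerically consistent at $j=i$, so no case split is really needed). The only conceptual point worth highlighting is that \eqref{item2} is not an exact inverse of \eqref{item1}: antidifferentiation then differentiation recovers $S$ exactly, whereas differentiation then antidifferentiation recovers $S$ only up to the constant sequence $\left(a_i+x\bmod 2\right)_n$, in perfect analogy with the fundamental theorem of calculus.
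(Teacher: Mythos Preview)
Your proposal is correct and follows essentially the same approach as the paper: both parts are proved by direct computation from \eqref{eq15} and \eqref{eq16}, with the same telescoping identity $\sum_{k=1}^{m-1}(a_k+a_{k+1})\equiv a_1+a_m\pmod 2$ used for part~\ref{item2}).
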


\begin{proof}
For \ref{item1}), let $\int_{i,x}S=\left(b_j\right)_{1\le j\le n+1}$ and $\partial\int_{i,x}S = \left(c_j\right)_{1\le j\le n}$. From the definition and equations~\eqref{eq15} and \eqref{eq16}, we have
$$
c_j \equiv b_j+b_{j+1} \equiv \left(x+\sum_{k=1}^{i-1}a_k+\sum_{k=1}^{j-1}a_k\right) + \left(x+\sum_{k=1}^{i-1}a_k+\sum_{k=1}^{j}a_k\right) \equiv a_j\pmod{2},
$$
for all $j\in\{1,\ldots,n\}$. Therefore $\partial\int_{i,x}S = S$.

Now, for \ref{item2}), let $\partial S=\left(b_j\right)_{1\le j\le n-1}$ and $\int_{i,x}\partial S = \left(c_j\right)_{1\le j\le n}$. From the definition and equations~\eqref{eq15} and \eqref{eq16}, we have
$$
c_j
\begin{array}[t]{l}
\equiv x + \displaystyle\sum_{k=1}^{i-1}b_k + \sum_{k=1}^{j-1}b_k \equiv x + \sum_{k=1}^{i-1}(a_k+a_{k+1}) + \sum_{k=1}^{j-1}(a_k+a_{k+1}) \\ \ \\
\equiv x + (a_1+a_i) + (a_1+a_j) \equiv a_j+(a_i+x)\pmod{2},
\end{array}
$$
for all $j\in\{1,\ldots,n\}$. Therefore $\int_{i,x}\partial S = S + \left((a_i+x)\tpmod{2}\right)_{n}$.
\end{proof}

A similar result has been obtained for infinite binary sequences in \cite{Nathanson:1971aa}. It follows that every binary sequence $S$ of length $n$ admits only two different antiderived sequences which are the sequences $\int_{i,0}S$ and $\int_{i,1}S$ for some $i\in\{1,\ldots,n+1\}$. Moreover, it is easy to see that $\int_{i,0}S + \int_{i,1}S = \left(1\right)_n$. For example, the sequence $S=(0100)$ admits the two antiderived sequences $(00111)$ and $(11000)$.

\section{Rotationally symmetric Steinhaus triangles}\label{sec:4}

In this section, after characterizing rotationally symmetric Steinhaus triangles, we determine generating index sets and bases of $\RST{n}$.

\subsection{Characterizations of \texorpdfstring{$\RST{n}$}{RST(n)}}

First, by definition of the automorphism $r$, we have
$$
r\left(\left(a_{i,j}\right)_{1\le i\le j\le n}\right)=\left(a_{j-i+1,n-i+1}\right)_{1\le i\le j\le n}=\Strig{\left(a_{j,n}\right)_{1\le j\le n}},
$$
for any Steinhaus triangle $\left(a_{i,j}\right)_{1\le i\le j\le n}=\Strig{\left(a_{1,j}\right)_{1\le j\le n}}$. Therefore, a Steinhaus triangle $\left(a_{i,j}\right)_{1\le i\le j\le n}$ is rotationally symmetric if and only if its first row $\left(a_{1,j}\right)_{1\le j\le n}$ and its right side $\left(a_{j,n}\right)_{1\le j\le n}$ correspond.

\begin{prop}\label{prop20}
The Steinhaus triangle $\left(a_{i,j}\right)_{1\le i\le j\le n}$ is rotationally symmetric if and only if $\left(a_{1,j}\right)_{1\le j\le n}=\left(a_{j,n}\right)_{1\le j\le n}$.
\end{prop}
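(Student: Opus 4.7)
The proof is essentially immediate from the observation made in the paragraph preceding the proposition, together with the fact that a Steinhaus triangle is uniquely determined by its first row. The plan is to split into the two implications and make the cited observation precise.

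For the forward direction, I would assume $r\left(\Strig{}\right) = \Strig{}$ where $\Strig{} = (a_{i,j})_{1\le i\le j\le n}$. By the definition of $r$, the entry in position $(i,j)$ of $r\left(\Strig{}\right)$ is $a_{j-i+1,n-i+1}$. Setting $i=1$, the first row of $r\left(\Strig{}\right)$ is $(a_{j,n})_{1\le j\le n}$, i.e.\ the right side of $\Strig{}$. Equating first rows in the identity $r\left(\Strig{}\right) = \Strig{}$ yields $(a_{j,n})_{1\le j\le n} = (a_{1,j})_{1\le j\le n}$.

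For the reverse direction, I would invoke the observation, already highlighted in the paragraph preceding the proposition, that $r\left(\Strig{}\right) = \Strig{(a_{j,n})_{1\le j\le n}}$; that is, $r\left(\Strig{}\right)$ is the Steinhaus triangle whose first row is the right side of $\Strig{}$. (If desired, this can be verified directly from the formula or derived from Lemma~\ref{lem1} applied to the index $i=1$ in $r\left(\Strig{}\right)$.) Since $\Strig{} = \Strig{(a_{1,j})_{1\le j\le n}}$ by construction, the hypothesis $(a_{1,j})_{1\le j\le n} = (a_{j,n})_{1\le j\le n}$ gives $\Strig{(a_{1,j})_{1\le j\le n}} = \Strig{(a_{j,n})_{1\le j\le n}}$, hence $\Strig{} = r\left(\Strig{}\right)$.

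There is no real obstacle here: the only substantive input is that a Steinhaus triangle is determined by its first row (already established via equation~\eqref{eq14}), and the computation of the first row of $r\left(\Strig{}\right)$ from the definition of $r$. The whole argument should fit in a few lines.
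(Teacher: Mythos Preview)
Your proposal is correct and follows exactly the paper's approach: the paper derives the proposition directly from the identity $r\left(\left(a_{i,j}\right)\right)=\Strig{\left(a_{j,n}\right)_{1\le j\le n}}$ stated in the preceding paragraph, together with the fact that a Steinhaus triangle is determined by its first row. In fact the paper gives no separate proof environment for this proposition, treating it as an immediate restatement of that observation.
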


For two binary sequences $S_1=\left(a_1,a_2,\ldots,a_{n_1}\right)$ and $S_2=\left(b_1,b_2,\ldots,b_{n_2}\right)$ of length $n_1$ and $n_2$, respectively, we denote by $S_1\cdot S_2$ {\em the concatenated sequence} of length $n_1+n_2$ defined by $S_1\cdot S_2=\left(a_1,a_2,\ldots,a_{n_1},b_1,b_2,\ldots,b_{n_2}\right)$.

Let $\He$ be the linear map that assigns, to each Steinhaus triangle of order $n\ge 3$, its subtriangle of order $n-3$ obtained by removing its first row and its left and right sides; i.e.,
$$
\He : \ST{n} \longrightarrow \ST{n-3}\quad\text{by}\quad \He\left((a_{i,j})_{1\le i\le j\le n}\right)=(a_{1+i,2+j})_{1\le i\le j\le n-3}.
$$
Note that the linear map $\He$ is surjective. Indeed, for any $\Strig{S'}\in\ST{n-3}$, it is easy to verify that $\Strig{S'}=\He\left(\Strig{S}\right)$ if and only if $S$ is one of the eight sequences of the form $S=(x_1)\cdot\left(\int_{i,x}S'\right)\cdot (x_2)$, where $x_1,x_2\in\Zd$ and $\int_{i,x}S'$ is one of the two antiderived sequences of $S'$. Examples of a Steinhaus triangle $\Strig{S}$ and its subtriangle $\He\left(\Strig{S}\right)$ are depicted in Figure~\ref{fig22}.

\begin{figure}[htbp]
\centerline{
\begin{tikzpicture}[scale=0.25]
\figST{{1,0,1,1,1,1,0}}{black}{col0}{col1}{white}
\pgfmathparse{sqrt(3)}\let\sq\pgfmathresult
\draw[fill=col2!50!col1] (3,-\sq) circle (1);
\draw (3,-\sq) node {$1$};
\draw[fill=col2] (5,-\sq) circle (1);
\draw (5,-\sq) node {$0$};
\draw[fill=col2] (7,-\sq) circle (1);
\draw (7,-\sq) node {$0$};
\draw[fill=col2] (9,-\sq) circle (1);
\draw (9,-\sq) node {$0$};
\draw[fill=col2!50!col1] (4,-2*\sq) circle (1);
\draw (4,-2*\sq) node {$1$};
\draw[fill=col2] (6,-2*\sq) circle (1);
\draw (6,-2*\sq) node {$0$};
\draw[fill=col2] (8,-2*\sq) circle (1);
\draw (8,-2*\sq) node {$0$};
\draw[fill=col2!50!col1] (5,-3*\sq) circle (1);
\draw (5,-3*\sq) node {$1$};
\draw[fill=col2] (7,-3*\sq) circle (1);
\draw (7,-3*\sq) node {$0$};
\draw[fill=col2!50!col1] (6,-4*\sq) circle (1);
\draw (6,-4*\sq) node {$1$};
\end{tikzpicture}}
\caption{$\He\left(\Strig{(1011110)}\right)=\Strig{(1000)}$}\label{fig22}
\end{figure}

For any binary sequence $S=\left(a_j\right)_{1\le j\le n}$, we denote by $\sigma(S)$ its sum $\sigma(S) = \sum_{j=1}^{n}a_j$ and and by $\sigma_2(S)$ this sum modulo $2$. Hence $\sigma(S)$ is the number of $1$'s in the sequence $S$ and $\sigma_2(S)$ is this number modulo $2$.

For any positive integer $n\ge3$, by definition of $\RST{n}$ and $\He$, it is clear that $\He\left(\RST{n}\right)\subset\RST{n-3}$. The precise relationship between a rotationally symmetric Steinhaus triangle $\Strig{S}$ and its subtriangle $\He\left(\Strig{S}\right)$ is given in the following

\begin{prop}\label{prop5}
Let $S$ be a finite binary sequence of length $n\ge 3$. The Steinhaus triangle $\Strig{S}$ is rotationally symmetric if and only if $\He\left(\Strig{S}\right)=\Strig{S'}$ is rotationally symmetric and $S=\left(\sigma_2\left(S'\right)\right)\cdot\left(\int_{i,x}S'\right)\cdot\left(\sigma_2\left(S'\right)\right)$, for some $i\in\{1,\ldots,n-2\}$ and some $x\in\Zd$.
\end{prop}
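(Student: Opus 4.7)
The plan is to prove the two implications separately. For the forward direction $(\Rightarrow)$, rotational symmetry of $\Strig{S}$ is equivalent, by definition of the automorphism $r$, to the identity $a_{i,j} = a_{j-i+1,\,n-i+1}$ for all $1 \le i \le j \le n$. Restricting this identity to the index set $\{(i'+1,\,j'+2)\mid 1 \le i' \le j' \le n-3\}$, which is precisely the indexing of $\He(\Strig{S})$, yields rotational symmetry of $\Strig{S'}$. Next, Proposition~\ref{prop20} applied at positions $j=1,2$ gives $a_1 = a_n$ and $a_2 = a_{n-1} + a_n$, hence $a_n \equiv a_2 + a_{n-1} \pmod 2$; combined with the telescoping identity $\sigma_2(S') = \sigma_2\bigl(\partial(a_2,\ldots,a_{n-1})\bigr) \equiv a_2 + a_{n-1} \pmod 2$, this yields $a_1 = a_n = \sigma_2(S')$. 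Finally, since $\partial(a_2,\ldots,a_{n-1}) = S'$ by the very definition of $\He$, the middle block $(a_2,\ldots,a_{n-1})$ is an antiderivative of $S'$ and so equals $\int_{i,x}\!S'$ for a suitable $i \in \{1,\ldots,n-2\}$ and $x \in \{0,1\}$.

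For the converse $(\Leftarrow)$, I would first verify that $\He(\Strig{S}) = \Strig{S'}$ by noting that the second row of $\Strig{S}$ is $\partial S$, whose middle $n-3$ entries equal $\partial(\int_{i,x}\!S') = S'$ by the fundamental theorem. By Proposition~\ref{prop20}, rotational symmetry of $\Strig{S}$ is equivalent to $a_j = a_{j,n}$ for $j = 1,\ldots,n$. I would establish this via a simultaneous induction on $j \in \{1,\ldots,n-1\}$ of the two statements $(\mathrm{A}_j)$: $a_j = a_{j,n}$ and $(\mathrm{B}_j)$: $a_{n-j+1} = a_{j,j}$. The base cases $j = 1, 2$ follow directly from $a_1 = a_n = s_0$ together with the telescoping identity $a_2 + a_{n-1} \equiv \sigma_2(S') = s_0 \pmod 2$. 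The inductive step, for $j \in \{3,\ldots,n-1\}$, invokes the local rule $a_{j,n} = a_{j-1,n-1} + a_{j-1,n}$ and $a_{j,j} = a_{j-1,j-1} + a_{j-1,j}$; the rotational symmetry of $\Strig{S'}$ translates into $a_{j-1,n-1} = a_{2,j}$ and $a_{j-1,j} = a_{2,n-j+2}$, after which the identity $a_{2,k} = a_{k-1} + a_k$ and the inductive hypotheses close both induction steps.

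The main obstacle is the final case $j = n$ of $(\mathrm{A})$, namely $a_{n,n} = a_n$. The rotational symmetry of $\Strig{S'}$ cannot be used directly here, since the corner $(n-1,n-1)$ lies outside the index range of the sub-triangle. This is precisely why the dual statement $(\mathrm{B})$ must be carried in parallel throughout the induction: applying $(\mathrm{B}_{n-1})$ and $(\mathrm{A}_{n-1})$ to the local rule gives $a_{n,n} = a_{n-1,n-1} + a_{n-1,n} = a_2 + a_{n-1}$, which again equals $\sigma_2(S') = s_0 = a_n$ by the telescoping identity. This completes the proof.
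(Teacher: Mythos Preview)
Your argument is correct. The forward direction matches the paper's essentially verbatim (the paper relegates the observation $\He(\RST{n})\subset\RST{n-3}$ to the sentence preceding the proposition, and then derives the corner constraints and the telescoping identity exactly as you do).

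For the converse, however, you take a genuinely different route. The paper argues abstractly via antiderivatives: the inner first row, right side, and reversed left side of $\Strig{S}$ are antiderivatives of the corresponding sides of $\Strig{S'}$, and since the latter coincide by rotational symmetry of $\Strig{S'}$, the former coincide as soon as they agree at a single index. This reduces the whole verification to the corner identities $a_{1,1}=a_{1,n}\equiv a_{1,2}+a_{1,n-1}$, which are exactly the content of $a_1=a_n=\sigma_2(S')$. Your approach instead runs a direct simultaneous induction on $(\mathrm{A}_j)$ and $(\mathrm{B}_j)$, invoking the subtriangle symmetry at each step through the identities $a_{j-1,n-1}=a_{2,j}$ and $a_{j-1,j}=a_{2,n-j+2}$. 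The paper's antiderivative trick is slicker and makes the role of the ``agree at one point'' condition transparent; your induction is more elementary and self-contained, and your observation that the terminal case $j=n$ genuinely requires carrying the dual statement $(\mathrm{B})$ alongside $(\mathrm{A})$ is a nice structural insight that the paper's more global argument obscures.
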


Proposition~\ref{prop5} appears in \cite{Barbe:2001aa} in a more general context. For the convenience of the reader, a proof is given here.

\begin{proof}
Let $\Strig{S}=(a_{i,j})_{1\le i\le j\le n}\in\ST{n}$ be such that
$$
\Strig{S'}=\He(\Strig{S})=(a_{1+i,2+j})_{1\le i\le j\le n-3}\in\RST{n-3}.
$$
Since $\Strig{S'}$ is rotationally symmetric, its top row $(a_{2,j})_{3\le j\le n-1}$, its right side $(a_{i,n-1})_{2\le i\le n-2}$ and the reverse of its left side $(a_{n-i,n-i+1})_{2\le i\le n-2}$ correspond by Proposition~\ref{prop20}. Moreover, since $(a_{1,j})_{2\le j\le n-1}$, $(a_{i,n})_{2\le i\le n-1}$ and $(a_{n-i+1,n-i+1})_{2\le i\le n-1}$ are antiderived sequences of the sequences $(a_{2,j})_{3\le j\le n-1}$, $(a_{i,n-1})_{2\le i\le n-2}$ and $(a_{n-i,n-i+1})_{2\le i\le n-2}$, respectively, we deduce that they correspond if and only if there exist $i_1,i_2\in\{2,\ldots,n-1\}$ such that $a_{1,i_1}=a_{i_1,n}$ and $a_{1,i_2}=a_{n-i_2+1,n-i_2+1}$. Since $a_{1,1}\equiv \left(a_{1,2}+a_{2,2}\right)\tpmod{2}$ and $a_{1,n}\equiv \left(a_{1,n-1}+a_{2,n}\right)\tpmod{2}$, it follows from Proposition~\ref{prop20} that the Steinhaus triangle $\Strig{S}$ is rotationally symmetric if and only if $a_{1,1}=a_{1,n}\equiv \left(a_{1,2}+a_{1,n-1}\right)\tpmod{2}$. Finally, using \eqref{eq13}, we have that
$$
\sigma\left(S'\right) = \sum_{j=3}^{n-1}a_{2,j}\equiv \sum_{j=3}^{n-1}a_{1,j-1}+a_{1,j}\equiv a_{1,2}+a_{1,n-1} \pmod{2}.
$$
This completes the proof.
\end{proof}

\subsection{Generating index sets of \texorpdfstring{$\RST{n}$}{RST(n)}}

We are now ready to determine generating index sets of the linear subspace of rotationally symmetric Steinhaus triangles.

\begin{thm}\label{thm6}
Let $n$ be a non-negative integer. The set
$$
G_R = \left\{ (i,j_i)\ \middle|\  i\in\left\{1,\ldots,\left\lfloor\frac{n}{3}\right\rfloor + \deltamod{1}{n}{3} \right\} \right\},
$$
where $j_i\in\left\{2i,\ldots,n-i\right\}$ for all $i\in\left\{1,\ldots,\left\lfloor\frac{n}{3}\right\rfloor\right\}$ and $j_{\frac{n+2}{3}}=\frac{2n+1}{3}$ when $n\equiv1\tpmod{3}$, is a generating index set of $\RST{n}$.
\end{thm}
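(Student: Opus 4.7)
The plan is to argue by induction on $n$, using Proposition~\ref{prop5} to peel off the outer frame of a rotationally symmetric triangle of size $n$ and reduce the question to $\RST{n-3}$. The base cases $n\in\{0,1,2\}$ are handled directly: $\RST{0}$ and $\RST{2}$ are trivial with $G_R=\emptyset$, while for $n=1$ the set $G_R=\{(1,1)\}$ trivially generates the one-dimensional space $\RST{1}$. Since $|G_R|=\dim\RST{n}$ in every case, it suffices to show that the restriction map $\pi_{G_R}$ is injective, or equivalently that the values prescribed on $G_R$ determine the triangle.

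For the inductive step with $n\ge 3$, I would split $G_R$ into the first-row index $(1,j_1)$ and the remaining indices $(i,j_i)$ with $i\ge 2$. Since $2i\le j_i\le n-i$ (and similarly for the central index $((n+2)/3,(2n+1)/3)$ when $n\equiv 1\pmod 3$ and $n\ge 4$), each such index lies strictly inside the triangle, so the shift $(i,j_i)\mapsto(i-1,j_i-2)$ sends them into the index set $\Strig{(n-3)}$ of the inner subtriangle $\He(\Strig{S})$. A short arithmetic check shows that the shifted set satisfies the defining conditions of the theorem with $n$ replaced by $n-3$, and crucially that the central extra index for $n\equiv 1\pmod 3$ shifts to the central extra index for $n-3\equiv 1\pmod 3$. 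By the induction hypothesis, the shifted set is a generating index set of $\RST{n-3}$, so prescribing the values of $\Strig{S}$ at the indices $(i,j_i)$ with $i\ge 2$ completely determines the subtriangle $\He(\Strig{S})=\Strig{S'}$.

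To finish the induction, I would use the single remaining value $a_{1,j_1}$ to pick out the correct preimage of $\Strig{S'}$ under $\He$. By Proposition~\ref{prop5}, any such preimage satisfies $S=(\sigma_2(S'))\cdot\int_{i,x}S'\cdot(\sigma_2(S'))$, and the only genuine freedom is the binary choice $x\in\{0,1\}$; the two resulting first rows agree at positions $1$ and $n$ but differ at every intermediate position, since $\int_{i,0}S'+\int_{i,1}S'=(1)_{n-2}$. Because $j_1\in\{2,\ldots,n-1\}$ lies strictly between the corners, the value $a_{1,j_1}$ takes different values on the two candidates, pinning down $\Strig{S}$ uniquely.

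The main obstacle I foresee is the careful handling of the case $n\equiv 1\pmod 3$, where the central index is specified separately rather than by the general range $j_i\in\{2i,\ldots,n-i\}$. Verifying that under the shift it lands exactly on the central extra index of $\RST{n-3}$ (using $n-3\equiv 1\pmod 3$) and that the remaining shifted indices continue to match the defining pattern is the most delicate bookkeeping, but is essentially a direct computation once the indices are written out.
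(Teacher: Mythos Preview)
Your proposal is correct and follows essentially the same strategy as the paper: induction on $n$ with base cases $n\in\{0,1,2\}$, the shift $(i,j_i)\mapsto(i-1,j_i-2)$ to land in $\Strig{(n-3)}$, the verification that the shifted set satisfies the hypotheses for $n-3$ (including the central index when $n\equiv 1\pmod 3$), and the use of Proposition~\ref{prop5} to recover the outer frame from the single datum $a_{1,j_1}$. The paper packages this as a factorisation $\pi_{G_R}=f_2f_1$ through $\{0,1\}\times\RST{n-3}$, but the content is identical.

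One small caution: your sentence ``Since $|G_R|=\dim\RST{n}$ in every case, it suffices to show that $\pi_{G_R}$ is injective'' borrows the dimension formula, which in the paper's internal logic is Corollary~\ref{cor14}, a \emph{consequence} of this very theorem. Your substantive argument actually establishes bijectivity directly (you exhibit both uniqueness and existence of the preimage via Proposition~\ref{prop5} and the induction hypothesis), so the appeal to the dimension is unnecessary and you can simply drop that sentence to avoid any appearance of circularity.
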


\begin{proof}
We proceed with induction on $n$. For $n=0$ and $n=2$, it is clear that $\emptyset$ and $\Strig{(00)}$ are the only rotationally symmetric Steinhaus triangles of sizes $0$ and $2$. Therefore, the empty set $\emptyset$ is a generating index set of $\RST{0}$ and $\RST{2}$. For $n=1$, the Steinhaus triangles $\Strig{(0)}$ and $\Strig{(1)}$ are both rotationally symmetric and thus the set $\left\{(1,1)\right\}$ is a generating index set of $\RST{1}$.

For $n\ge3$, suppose that the result is true for the sets of rotationally symmetric Steinhaus triangles of size strictly less than $n$. Let $m=\left\lfloor\frac{n}{3}\right\rfloor + \deltamod{1}{n}{3}$. We consider the subset $\He\left(G_R\right)\subset\Strig{(n-3)}$ defined by
$$
\He\left(G_R\right) = \left\{ (i-1,j_i-2) \ \middle|\ i\in\left\{2,\ldots, m\right\} \right\}
$$
and the linear maps $f_1$ and $f_2$, where $\Strig{S}=(a_{i,j})_{1\le i\le j\le n}$, defined by
$$
\begin{array}{llll}
f_1 : \RST{n} \longrightarrow \Zd\times\RST{n-3} & \text{by} & f_1\left(\Strig{S}\right)=\left(a_{1,j_1} , \He\left(\Strig{S}\right) \right) & \text{and} \\[1.5ex]
f_2 : \Zd\times\RST{n-3} \longrightarrow \Zd^m & \text{by} & f_2\left(x,\Strig{S'}\right)=(x)\cdot\left(\pi_{\He\left(G_R\right)}\left(\Strig{S'}\right)\right). \\
\end{array}
$$
Then, for any $(a_{i,j})_{1\le i\le j\le n}\in\RST{n}$, we have
$$
f_2f_1\left((a_{i,j})_{1\le i\le j\le n}\right) \begin{array}[t]{l}
 = f_2\left(a_{1,j_1},(a_{1+i,2+j})_{1\le i\le j\le n-3}\right) \\
 = (a_{1,j_1})\cdot \pi_{\He\left(G_R\right)}\left((a_{1+i,2+j})_{1\le i\le j\le n-3}\right) \\
 = (a_{1,j_1})\cdot \left(a_{i,j_i}\right)_{2\le i\le m} \\
 = \left(a_{i,j_i}\right)_{1\le i\le m} = \pi_{G_R}\left((a_{i,j})_{1\le i\le j\le n}\right). \\
\end{array}
$$
Therefore $f_2f_1=\pi_{G_R}$. From Proposition~\ref{prop5}, we know that $f_1$ is an isomorphism whose inverse is defined by $f_1^{-1}\left(x,\Strig{S'}\right)=\Strig{\left((\sigma_2(S'))\cdot\left(\int_{j_1-1,x}S'\right)\cdot(\sigma_2(S'))\right)}$, for all $\left(x,\Strig{S'}\right)\in\Zd\times\RST{n-3}$. Moreover, since we have
$$
1\le i-1\le \frac{n-3}{3}\quad\text{and}\quad2(i-1)\le j_i-2\le (n-3)-(i-1),
$$
for all $i\in\left\{2,\ldots,\left\lfloor\frac{n}{3}\right\rfloor\right\}$, and
$$
m-1 = \frac{(n-3)+2}{3}\quad\text{and}\quad j_{m}-2 = \frac{2(n-3)+1}{3},
$$
when $n\equiv1\tpmod{3}$, by the induction hypothesis the set $\He\left(G_R\right)$ is a generating index set of $\RST{n-3}$. Therefore $\pi_{\He\left(G_R\right)}$ and thus $f_2$ are isomorphisms. Finally, since the linear map $\pi_{G_R}=f_2f_1$ is an isomorphism, the set $G_R$ is a generating index set of $\RST{n}$.
\end{proof}

\begin{cor}\label{cor2}
Let $n$ be a non-negative integer. The set
$$
G_R := \left\{ \left(i,n-\left\lfloor\frac{n}{3}\right\rfloor\right)\ \middle|\  i\in\left\{1,\ldots,\left\lfloor\frac{n}{3}\right\rfloor + \deltamod{1}{n}{3}\right\} \right\}
$$
is a generating index set of $\RST{n}$.
\end{cor}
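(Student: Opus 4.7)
The plan is to verify that the corollary is a direct instantiation of Theorem~\ref{thm6} with the specific choice $j_i = n - \lfloor n/3 \rfloor$ for every index $i$ in the indexing set. That is, I will show that this constant assignment of the second coordinate falls within the admissible range prescribed in Theorem~\ref{thm6}, which then immediately yields that the resulting $G_R$ is a generating index set of $\RST{n}$.

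Concretely, let $m = \lfloor n/3 \rfloor + \delta_{1,(n \bmod 3)}$ and set $j_i := n - \lfloor n/3 \rfloor$ for every $i \in \{1, \ldots, m\}$. First I would handle the generic indices $i \in \{1, \ldots, \lfloor n/3 \rfloor\}$, for which Theorem~\ref{thm6} requires $2i \le j_i \le n - i$. The upper inequality $n - \lfloor n/3 \rfloor \le n - i$ is equivalent to $i \le \lfloor n/3 \rfloor$, which holds by assumption. The lower inequality $2i \le n - \lfloor n/3 \rfloor$ is most restrictive when $i = \lfloor n/3 \rfloor$, and reduces to $3 \lfloor n/3 \rfloor \le n$, which is the defining property of the floor function.

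Next I would treat the exceptional index that appears only when $n \equiv 1 \bmod 3$, namely $i = (n+2)/3$, for which Theorem~\ref{thm6} requires $j_i = (2n+1)/3$. Writing $n = 3k+1$, we have $\lfloor n/3 \rfloor = k$, hence $n - \lfloor n/3 \rfloor = 2k+1 = (2n+1)/3$, as required. Since the constant choice $j_i = n - \lfloor n/3 \rfloor$ thus satisfies all the constraints of Theorem~\ref{thm6}, the set $G_R$ of Corollary~\ref{cor2} is a generating index set of $\RST{n}$.

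The only thing to be careful about is the edge case parsing when $n \bmod 3$ changes the size $m$ of the indexing set and introduces the extra index; this is the single real verification, and it is routine once the two cases are separated. Otherwise, the result is a simple specialisation of Theorem~\ref{thm6} and requires no new ideas.
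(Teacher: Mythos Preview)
Your proof is correct and follows exactly the paper's approach: the paper's proof simply invokes Theorem~\ref{thm6} after noting that $2i \le n-\lfloor n/3\rfloor \le n-i$ for all $i\in\{1,\ldots,\lfloor n/3\rfloor\}$ and that $n-\lfloor n/3\rfloor = (2n+1)/3$ when $n\equiv 1\bmod 3$. You have merely spelled out the verification of these inequalities in slightly more detail.
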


\begin{proof}
This follows from Theorem~\ref{thm6} since $2i\le n-\left\lfloor\frac{n}{3}\right\rfloor\le n-i$ for all $i\in\left\{1,\ldots,\left\lfloor\frac{n}{3}\right\rfloor\right\}$ and $n-\left\lfloor\frac{n}{3}\right\rfloor = \frac{2n+1}{3}$ when $n\equiv1\tpmod{3}$.
\end{proof}

Since the dimension of $\RST{n}$ corresponds to the cardinality of a generating index set $G_R$, it is straightforward to obtain the following

\begin{cor}\label{cor14}
For all non-negative integers $n$, we have $\dim\RST{n} = \left\lfloor\frac{n}{3}\right\rfloor + \deltamod{1}{n}{3}$.
\end{cor}

\subsection{Bases of \texorpdfstring{$\RST{n}$}{RST(n)}}

At the end of this section, using the generating index sets $G_R$ introduced above, we determine bases of the linear subspace of rotationally symmetric Steinhaus triangles.

First, we consider the linear map $\rho : \ST{n}\longrightarrow \RST{n}$ defined by $\rho=r^2+r+id_{\ST{n}}$. Obviously, this map is surjective since $\rho(\Strig{})=\Strig{}$ for all $\Strig{}\in\RST{n}$. Moreover, as detailed below, all the terms of $\rho\left(\Strig{}\right)$ can be expressed in terms of those of $\Strig{}$.

\begin{prop}\label{prop16}
For all $\left(a_{i,j}\right)_{1\le i\le j\le n}\in\ST{n}$, we have
$$
\rho\left(\left(a_{i,j}\right)_{1\le i\le j\le n}\right) = \left(\left(a_{i,j}+a_{j-i+1,n-i+1}+a_{n-j+1,n+i-j}\right)\tpmod{2}\right)_{1\le i\le j\le n}.
$$
\end{prop}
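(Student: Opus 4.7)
The plan is to unfold the definition $\rho = r^2 + r + \mathrm{id}_{\ST{n}}$ entrywise. Since addition in $\ST{n}$ is coordinatewise modulo $2$, the $(i,j)$-entry of $\rho\left((a_{i,j})_{1\le i\le j\le n}\right)$ is the mod-$2$ sum of the $(i,j)$-entries of $\mathrm{id}_{\ST{n}}\left((a_{i,j})\right)$, $r\left((a_{i,j})\right)$, and $r^2\left((a_{i,j})\right)$. I would identify these three contributions one by one.

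The first two are immediate from the definitions given in the introduction: the identity contributes $a_{i,j}$, and by the explicit formula for $r$ already recorded in the paper, the automorphism $r$ contributes $a_{j-i+1,\,n-i+1}$. This already accounts for two of the three terms in the statement.

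The only step requiring any care is the iteration $r^2$. I would set $b_{k,l} = a_{l-k+1,\,n-k+1}$ to denote the entries of $r\left((a_{i,j})\right)$, and then apply the same formula once more, so that the $(i,j)$-entry of $r^2\left((a_{i,j})\right)$ equals $b_{j-i+1,\,n-i+1}$. Substituting $k = j-i+1$ and $l = n-i+1$ into the expression for $b_{k,l}$ and simplifying $l-k+1 = n-j+1$ and $n-k+1 = n+i-j$ produces $a_{n-j+1,\,n+i-j}$, which matches the third term in the claim. Summing the three contributions modulo $2$ concludes the proof.

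I do not expect any genuine obstacle: the argument is pure substitution. The only thing to keep an eye on is that the pairs $(j-i+1,n-i+1)$ and $(n-j+1,n+i-j)$ indeed lie in $\Strig{(n)}$ whenever $1\le i\le j\le n$, which is automatic since $r$, and therefore $r^2$, has already been established to be an automorphism of $\ST{n}$.
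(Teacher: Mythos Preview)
Your proposal is correct and follows essentially the same route as the paper's own proof: both read off the $(i,j)$-entry of $r$ from its definition and then iterate once to obtain the $(i,j)$-entry of $r^2$, summing the three contributions modulo $2$. The paper's version is slightly more terse, but the substitution you carry out to get $a_{n-j+1,\,n+i-j}$ is exactly the same computation.
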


\begin{proof}
First, by definition of $r$, we know for all $\left(a_{i,j}\right)_{1\le i\le j\le n}\in\ST{n}$ that
$$
r\left(\left(a_{i,j}\right)_{1\le i\le j\le n}\right) = \left(a_{j-i+1,n-i+1}\right)_{1\le i\le j\le n}.
$$
Thus, we have
$$
r^2\left(\left(a_{i,j}\right)_{1\le i\le j\le n}\right) = r\left(\left(a_{j-i+1,n-i+1}\right)_{1\le i\le j\le n}\right) = \left(a_{n-j+1,n+i-j}\right)_{1\le i\le j\le n}.
$$
The result follows.
\end{proof}

For any non-negative integer $n$, let $U_n$ be the Steinhaus triangle of size $n$ defined by
$$
U_{n}=\rho\left(\Strig{(1)_n}\right).
$$
It is clear that $U_0=\emptyset$, $U_1=\Strig{(1)}$, $U_2=\Strig{(00)}$ and $U_{n}=\Strig{(011\cdots 110)}$ for $n\geq 3$, since
{\small
$$
U_n = \rho\left(\Strig{(1)_n}\right) = \Strig{(1\cdots1)}+\Strig{(10\cdots0)}+\Strig{(0\cdots01)} = \Strig{\left((1\cdots1)+(10\cdots0)+(0\cdots01)\right)}.
$$}
The Steinhaus triangles $U_n$ are depicted in Figure~\ref{fig9}, for the first few values of $n$. Moreover, an explicit formula for the terms of $U_n$ is given in the following

\begin{figure}[htbp]
\centerline{\includegraphics[width=\textwidth]{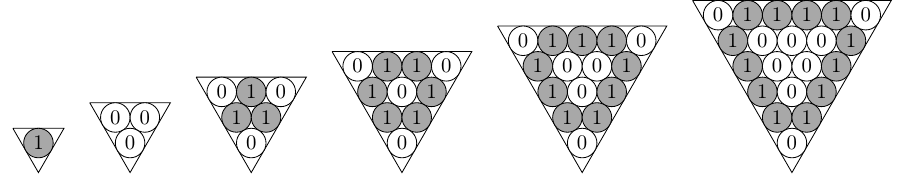}}
\caption{$U_n$ for $n\in\{1,\ldots,6\}$}\label{fig9}
\end{figure}

\begin{prop}\label{prop17}
For any non-negative integer $n$, we have that
$$
U_n = \left(\left(\delta_{i,1}+\delta_{i,j}+\delta_{j,n}\right)\tpmod{2}\right)_{1\le i\le j\le n}.
$$
\end{prop}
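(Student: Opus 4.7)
The plan is to compute $U_n = \rho\left(\Strig{(1)_n}\right)$ directly using the explicit formula for $\rho$ given in Proposition~\ref{prop16}, after first identifying the entries of the Steinhaus triangle $\Strig{(1)_n}$ itself.

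First I would determine $\Strig{(1)_n}=(a_{i,j})_{1\le i\le j\le n}$ entrywise. By the local rule \eqref{eq13}, if the first row is constant equal to $1$, then the second row is $a_{2,j}\equiv a_{1,j-1}+a_{1,j}\equiv 1+1\equiv 0\bmod 2$ for every admissible $j$, and an immediate induction on $i$ shows that every subsequent row is identically zero. Hence $a_{i,j}=\delta_{i,1}$ for all $1\le i\le j\le n$.

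Next I would plug this description into Proposition~\ref{prop16}. That proposition gives
$$
\rho\left((a_{i,j})_{1\le i\le j\le n}\right)=\left(a_{i,j}+a_{j-i+1,n-i+1}+a_{n-j+1,n+i-j}\bmod 2\right)_{1\le i\le j\le n},
$$
and since $a_{k,l}=\delta_{k,1}$, the three summands become $\delta_{i,1}$, $\delta_{j-i+1,1}$, and $\delta_{n-j+1,1}$ respectively. The trivial simplifications $\delta_{j-i+1,1}=\delta_{i,j}$ and $\delta_{n-j+1,1}=\delta_{j,n}$ then yield the claimed formula for $U_n$.

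There is no real obstacle here: the only thing to double-check is that the index $(j-i+1,n-i+1)$ and $(n-j+1,n+i-j)$ really do lie in $\Strig{(n)}$ (so that evaluating $a_{\cdot,\cdot}=\delta_{\cdot,1}$ is legitimate), which follows from the inequalities $1\le j-i+1\le n-i+1\le n$ and $1\le n-j+1\le n+i-j\le n$ when $1\le i\le j\le n$. So the proof will be essentially a three-line computation following the outline above.
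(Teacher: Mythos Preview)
Your proposal is correct and follows essentially the same approach as the paper: compute $\Strig{(1)_n}=(\delta_{i,1})$, apply Proposition~\ref{prop16}, and simplify the three Kronecker deltas. Your additional verification that the rotated indices lie in $\Strig{(n)}$ is a nice sanity check the paper leaves implicit.
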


\begin{proof}
First, if we denote $\Strig{(1)_n}=\left(a_{i,j}\right)_{1\le i\le j\le n}$ and $U_n=\rho\left(\Strig{(1)_n}\right)=\left(b_{i,j}\right)_{1\le i\le j\le n}$, we know from Proposition~\ref{prop16} that
$$
b_{i,j} \equiv \left(a_{i,j}+a_{j-i+1,n-i+1}+a_{n-j+1,n+i-j}\right) \pmod{2}.
$$
Moreover, it is clear that $\partial^{i}(1)_n=(0)_{n-i}$, for all $i\in\{1,\ldots,n-1\}$. Therefore, $\Strig{(1)_n}=\left(\delta_{i,1}\right)_{1\le i\le j\le n}$. This leads to
$$
b_{i,j} \equiv \left(\delta_{i,1} + \delta_{j-i+1,1} + \delta_{n-j+1,1}\right) \pmod{2}.
$$
Finally, since $\delta_{j-i+1,1}=\delta_{i,j}$ and $\delta_{n-j+1,1}=\delta_{j,n}$, the result follows.
\end{proof}

\begin{cor}
For all positive integers $n\ge3$, we have $\He\left(U_n\right)=\Strig{(0)_{n-3}}$.
\end{cor}

\begin{proof}
Let $U_n=\left(a_{i,j}\right)_{1\le i\le j\le n}$. By definition of $\He$ and Proposition~\ref{prop17}, we have that
$$
\He\left(U_n\right) = \left(a_{i+1,j+2}\right)_{1\le i\le j\le n-3} = \left(\left(\delta_{i+1,1}+\delta_{i+1,j+2}+\delta_{j+2,n}\right)\tpmod{2}\right)_{1\le i\le j\le n-3} =\Strig{(0)_{n-3}}.
$$
\end{proof}

For any non-negative integer $k$ such that $3k\le n$, we consider the iterated operator
$$
\He^k=\underbrace{\He\He\cdots\He}_{k\text{ times}} : \ST{n} \longrightarrow \ST{n-3k} \quad\text{by}\quad \He^k\left((a_{i,j})_{1\le i\le j\le n}\right) = (a_{k+i,2k+j})_{1\le i\le j\le n-3k}.
$$
Using the operators $\He^k$ and the generating index set $G_R$, we obtain a family of bases of $\RST{n}$.

\begin{thm}\label{thm1}
Let $n$ and $m$ be non-negative integers such that $m=\left\lfloor\frac{n}{3}\right\rfloor + \deltamod{1}{n}{3}$. For every $k\in\left\{0,\ldots,m-1\right\}$, let $\iStrig{k}\in\RST{n}$ be such that $\He^{k}\left(\iStrig{k}\right)=U_{n-3k}$. Then, the set $\left\{\iStrig{0},\ldots,\iStrig{m-1}\right\}$ is a basis of $\RST{n}$.
\end{thm}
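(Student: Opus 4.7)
The plan is to establish the theorem by a dimension count combined with an upper-triangular argument based on the iterated operators $\He^{k}$. Corollary~\ref{cor14} already gives $\dim\RST{n}=m$, and the proposed family has cardinality $m$, so it suffices to show that the $\iStrig{k}$ exist and that they are linearly independent.

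For existence, I would first observe that $U_{n-3k}\in\RST{n-3k}$: indeed, $r\circ\rho=\rho$ (because $r^{3}=id$), so $\rho$ takes values in $\ker(r-id)=\RST{}$. I would then prove that the restriction $\He\colon\RST{n}\to\RST{n-3}$ is surjective. Given $\nabla_{S'}\in\RST{n-3}$, Proposition~\ref{prop5} guarantees that the sequence $S=(\sigma_2(S'))\cdot\int_{i,x}S'\cdot(\sigma_2(S'))$ produces a $\nabla_{S}\in\RST{n}$ with $\He(\nabla_{S})=\nabla_{S'}$, for either antiderived sequence $\int_{i,x}S'$. Iterating, $\He^{k}\colon\RST{n}\to\RST{n-3k}$ is surjective for every $k\le m-1$, so a preimage $\iStrig{k}$ of $U_{n-3k}$ exists.

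The core of the argument is linear independence. The key identity is that, for $k'<k$,
$$
\He^{k}(\iStrig{k'})=\He^{k-k'}(U_{n-3k'})=0,
$$
because the corollary following Proposition~\ref{prop17} gives $\He(U_{N})=\Strig{(0)_{N-3}}=0$ for the relevant $N$, and a single application of $\He$ already annihilates $U_{n-3k'}$. Now, given a vanishing linear combination $\sum_{k=0}^{m-1}\lambda_{k}\iStrig{k}=0$, applying $\He^{m-1}$ leaves only $\lambda_{m-1}\,U_{n-3(m-1)}=0$. Provided $U_{n-3(m-1)}\neq 0$, this forces $\lambda_{m-1}=0$, and a descending induction applying $\He^{m-2},\He^{m-3},\ldots,\He^{0}$ in turn extracts $\lambda_{m-2},\ldots,\lambda_{0}=0$.

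The main obstacle is to verify that $U_{n-3k}\neq 0$ for every $k\in\{0,\ldots,m-1\}$; by Proposition~\ref{prop17}, the only $U_{N}$ that vanish are $U_{0}$ and $U_{2}$, while $U_{1}$ and all $U_{N}$ with $N\ge 3$ are nonzero. A short case analysis on $n\bmod 3$ resolves this: the smallest size reached, namely $n-3(m-1)$, equals $3$, $1$, or $5$ when $n\equiv 0,1,2\pmod{3}$ respectively, so the sizes $0$ and $2$ never appear in the list, and the linear independence argument goes through unconditionally.
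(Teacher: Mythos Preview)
Your proof is correct and takes a genuinely different route from the paper's. The paper proves linear independence by fixing the concrete generating index set $G_R=\{(i,n-\lfloor n/3\rfloor)\mid 1\le i\le m\}$ of Corollary~\ref{cor2} and invoking Lemma~\ref{lem2} to compute the relevant entries: it shows that $\pi_{G_R}(\iStrig{k})=(\ast,\ldots,\ast,1,0,\ldots,0)$ with the $1$ in position $k+1$, an explicitly upper-triangular pattern in coordinates. You instead run the triangularity through the operators $\He^{k}$ themselves, using only the simpler corollary $\He(U_N)=0$ to kill all terms with $k'<k$ and peel off the coefficients one at a time. Your argument is cleaner in that it avoids the entry-by-entry computation of Lemma~\ref{lem2} and needs only the qualitative fact that $U_{N}\neq 0$ for $N\notin\{0,2\}$; the paper's approach, by contrast, yields the actual coordinate vectors $\pi_{G_R}(\iStrig{k})$, which is extra information useful if one wants to expand an arbitrary element of $\RST{n}$ in this basis. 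Both approaches rest on the same underlying filtration $\ker\He^{0}\subset\ker\He^{1}\subset\cdots$, read either through explicit coordinates or through the operators directly.
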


The proof is based on the following

\begin{lem}\label{lem2}
Let $\Strig{}=(a_{i,j})_{1\le i\le j\le n}\in\ST{n}$ be such that $\He^{k}\left(\Strig{}\right)=U_{n-3k}$, for some $k\in\left\{0,\ldots,\left\lfloor\frac{n}{3}\right\rfloor-1\right\}$. Then,
$$
a_{i,n-\left\lfloor\frac{n}{3}\right\rfloor} = \left\{\begin{array}{ll}
 1 & \text{for } i=k+1,\\
 0 & \text{for } i\in\left\{k+2,\ldots,\left\lfloor\frac{n}{3}\right\rfloor\right\}.\\ 
\end{array}\right.
$$
Moreover, when $n\equiv1\tpmod{3}$, if $\He^{\left\lfloor\frac{n}{3}\right\rfloor}\left(\Strig{}\right)=U_1=(1)$, then $a_{\left\lfloor\frac{n}{3}\right\rfloor+1,n-\left\lfloor\frac{n}{3}\right\rfloor}=1$.
\end{lem}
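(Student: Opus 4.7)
The plan is to apply Proposition~\ref{prop17}, which gives the closed form $U_{n-3k}=(\delta_{i',1}+\delta_{i',j'}+\delta_{j',n-3k})_{1\le i'\le j'\le n-3k}$, after unwinding the definition of $\He^k$. Since $\He^k(\Strig{})=(a_{k+i',2k+j'})_{1\le i'\le j'\le n-3k}$, the hypothesis $\He^k(\Strig{})=U_{n-3k}$ yields an explicit formula for each $a_{k+i',2k+j'}$ modulo $2$.

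To reach the target entries $a_{i,n-m}$ with $m=\left\lfloor\frac{n}{3}\right\rfloor$, I would set $i'=i-k$ and $j'=n-m-2k$, so that $a_{i,n-m}=a_{k+i',2k+j'}$. Before substituting into the formula, I verify the admissibility range $1\le i'\le j'\le n-3k$: the first inequality follows from $i\ge k+1$; the last from $k\le m$. The middle inequality $i+k\le n-m$ reduces, in the extremal case $i=m$, $k=m-1$, to $3m-1\le n$, which holds strictly in all three residue classes of $n$ modulo $3$ (since $n\ge 3m$). This strict inequality $i'<j'$ is the key arithmetic point: it guarantees that the Kronecker term $\delta_{i',j'}$ contributes $0$.

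Applying Proposition~\ref{prop17}, the remaining delta terms evaluate immediately: $\delta_{i',1}=\delta_{i,k+1}$, and $\delta_{j',n-3k}=\delta_{n-m-2k,n-3k}=\delta_{k,m}$, which vanishes because $k\le m-1$. Therefore $a_{i,n-m}\equiv\delta_{i,k+1}\pmod{2}$, which gives both cases of the main claim: value $1$ when $i=k+1$, and value $0$ when $i\in\{k+2,\ldots,m\}$.

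For the moreover clause, when $n\equiv 1\bmod 3$ we have $n=3m+1$, so $n-m=2m+1$. The hypothesis $\He^{m}(\Strig{})=U_1=(1)$ unwinds directly, via the definition of $\He^m$, as $a_{m+1,2m+1}=1$, i.e.\ $a_{m+1,n-m}=1$. The argument is largely bookkeeping; the only genuinely load-bearing step is the strict inequality $i'<j'$, which ensures that the ``diagonal'' Kronecker term coming from Proposition~\ref{prop17} never accidentally contributes to the entries being computed.
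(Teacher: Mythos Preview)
Your proof is correct and follows essentially the same route as the paper's own argument: both unwind $\He^k$ and apply Proposition~\ref{prop17}, then verify that the two ``spurious'' Kronecker terms $\delta_{i',j'}$ and $\delta_{j',n-3k}$ vanish via the inequalities $i'<j'$ and $k<m$, leaving only $\delta_{i,k+1}$. The only cosmetic differences are your use of the abbreviation $m=\lfloor n/3\rfloor$ and the explicit substitution variables $i',j'$, which if anything make the bookkeeping cleaner.
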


\begin{proof}[Proof of Lemma~\ref{lem2}]
Let $k\in\left\{0,\ldots,\left\lfloor\frac{n}{3}\right\rfloor-1\right\}$. By definition of $\He^k$ and Proposition~\ref{prop17}, we deduce that
\begin{equation}\label{eq20}
\left(a_{k+i,2k+j}\right)_{1\le i\le j\le n-3k} = \He^k\left(\Strig{}\right) = U_{n-3k} = \left(\left(\delta_{i,1}+\delta_{i,j}+\delta_{j,n-3k}\right)\tpmod{2}\right)_{1\le i\le j\le n-3k}.
\end{equation}
Since $k\le\left\lfloor\frac{n}{3}\right\rfloor-1$, we have
$$
2k+1\le 2\left\lfloor\frac{n}{3}\right\rfloor-1<n-\left\lfloor\frac{n}{3}\right\rfloor<n-k.
$$
It follows from \eqref{eq20} that for all integers $i$ such that $1\le i-k\le n-\left\lfloor\frac{n}{3}\right\rfloor-2k$ or what is the same thing, $k+1\le i\le n-\left\lfloor\frac{n}{3}\right\rfloor-k$, that
\begin{equation}\label{eq21}
a_{i,n-\left\lfloor\frac{n}{3}\right\rfloor} = a_{k+(i-k),2k+(n-\left\lfloor\frac{n}{3}\right\rfloor-2k)} \equiv \delta_{i-k,1} + \delta_{i-k,n-\left\lfloor\frac{n}{3}\right\rfloor-2k} + \delta_{n-\left\lfloor\frac{n}{3}\right\rfloor-2k,n-3k} \pmod{2}.
\end{equation}
Since $k\le\left\lfloor\frac{n}{3}\right\rfloor-1$, we obtain
\begin{equation}\label{eq22}
\delta_{n-\left\lfloor\frac{n}{3}\right\rfloor-2k,n-3k} = \delta_{k,\left\lfloor\frac{n}{3}\right\rfloor} = 0.
\end{equation}
Moreover, this leads to
$$
n-\left\lfloor\frac{n}{3}\right\rfloor-k \ge n-2\left\lfloor\frac{n}{3}\right\rfloor+1\ge \left\lfloor\frac{n}{3}\right\rfloor+1
$$
and for all integers $i\le\left\lfloor\frac{n}{3}\right\rfloor$,
\begin{equation}\label{eq23}
\delta_{i-k,n-\left\lfloor\frac{n}{3}\right\rfloor-2k} = \delta_{i,n-\left\lfloor\frac{n}{3}\right\rfloor-k} = 0.
\end{equation}
Therefore, for all $i\in\left\{k+1,\ldots,\left\lfloor\frac{n}{3}\right\rfloor\right\}$, we obtain from \eqref{eq21}, \eqref{eq22} and \eqref{eq23} that
$$
a_{i,n-\left\lfloor\frac{n}{3}\right\rfloor} = \delta_{i-k,1} = \delta_{i,k+1}.
$$
This completes the proof when $k\in\left\{0,\ldots,\left\lfloor\frac{n}{3}\right\rfloor-1\right\}$.

Finally, when $n\equiv1\tpmod{3}$, it is clear that
$$
(1) = U_1 = \He^{\frac{n-1}{3}}\left(\Strig{}\right) = \left(a_{\frac{n+2}{3},\frac{2n+1}{3}}\right) = \left(a_{\left\lfloor\frac{n}{3}\right\rfloor+1,n-\left\lfloor\frac{n}{3}\right\rfloor}\right).
$$
This completes the proof.
\end{proof}

\begin{proof}[Proof of Theorem~\ref{thm1}]
We consider the set
$$
G_R:=\left\{ \left(i,n-\left\lfloor\frac{n}{3}\right\rfloor\right)\ \middle|\ i\in\{1,2,\ldots,m\}\right\}.
$$
For any $k\in\{0,\ldots,m-1\}$, since $\He^{k}\left(\iStrig{k}\right)=U_{n-3k}$, it follows from Lemma~\ref{lem2} that
$$
\pi_{G_R}(\iStrig{k}) = ( \underbrace{\ast , \ldots, \ast}_{k} , 1 , \underbrace{0 , \ldots , 0}_{m-k-1}  ),
$$
where $\ast$ stands for any element of $\Zd$. Therefore, the set $\left\{ \pi_{G_R}(\iStrig{k})\ \middle|\ k\in\{0,\ldots,m-1\}\right\}$ is a basis of $\Zd^{|G_R|}$. Finally, by Corollary~\ref{cor2}, since $G_R$ is a generating index set of $\RST{n}$, we conclude that the set $\left\{ \iStrig{k}\ \middle|\ k\in\{0,\ldots,m-1\}\right\}$ is a basis of $\RST{n}$.
\end{proof}

Since $U_{n-3k}=\rho\left((1)_{n-3k}\right)$ by definition, this leads to the following

\begin{cor}\label{cor1}
Let $n$ and $m$ be non-negative integers such that $m=\left\lfloor\frac{n}{3}\right\rfloor + \deltamod{1}{n}{3}$. For every $k\in\{0,\ldots,m-1\}$, let $S_k$ be a binary sequence of length $n$ such that $\partial^{k}S_k = (1)_{n-k}$. Then, the set $\left\{\rho\left(\Strig{S_0}\right),\ldots,\rho\left(\Strig{S_{m-1}}\right)\right\}$ is a basis of $\RST{n}$.
\end{cor}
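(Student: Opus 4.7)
The plan is to apply Theorem~\ref{thm1} directly, taking $\iStrig{k}:=\rho\left(\Strig{S_k}\right)$ for each $k\in\{0,\ldots,m-1\}$. Since $\rho=r^{2}+r+id_{\ST{n}}$ and $r^{3}=id_{\ST{n}}$, composing on the left with $r$ gives $r\rho=r^{3}+r^{2}+r=\rho$, so every element of the image of $\rho$ is fixed by $r$, i.e., belongs to $\RST{n}$. Hence the candidates $\iStrig{k}$ are rotationally symmetric and the remaining task is to verify that $\He^{k}\left(\iStrig{k}\right)=U_{n-3k}$, after which Theorem~\ref{thm1} yields the basis property immediately.

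The first main step is to establish the commutation $\He\circ r=r\circ\He$, where $r$ on the left acts on $\ST{n}$ and $r$ on the right acts on $\ST{n-3}$. I would verify this by a direct index computation from the formulas $r\left((a_{i,j})\right)=(a_{j-i+1,n-i+1})$ and $\He\left((a_{i,j})\right)=(a_{1+i,2+j})$: both sides, when evaluated at $(i,j)$, reduce to the same entry $a_{j-i+2,n-i}$. It follows that $\He\circ r^{2}=r^{2}\circ\He$, hence $\He\circ\rho=\rho\circ\He$ and, by iteration, $\He^{k}\circ\rho=\rho\circ\He^{k}$ on $\ST{n}$.

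The second main step is to identify $\He^{k}\left(\Strig{S_k}\right)$ explicitly. By hypothesis, the $(k+1)$-st row of $\Strig{S_k}$ is $\partial^{k}S_k=(1)_{n-k}$, so writing $\Strig{S_k}=(a_{i,j})$ we have $a_{k+1,j}=1$ for every $j\in\{k+1,\ldots,n\}$. The top row of $\He^{k}\left(\Strig{S_k}\right)=(a_{k+i,2k+j})_{1\le i\le j\le n-3k}$ consists precisely of the entries $a_{k+1,2k+1},\ldots,a_{k+1,n-k}$, all equal to $1$, and therefore $\He^{k}\left(\Strig{S_k}\right)=\Strig{(1)_{n-3k}}$.

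Combining the two steps, $\He^{k}\left(\rho\left(\Strig{S_k}\right)\right)=\rho\left(\He^{k}\left(\Strig{S_k}\right)\right)=\rho\left(\Strig{(1)_{n-3k}}\right)=U_{n-3k}$ by the very definition of $U_{n-3k}$. Thus the family $\left\{\rho\left(\Strig{S_k}\right)\right\}_{0\le k\le m-1}$ meets the hypotheses of Theorem~\ref{thm1} and forms a basis of $\RST{n}$. The only slightly delicate point I expect is the commutation $\He\circ r=r\circ\He$; it is a purely combinatorial index check, and once in hand the rest is a formal consequence.
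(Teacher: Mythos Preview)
Your proof is correct and follows essentially the same approach as the paper: reduce to Theorem~\ref{thm1} by showing $\rho\left(\Strig{S_k}\right)\in\RST{n}$ and $\He^{k}\left(\rho\left(\Strig{S_k}\right)\right)=\rho\left(\He^{k}\left(\Strig{S_k}\right)\right)=\rho\left(\Strig{(1)_{n-3k}}\right)=U_{n-3k}$. The paper's proof is more terse and simply asserts the commutation $\He^{k}\rho=\rho\He^{k}$ without the explicit index verification you provide, but the strategy is identical.
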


\begin{proof}
Let $k\in\{0,\ldots,m-1\}$. First, by definition of the linear map $\rho$, we know that $\rho\left(\Strig{S_k}\right)\in\RST{n}$. Moreover, since $\partial^{k}S_k=(1)_{n-k}$, it follows that
$$
\He^{k}\left(\rho\left(\Strig{S_k}\right)\right)=\rho\left(\He^{k}\left(\Strig{S_k}\right)\right)=\rho\left(\Strig{(1)_{n-3k}}\right)=U_{n-3k}.
$$
Therefore, from Theorem~\ref{thm1}, the set $\left\{\rho\left(\Strig{S_k}\right)\ \middle|\ k\in\{0,\ldots,m-1\}\right\}$ is a basis of $\RST{n}$.
\end{proof}

Binomial coefficients $\binom{a}{b}$ have been defined before for any non-negative integer $a$ and any integer $b$. Now, we extend this definition for negative integers $a$. For any integers $a$ and $b$, let $\binom{a}{b}$ denote the integers recursively defined by
\begin{itemize}
\item
$\binom{a}{0}=1$, for all $a\in\Z$,
\item
$\binom{0}{b}=0$, for all $b\in\N\setminus\{0\}$,
\item
$\binom{a}{b} = \binom{a-1}{b-1} + \binom{a-1}{b}$, for all $a,b\in\Z$.
\end{itemize}
When $a$ is non-negative, this corresponds with the previous definition. Moreover, for any negative integer $a$, the following equality holds
$$
\binom{a}{b} = \left\{\begin{array}{ll}
0 & \text{for } b<0,\\
(-1)^b\binom{b-a-1}{b} & \text{for } b\ge0.
\end{array}\right.
$$

\begin{rem}
There exist different extensions of the binomial coefficients $\binom{a}{b}$. For instance, for real numbers $a$ and non-negative integers $b$, one can consider the numbers
$$
\frac{1}{b!}\prod_{i=0}^{b-1}(a-i) = \frac{a(a-1)\cdots(a-b+1)}{b(b-1)\cdots 1}.
$$
Another extension, for complex numbers $a$ and $b$, is
$$
\lim_{x\to a}\lim_{y\to b}\frac{\Gamma(x+1)}{\Gamma(y+1)\Gamma(x-y+1)}
$$
where $\Gamma$ is the Gamma function. Note that all these extensions are compatible on the intersection of their domains.
\end{rem}

In this paper, we mainly consider the {\em infinite Pascal matrix modulo $2$}; i.e., the doubly indexed sequence $\left(\binomd{a}{b}\right)_{(a,b)\in\Z^2}$, where $\binomd{a}{b}$ is the value of $\binom{a}{b}\tpmod{2}$. The first few values of this doubly infinite sequence are shown in Figure~\ref{fig6}, where the terms $\binomd{a}{0}$ are in blue for all integers $a$ and the terms $\binomd{0}{b}$ are in red for all positive integers $b$.

\begin{figure}[htbp]
\centerline{\includegraphics[width=1.2\textwidth]{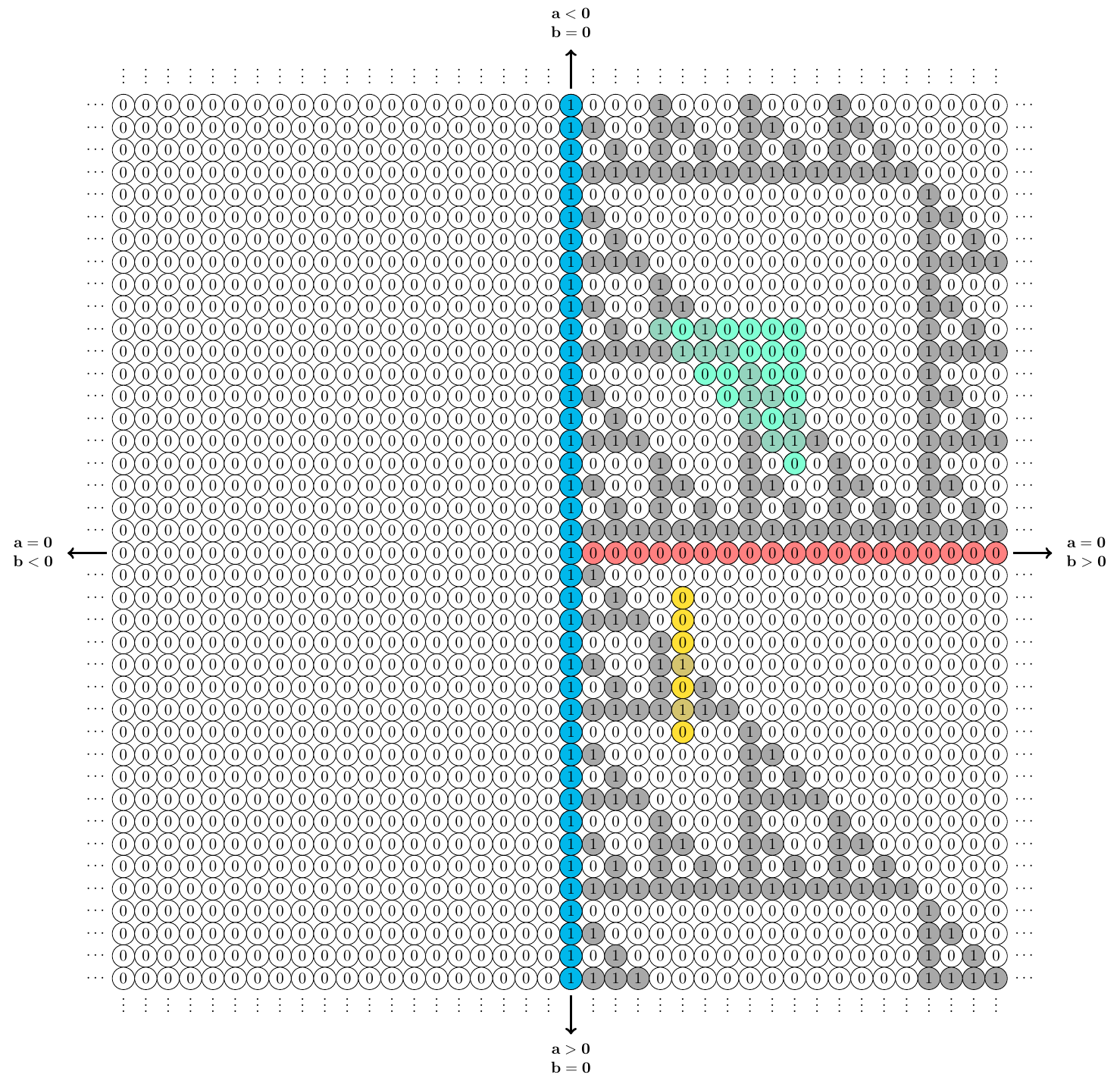}}
\caption{\normalsize The infinite Pascal matrix modulo $2$ with $\left(\binomd{a}{0}\right)_{a\in\Z}$ in blue, $\left(\binomd{0}{b}\right)_{b>0}$ in red, $\BS{7}{5}{2}$ in yellow and $\Strig{\BS{7}{10}{-10}}$ in green}\label{fig6}
\end{figure}

For any integers $k$ and $\ell$, let $\BS{n}{k}{\ell}$ be the subsequence of length $n$ of the $k$th column of the infinite Pascal matrix modulo $2$ defined by
$$
\BS{n}{k}{\ell} = \left(\binomd{\ell+j-1}{k}\right)_{1\le j\le n} = \left(\binomd{\ell}{k},\binomd{\ell+1}{k},\ldots\ldots,\binomd{\ell+n-1}{k}\right).
$$
For instance, the sequence $\BS{7}{5}{2}=\left(\binomd{j+1}{5}\right)_{1\le j\le 7}=(0001010)$ appears in yellow in Figure~\ref{fig6}. Since we have the local rule \eqref{eq13} in the infinite Pascal matrix modulo $2$, it is straightforward to obtain the following

\begin{prop}\label{prop18}
Let $k$ and $\ell$ be two integers and let $n$ be a positive integer. Then,
$$
\partial^{i}\BS{n}{k}{\ell} = \BS{n-i}{k-i}{\ell} = \left(\binomd{\ell+j-1}{k-i}\right)_{1\le j\le n-i},
$$
for all $i\in\{0,\ldots,n-1\}$ and
$$
\Strig{\BS{n}{k}{\ell}} = \left(\binomd{\ell+j-i}{k+1-i}\right)_{1\le i\le j\le n}.
$$
\end{prop}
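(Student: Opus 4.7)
The plan is to prove the two claimed identities in sequence, with the second being essentially a reindexing of the first applied to each row of the triangle.

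For the first identity, I would proceed by induction on $i\in\{0,\ldots,n-1\}$. The base case $i=0$ is simply the definition of $\BS{n}{k}{l}$. For the inductive step it suffices to establish the single-step formula $\partial\BS{m}{k'}{l}=\BS{m-1}{k'-1}{l}$ for arbitrary integers $k'$ and positive integers $m\ge 2$. Writing out the definition \eqref{eq15} of $\partial$, the $j$-th entry of $\partial\BS{m}{k'}{l}$ is
$$
\binomd{l+j-1}{k'}+\binomd{l+j}{k'}\pmod{2}.
$$
By the Pascal identity (which holds for all integers $k'$ and $l+j$ in our extended convention), $\binom{l+j}{k'}=\binom{l+j-1}{k'-1}+\binom{l+j-1}{k'}$, so reducing modulo $2$ the above expression equals $\binomd{l+j-1}{k'-1}$, which is the $j$-th entry of $\BS{m-1}{k'-1}{l}$. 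Composing this single step $i$ times gives the first identity.

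For the second identity, I would invoke the characterization of a Steinhaus triangle as the stack of iterated derivatives of its first row: if $\Strig{S}=(a_{i,j})_{1\le i\le j\le n}$, then the $i$-th row $(a_{i,i},a_{i,i+1},\ldots,a_{i,n})$ is precisely $\partial^{i-1}S$, as noted just after \eqref{eq15}. Applying this to $S=\BS{n}{k}{l}$ and using the first identity with $i$ replaced by $i-1$, the $i$-th row is
$$
\BS{n-i+1}{k-i+1}{l}=\left(\binomd{l+j-1}{k-i+1}\right)_{1\le j\le n-i+1}.
$$
Reindexing by setting $j'=i+j-1$ (so that $j'$ runs from $i$ to $n$ and $j=j'-i+1$), the entry $a_{i,j'}$ becomes $\binomd{l+(j'-i+1)-1}{k-i+1}=\binomd{l+j'-i}{k+1-i}$, which after renaming $j'$ back to $j$ is precisely the claimed expression.

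There is no real obstacle here: the mathematical content is the Pascal identity plus the definition of $\Strig{S}$ as iterated derivatives, and everything else is routine bookkeeping. The only point requiring a little care is keeping the two different roles of the column index $j$ straight — as an index of a length-$(n-i+1)$ derived sequence versus as a column index of the triangle in $\{i,\ldots,n\}$ — but the linear change of variables $j'=i+j-1$ handles this cleanly.
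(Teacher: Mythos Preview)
Your proof is correct and follows exactly the route the paper implies: the paper states the proposition as ``straightforward'' from the fact that the infinite Pascal matrix modulo $2$ satisfies the local rule \eqref{eq13}, and your argument simply spells out this computation via the Pascal identity and the identification of the $i$th row of $\Strig{S}$ with $\partial^{i-1}S$. There is nothing to add.
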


\noindent For instance, the Steinhaus triangle $\Strig{\BS{7}{10}{-10}}=\Strig{(0000110)}$ appears in green in Figure~\ref{fig6}.

We are now ready to give explicit bases of $\RST{n}$ using Corollary~\ref{cor1} with the binary sequences $\BS{n}{k}{\ell}$.

\begin{thm}\label{thm4}
Let $n$ and $m$ be non-negative integers such that $m=\left\lfloor\frac{n}{3}\right\rfloor + \delta_{1,(n\bmod{3})}$. For any integers $\ell_0,\ldots,\ell_{m-1}$, the set $\left\{ \rho\left(\Strig{\BS{n}{0}{\ell_0}}\right), \ldots , \rho\left(\Strig{\BS{n}{m-1}{\ell_{m-1}}}\right) \right\}$ is a basis of $\RST{n}$. Moreover, we have
\begin{equation}\label{eq17}
\rho\left(\Strig{\BS{n}{k}{\ell_k}}\right) = \left( \left(\binom{\ell_k+j-i}{k+1-i} + \binom{\ell_k+n-j}{k+i-j} + \binom{\ell_k+i-1}{k+j-n}\right) \tpmod{2} \right)_{1\le i\le j\le n},
\end{equation}
for all $k\in\{0,\ldots,m-1\}$.
\end{thm}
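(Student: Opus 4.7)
The plan is to reduce the statement to two already-proved ingredients: Corollary~\ref{cor1}, which provides a sufficient condition on the sequences $S_k$ for $\{\rho(\Strig{S_k})\}_{0\le k\le m-1}$ to be a basis of $\RST{n}$, and Proposition~\ref{prop18}, which describes the iterated derivatives of the columns $\BS{n}{k}{l}$ of the infinite Pascal matrix modulo $2$. The explicit formula \eqref{eq17} will then drop out of Proposition~\ref{prop16} combined with the second part of Proposition~\ref{prop18}.

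First, I would verify that $S_k := \BS{n}{k}{l_k}$ meets the hypothesis of Corollary~\ref{cor1}. By Proposition~\ref{prop18}, applied $k$ times, one has
$$
\partial^{k}\BS{n}{k}{l_k} = \BS{n-k}{0}{l_k} = \left(\binomd{l_k+j-1}{0}\right)_{1\le j\le n-k} = (1)_{n-k},
$$
since $\binom{a}{0}=1$ for every $a\in\Z$ by the extended definition. Corollary~\ref{cor1} then immediately gives that $\{\rho(\Strig{\BS{n}{k}{l_k}})\}_{0\le k\le m-1}$ is a basis of $\RST{n}$. This is the substantive part of the statement; the rest is a bookkeeping computation.

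For the explicit formula, I would start from the second assertion of Proposition~\ref{prop18}, which yields
$$
\Strig{\BS{n}{k}{l_k}} = \left(a_{i,j}\right)_{1\le i\le j\le n},\qquad a_{i,j} = \binomd{l_k+j-i}{k+1-i},
$$
and then apply Proposition~\ref{prop16}, which expresses $\rho$ as the sum of the three entries $a_{i,j}$, $a_{j-i+1,n-i+1}$ and $a_{n-j+1,n+i-j}$ modulo $2$. The only thing to check is that the two index substitutions $(i,j)\mapsto(j-i+1,n-i+1)$ and $(i,j)\mapsto(n-j+1,n+i-j)$ transform the pair $(l_k+j-i,\ k+1-i)$ appearing in the binomial coefficient of $a_{i,j}$ into $(l_k+n-j,\ k+i-j)$ and $(l_k+i-1,\ k+j-n)$, respectively. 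A direct substitution shows this: e.g.\ for the first map, the upper argument becomes $l_k+(n-i+1)-(j-i+1)=l_k+n-j$ and the lower becomes $k+1-(j-i+1)=k+i-j$; the second map is handled analogously.

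The main (very mild) obstacle is simply keeping the indices of the rotations straight and making sure that every binomial coefficient appearing in \eqref{eq17}, including those with negative upper or lower arguments, is interpreted via the extended definition introduced just before Proposition~\ref{prop18}; once that convention is in force, no case analysis on the sign of $k+1-i$, $k+i-j$ or $k+j-n$ is needed, and the two steps above combine into a single one-line verification.
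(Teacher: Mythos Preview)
Your proposal is correct and follows essentially the same approach as the paper: apply Proposition~\ref{prop18} to verify $\partial^{k}\BS{n}{k}{l_k}=(1)_{n-k}$, invoke Corollary~\ref{cor1} for the basis claim, and then combine the second part of Proposition~\ref{prop18} with Proposition~\ref{prop16} to obtain~\eqref{eq17}. Your write-up is in fact more detailed than the paper's, which merely asserts that~\eqref{eq17} ``directly comes from'' those two propositions without carrying out the index substitutions you spell out.
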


\begin{proof}
By Proposition~\ref{prop18}, we know that $\partial^{k}\BS{n}{k}{\ell_k} = \BS{n-k}{0}{\ell_k} = \left(\binomd{\ell_k+j-1}{0}\right)_{1\le j\le n-k}=(1)_{n-k}$, for all $k\in\{0,\ldots,m-1\}$. It follows from Corollary~\ref{cor1} that the set
$$
\left\{\rho\left(\Strig{\BS{n}{k}{\ell_k}}\right)\ \middle|\ k\in\{0,\ldots,m-1\}\right\}
$$
is a basis of $\RST{n}$. Moreover, the formula for $\rho\left(\Strig{\BS{n}{k}{\ell_k}}\right)$ from \eqref{eq17} directly follows from Proposition~\ref{prop18} and Proposition~\ref{prop16}.
\end{proof}

\begin{rem}
For any integer $l_0$, we have $\rho\left(\Strig{\BS{n}{0}{\ell_0}}\right)=\rho\left(\Strig{(1)_n}\right)=U_n$.
\end{rem}

For instance, for $n=10$ and $\ell_0=\ell_1=\ell_2=\ell_3=0$, we obtain the basis given in Table~\ref{tab1}.
\begin{table}[htbp]
$$
\begin{array}{|c|c|c|}
\hline\vspace{-2ex} & & \\
k & \BS{10}{k}{0} & \rho\left(\Strig{\BS{10}{k}{0}}\right) \\[1.5ex]
\hline
0 & (1111111111) & \iStrig{0}=\Strig{(0111111110)} \\
\hline
1 & (0101010101) & \iStrig{1}=\Strig{(1001010111)} \\
\hline
2 & (0011001100) & \iStrig{2}=\Strig{(0001001000)} \\
\hline
3 & (0001000100) & \iStrig{3}=\Strig{(0010001100)} \\
\hline
\end{array}
$$
\caption{A basis of $\RST{10}$}\label{tab1}
\end{table}
All the rotationally symmetric Steinhaus triangles of size $10$ are depicted in Figure~\ref{fig14}, where the elements of the basis $\left\{\iStrig{0},\iStrig{1},\iStrig{2},\iStrig{3}\right\}$ are in red and for every $\Strig{}\in\RST{10}$, the coordinate vector $(x_0,x_1,x_2,x_3)$ of $\Strig{}=x_0\iStrig{0}+x_1\iStrig{1}+x_2\iStrig{2}+x_3\iStrig{3}$ is given.

\begin{figure}[htbp]
\centerline{
\includegraphics[width=\textwidth]{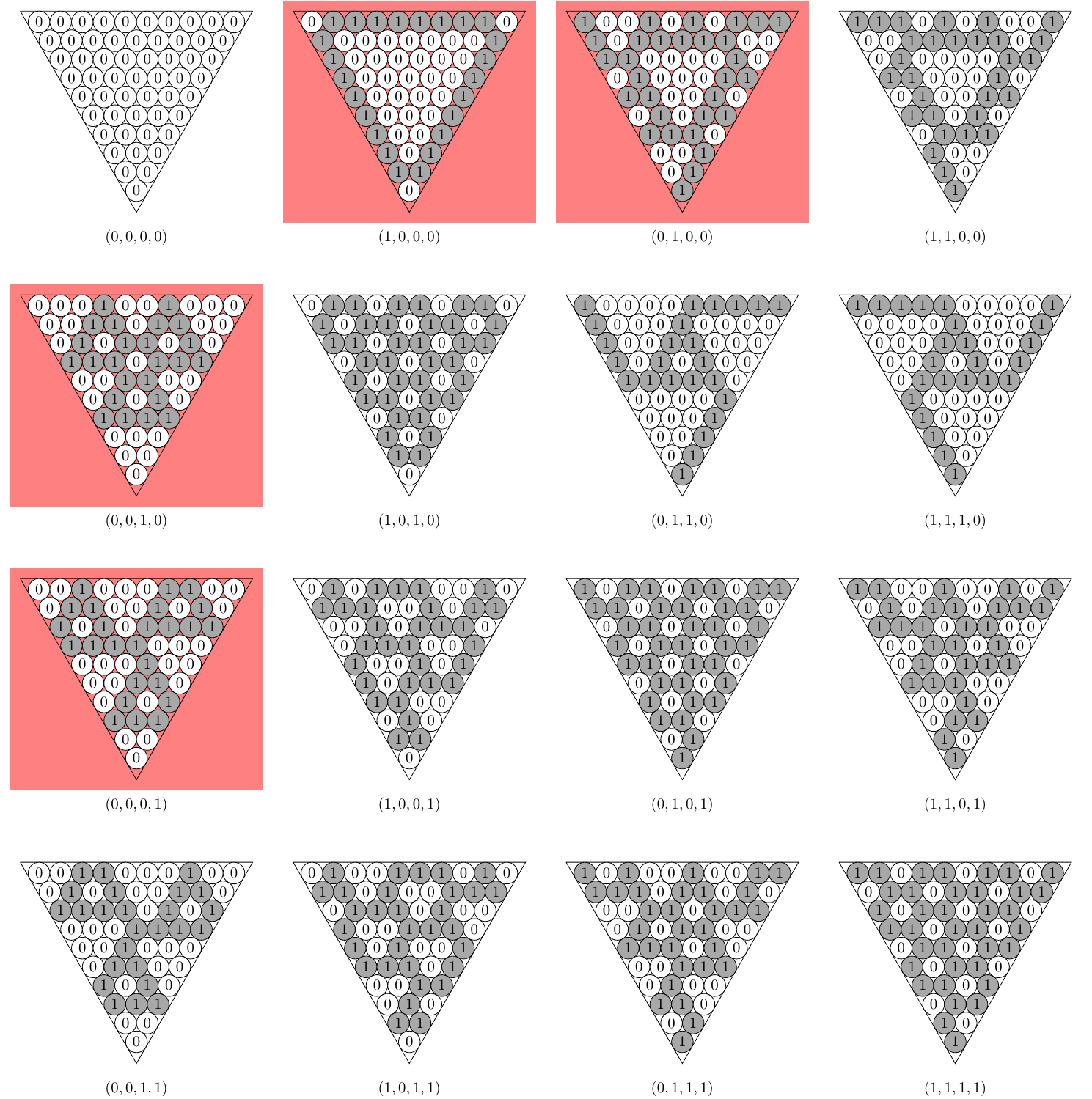}}
\caption{The $16$ triangles of $\RST{10}$ where the $4$ red triangles form a basis}\label{fig14}
\end{figure}

\section{Horizontally symmetric Steinhaus triangles}\label{sec:3}

In this section, we characterize the horizontally symmetric Steinhaus triangles and we give a generating index set of $\HST{n}$. This allows us to obtain bases of $\HST{n}$.

\subsection{Characterizations of \texorpdfstring{$\HST{n}$}{HST(n)}}

A binary sequence $S=\left(a_j\right)_{1\le j\le n}$ is said to be {\em symmetric} if $a_{n-j+1}=a_j$ for all $j\in\{1,\ldots,n\}$. For instance, the sequence $(010010010)$ is symmetric. As shown in the following result, the symmetry is preserved under the derivation process.

\begin{prop}\label{prop1}
The binary sequence $S$ is symmetric if and only if $\partial S$ is symmetric and $\sigma_2(\partial S)=0$.
\end{prop}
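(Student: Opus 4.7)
The plan is to prove both implications by reducing the statement to a statement about reversed sequences and their derivatives.

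For the forward direction, I would proceed by direct computation. If $S=(a_j)_{1\le j\le n}$ is symmetric, then from the definition \eqref{eq15} of $\partial S=(b_j)_{1\le j\le n-1}$ we get
$$
b_{n-j} \equiv a_{n-j}+a_{n-j+1} \equiv a_{j+1}+a_j \equiv b_j \pmod{2},
$$
so $\partial S$ is symmetric. Moreover, the telescoping sum gives
$$
\sigma_2(\partial S) \equiv \sum_{j=1}^{n-1}(a_j+a_{j+1}) \equiv a_1+a_n \pmod{2},
$$
which is $0$ by the symmetry of $S$ (since $a_n=a_1$). This disposes of one direction with no subtlety.

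For the reverse direction, let $\tilde S=(a_{n-j+1})_{1\le j\le n}$ denote the reverse of $S$. A one-line computation shows
$$
(\partial \tilde S)_j \equiv a_{n-j+1}+a_{n-j} \pmod{2},
$$
so the symmetry of $\partial S$ (which says $a_j+a_{j+1}\equiv a_{n-j}+a_{n-j+1}\bmod 2$) is exactly the statement $\partial \tilde S = \partial S$. Applying part~\ref{item2}) of the previous proposition — or equivalently noting that $S$ and $\tilde S$ have the same derivative, hence $S-\tilde S$ is constant — I conclude that $\tilde S = S + (c)_n$ for some $c\in\{0,1\}$. In other words, $a_{n-j+1} \equiv a_j + c \pmod{2}$ for all $j$.

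The final step, which is the only place requiring care, is to pin down $c=0$. Substituting $j=1$ gives $a_n\equiv a_1+c\bmod 2$, so the telescoping computation from the forward direction yields
$$
\sigma_2(\partial S) \equiv a_1+a_n \equiv c \pmod{2}.
$$
The hypothesis $\sigma_2(\partial S)=0$ therefore forces $c=0$, so $\tilde S=S$ and $S$ is symmetric. The main (minor) obstacle is recognising that a single scalar constant $c$ obstructs symmetry and that precisely the parity condition $\sigma_2(\partial S)=0$ detects this obstruction; everything else is routine telescoping.
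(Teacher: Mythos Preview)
Your proof is correct. The forward direction is identical to the paper's. For the converse, the paper argues more directly: from \eqref{eq2} and $\sigma_2(\partial S)=0$ it gets $a_1=a_n$, and then uses the identity
\[
b_i+b_{n-i}\equiv (a_i+a_{n-i+1})+(a_{i+1}+a_{n-i})\pmod{2}
\]
together with the symmetry of $\partial S$ to run a one-line induction on $i$ showing $a_i=a_{n-i+1}$. Your route is more structural: you recast ``$\partial S$ symmetric'' as $\partial\tilde S=\partial S$ and then invoke the fact (equivalent to part~\ref{item2}) of the preceding proposition, or simply that $\ker\partial$ consists of the constant sequences) that two sequences with the same derivative differ by a constant $(c)_n$, with $c$ identified by the parity condition. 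This buys a cleaner conceptual picture---symmetry of $S$ is precisely $\tilde S=S$, and the obstruction lives in $\ker\partial$---at the mild cost of relying on the antiderivative machinery, whereas the paper's induction is entirely self-contained.
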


\begin{proof}
Let $S=\left(a_j\right)_{1\le j\le n}$ and $\partial S=\left(b_j\right)_{1\le j\le n-1}$. By the definition of $\partial S$, we have
\begin{equation}\label{eq1}
b_{i} + b_{(n-1)-i+1} \begin{array}[t]{l}
 = b_{i} + b_{n-i} \\
 \equiv (a_{i}+a_{i+1})+(a_{n-i}+a_{n-i+1}) \pmod{2} \\
 = (a_{i}+a_{n-i+1})+(a_{i+1}+a_{n-(i+1)+1}), \\
\end{array}
\end{equation}
for all $i\in\{1,\ldots,n-1\}$. Moreover, the sum $\sigma(\partial S)$ satisfies
\begin{equation}\label{eq2}
\sigma(\partial S) = \sum_{i=1}^{n-1}b_i \equiv \sum_{i=1}^{n-1}(a_{i}+a_{i+1}) \equiv a_1+a_n \pmod{2}.
\end{equation}
First, if $S$ is symmetric, we deduce from \eqref{eq1} that $\partial S$ is also symmetric and from \eqref{eq2} that $\sigma(\partial S)$ is even. Conversely, if we suppose that $\partial S$ is symmetric of even sum $\sigma\left(\partial S\right)$, we know from \eqref{eq2} that $a_1=a_n$. Using this equality and \eqref{eq1}, we can prove, by induction on $i$, that $a_i=a_{n-i+1}$, for all $i\in\{1,\ldots,n\}$. This completes the proof.
\end{proof}


It follows that the horizontal symmetry of a Steinhaus triangle is only related to the symmetry of its first row.

\begin{prop}\label{prop3}
The Steinhaus triangle $\Strig{S}$ is horizontally symmetric if and only if the sequence $S$ is symmetric.
\end{prop}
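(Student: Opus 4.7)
The plan is to prove the two directions separately, with the forward direction being immediate from the definition of $h$ and the reverse direction following by iterating Proposition~\ref{prop1}.

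For the forward implication, suppose $\Strig{S}=(a_{i,j})_{1\le i\le j\le n}$ is horizontally symmetric. By the definition of $h$, this means $a_{i,j}=a_{i,n-j+i}$ for all $1\le i\le j\le n$. Specializing to $i=1$ gives $a_{1,j}=a_{1,n-j+1}$ for all $j\in\{1,\ldots,n\}$, which is precisely the statement that $S=(a_{1,j})_{1\le j\le n}$ is symmetric.

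For the converse, observe that the condition $a_{i,j}=a_{i,n-j+i}$ for all $i\le j\le n$ is exactly the assertion that the $i$th row $(a_{i,i},a_{i,i+1},\ldots,a_{i,n})$ of $\Strig{S}$, which is the sequence $\partial^{i-1}S$, is a symmetric binary sequence. So horizontal symmetry of $\Strig{S}$ is equivalent to the symmetry of $\partial^{i-1}S$ for every $i\in\{1,\ldots,n\}$. Assuming $S$ is symmetric, I would prove by induction on $i\geq 0$ that $\partial^{i}S$ is symmetric. The base case $i=0$ is the hypothesis. For the inductive step, Proposition~\ref{prop1} applied to $\partial^{i}S$ tells us that if $\partial^{i}S$ is symmetric then $\partial(\partial^{i}S)=\partial^{i+1}S$ is symmetric (and moreover has even sum, though that stronger information is not needed here). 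This completes the induction and hence shows that every row of $\Strig{S}$ is symmetric, i.e., $\Strig{S}\in\HST{n}$.

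There is no real obstacle: the content of the proposition is packaged entirely in Proposition~\ref{prop1}, which provides the key closure property of symmetric sequences under $\partial$. The only subtlety worth flagging is the correct identification of ``the $i$th row of $\Strig{S}$ is symmetric'' with the invariance condition $a_{i,j}=a_{i,n-j+i}$ coming from the explicit formula for $h$, so the proof should be written so this translation is visible.
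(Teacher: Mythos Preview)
Your proof is correct and follows essentially the same approach as the paper: the forward direction is immediate from the definition of $h$, and the converse iterates Proposition~\ref{prop1} to conclude that every row $\partial^{i}S$ is symmetric. The paper's version is terser but the argument is the same.
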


\begin{proof}
First, it is clear that if $\Strig{S}$ is horizontally symmetric, then $S$ is a symmetric sequence. Conversely, if we suppose that $S$ is symmetric, we know from Proposition~\ref{prop1} that the iterated derived sequences $\partial^{i}S$ are symmetric for all $i\in\{0,\ldots,n-1\}$. Therefore, the Steinhaus triangle $\Strig{S}$ is horizontally symmetric.
\end{proof}

We now show that the horizontal symmetry of a Steinhaus triangle only depends on the values of the middle terms of its rows of odd lengths.

\begin{prop}\label{prop11}
The Steinhaus triangle $\left(a_{i,j}\right)_{1\le i\le j\le n}$ of size $n$ is horizontally symmetric if and only if $a_{n-2i,n-i}=0$ for all $i\in\left\{0,\ldots,\left\lfloor\frac{n}{2}\right\rfloor-1\right\}$.
\end{prop}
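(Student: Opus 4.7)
My plan is to combine Proposition~\ref{prop3} (the triangle is horizontally symmetric iff its top row is a symmetric sequence, equivalently iff every row is a symmetric sequence) with Proposition~\ref{prop1} ($S$ is symmetric iff $\partial S$ is symmetric and $\sigma_{2}(\partial S)=0$), together with the elementary observation that a symmetric sequence of odd length $2m+1$ has sum $\equiv$ its middle term $\pmod{2}$, while a symmetric sequence of even length automatically has even sum by pairing.

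The geometric translation I will use is: row $k$ of the triangle, namely $(a_{k,j})_{k\le j\le n}$, has length $n-k+1$, so it has odd length exactly when $k\equiv n\pmod{2}$, in which case its middle term is $a_{k,(k+n)/2}$. Setting $k=n-2i$ shows these middles are precisely the entries $a_{n-2i,\,n-i}$ that appear in the proposition; the restriction $i\in\{0,\dots,\lfloor n/2\rfloor -1\}$ is equivalent to $k\ge 2$, i.e., to all odd-length rows except possibly the top one.

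For the forward direction, if the triangle is horizontally symmetric then every row is a symmetric sequence. For each $i\in\{0,\dots,\lfloor n/2\rfloor -1\}$ the row $n-2i-1$ is a well-defined symmetric sequence of length at least $2$, and row $n-2i=\partial(\text{row } n-2i-1)$, so $\sigma_{2}(\text{row } n-2i)=0$ by Proposition~\ref{prop1}. Since row $n-2i$ is itself symmetric of odd length $2i+1$, this sum equals its middle term $a_{n-2i,\,n-i}$, which therefore vanishes.

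For the converse I would induct downward on $k$ from $n$ to $1$, showing that row $k$ is a symmetric sequence. The base $k=n$ is a singleton, hence trivially symmetric. For the inductive step, assume row $k+1$ is symmetric; since row $k+1=\partial(\text{row } k)$, Proposition~\ref{prop1} reduces the symmetry of row $k$ to the vanishing of $\sigma_{2}(\text{row } k+1)$. If row $k+1$ has even length this is automatic; if it has odd length, then $k+1=n-2i$ for some $i\in\{0,\dots,\lfloor n/2\rfloor -1\}$, and the sum equals the middle term $a_{n-2i,\,n-i}$, which is zero by hypothesis. Applying Proposition~\ref{prop3} to the now-symmetric first row concludes. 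The only non-trivial aspect of executing this plan is the routine parity bookkeeping needed to confirm that the range $i\in\{0,\dots,\lfloor n/2\rfloor -1\}$ matches exactly the indices of odd-length rows $k+1\in\{2,\dots,n\}$ in both parities of $n$.
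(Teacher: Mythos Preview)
Your proposal is correct and follows essentially the same approach as the paper: both directions rest on Proposition~\ref{prop1}, Proposition~\ref{prop3}, and the observation (Lemma~\ref{lem6} in the paper) that the parity of the sum of a symmetric sequence is determined by its middle term when the length is odd and is automatically even otherwise. The only difference is in the packaging of the converse: the paper inducts on the size $n$ by applying the proposition to the subtriangle $\Strig{\partial S}$ of size $n-1$, whereas you induct downward on the row index $k$ within the fixed triangle; these two inductions unfold to the same step-by-step argument, and your version is if anything slightly cleaner since it avoids having to re-match the hypothesis for the smaller triangle.
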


The proof is based on the following lemma which is straightforward from the definition of a symmetric sequence.

\begin{lem}\label{lem6}
Let $S=(a_1,a_2,\ldots,a_n)$ be a symmetric binary sequence of length $n$. Then,
$$
\sigma_2(S) = \left\{\begin{array}{ll}
0 & \text{if } n \text{ is even},\\
a_{\frac{n+1}{2}} & \text{if } n \text{ is odd}.
\end{array}\right.
$$
\end{lem}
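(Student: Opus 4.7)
The plan is to exploit the pairing of terms induced by the symmetry relation $a_j = a_{n-j+1}$, which collapses the sum modulo $2$ to either $0$ or the single unpaired middle term.

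First I would split the sum $\sigma(S) = \sum_{j=1}^{n} a_j$ according to the parity of $n$. When $n$ is even, I would rewrite
\[
\sigma(S) = \sum_{j=1}^{n/2} \left(a_j + a_{n-j+1}\right).
\]
By the symmetry hypothesis, $a_j = a_{n-j+1}$ for every $j$, so each summand equals $2 a_j$, which is $0 \bmod 2$. Hence $\sigma_2(S) = 0$ in the even case.

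When $n$ is odd, the pairing leaves the middle index $\frac{n+1}{2}$ by itself, so I would write
\[
\sigma(S) = a_{\frac{n+1}{2}} + \sum_{j=1}^{(n-1)/2} \left(a_j + a_{n-j+1}\right).
\]
Again the symmetry forces each paired sum to be even, so reducing modulo $2$ yields $\sigma_2(S) = a_{\frac{n+1}{2}}$, as required.

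I expect no real obstacle here: the statement is essentially a one-line observation about pairing symmetric terms. The only thing to be careful about is the bookkeeping of indices when $n$ is odd, ensuring that the unique central term $a_{(n+1)/2}$ is handled separately from the pairs $(a_j, a_{n-j+1})$ with $j < (n+1)/2$, and observing that $n - j + 1 \neq j$ whenever $j \neq (n+1)/2$, so the pairs are genuinely disjoint.
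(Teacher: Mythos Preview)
Your proof is correct and is exactly the elementary pairing argument the paper has in mind: the paper does not actually write out a proof of this lemma, declaring it ``straightforward from the definition of a symmetric sequence,'' and your argument is the natural way to cash that out.
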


\begin{proof}[Proof of Proposition~\ref{prop11}]
First, suppose that the Steinhaus triangle $\Strig{S}=\left(a_{i,j}\right)_{1\le i\le j\le n}$ of size $n$ is horizontally symmetric. Then, since the iterated derived sequences $\partial^{i}S=\left(a_{i+1,j}\right)_{i+1\le j\le n}$ are symmetric for all $i\in\{0,\ldots,n-1\}$, we know from Proposition~\ref{prop1} that $\sigma_2\left(\partial^{i}S\right)=0$ for all $i\in\{1,\ldots,n-1\}$. Moreover, for any $i\in\{1,2,\ldots,n\}$, the sequence $\partial^{n-i}S$ is of length $i$. It follows from Lemma~\ref{lem6} that $a_{n-2i,n-i}=\sigma_2\left(\partial^{n-(2i+1)}S\right)=0$, for all $i\in\left\{0,\ldots,\left\lfloor\frac{n}{2}\right\rfloor-1\right\}$.

Conversely, suppose that $a_{n-2i,n-i}=0$, for all $i\in\left\{0,\ldots,\left\lfloor\frac{n}{2}\right\rfloor-1\right\}$. We proceed by induction on $n$. For $n=1$, the result is clear since any Steinhaus triangle of size $n=1$ is horizontally symmetric. Suppose that the result is true for any Steinhaus triangle of size strictly less than $n$. We consider the subtriangle $\Strig{\partial S}=\left(a_{i+1,j+1}\right)_{1\le i\le j\le n-1}$ of size $n-1$. Since the identities $a_{1+(n-1)-2i,1+(n-1)-i}=0$ hold for all $i\in\left\{0,\ldots,\left\lfloor\frac{n-1}{2}\right\rfloor-1\right\}$, it follows that $\Strig{\partial S}$ is horizontally symmetric by the induction hypothesis. Therefore, the sequence $\partial^{i}S$ is symmetric for all $i\in\{1,\ldots,n-1\}$. If $n$ is odd, then $\partial S$ is symmetric and $\sigma_2\left(\partial S\right)=0$ by Lemma~\ref{lem6}. Otherwise, if $n$ is even, then again by Lemma~\ref{lem6} we have that $\partial S$ is symmetric and $\sigma_2\left(\partial S\right)=a_{2,\frac{n}{2}+1}=0$. It follows from Proposition~\ref{prop1} that $S$ is symmetric. Therefore, in any case, the Steinhaus triangle $\Strig{S}$ is horizontally symmetric by Proposition~\ref{prop3}.
\end{proof}

\subsection{Generating index set of \texorpdfstring{$\HST{n}$}{HST(n)}}

\begin{prop}\label{prop4}
Let $n$ be a non-negative integer. The set
$$
G_H := \left\{(1,j)\ \middle|\ j\in\left\{1,\ldots,\left\lceil\frac{n}{2}\right\rceil\right\}\right\}
$$
is a generating index set of $\HST{n}$.
\end{prop}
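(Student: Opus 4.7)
The plan is to reduce the statement to the characterisation of horizontally symmetric Steinhaus triangles already proved in Proposition~\ref{prop3}. Recall that a horizontally symmetric Steinhaus triangle is exactly $\Strig{S}$ for a symmetric binary sequence $S = (a_{1,1},\ldots,a_{1,n})$, and a symmetric sequence of length $n$ is tautologically determined by its left half $(a_{1,1},\ldots,a_{1,\lceil n/2\rceil})$, because the remaining entries are forced by $a_{1,n-j+1} = a_{1,j}$.

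More precisely, I would argue as follows. First, consider the linear map
$$
\varphi : {\{0,1\}}^{G_H} \longrightarrow \HST{n}
$$
that sends a tuple $(x_1,\ldots,x_{\lceil n/2\rceil})$ to the Steinhaus triangle $\Strig{S}$, where $S = (a_1,\ldots,a_n)$ is the unique symmetric binary sequence of length $n$ with $a_j = x_j$ for $1\le j\le \lceil n/2\rceil$ (set $a_j = x_{n-j+1}$ for $j > \lceil n/2 \rceil$). By Proposition~\ref{prop3}, $\Strig{S}$ is indeed horizontally symmetric, so $\varphi$ is well defined, and clearly linear. It is immediate that $\pi_{G_H}\circ \varphi$ is the identity on $\{0,1\}^{G_H}$.

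For the reverse composition, let $\Strig{S} \in \HST{n}$ with $S = (a_1,\ldots,a_n)$. Proposition~\ref{prop3} again tells us that $S$ is symmetric, so $S$ is determined by its first $\lceil n/2\rceil$ entries via the symmetry relation; but these entries are exactly $\pi_{G_H}(\Strig{S})$, and applying $\varphi$ to them recovers $\Strig{S}$. Thus $\varphi \circ \pi_{G_H} = \mathrm{id}_{\HST{n}}$, and $\pi_{G_H}$ is the inverse isomorphism. Hence $G_H$ is a generating index set of $\HST{n}$, with $\dim\HST{n} = |G_H| = \lceil n/2 \rceil$ as a bonus consequence.

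There is no real obstacle: the content is entirely contained in the equivalence ``horizontally symmetric Steinhaus triangle $\Leftrightarrow$ symmetric first row'' proved in Proposition~\ref{prop3}, together with the trivial bijection between symmetric sequences of length $n$ and their length-$\lceil n/2\rceil$ prefixes. The only mild subtlety is the parity of $n$: when $n$ is odd, the middle entry $a_{(n+1)/2}$ lies in the prefix and is its own mirror, while when $n$ is even the prefix contains $n/2$ entries and the mirror pairing is a perfect matching. In both cases the formula $\lceil n/2 \rceil$ gives the correct count, so the argument above goes through uniformly.
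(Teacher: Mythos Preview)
Your proof is correct and follows essentially the same approach as the paper's own proof: both invoke Proposition~\ref{prop3} to identify $\HST{n}$ with the space of symmetric binary sequences of length $n$, and then observe that such a sequence is determined by its first $\lceil n/2\rceil$ entries. You simply spell out the inverse map $\varphi$ and verify both compositions explicitly, whereas the paper dispatches this last step with the word ``obviously.''
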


\begin{proof}
From Proposition~\ref{prop3}, we deduce that $\HST{n}$ is isomorphic to the vector space of symmetric binary sequences of length $n$. Obviously, a symmetric sequence of length $n$ is entirely determined by its $\left\lceil\frac{n}{2}\right\rceil$ first terms.
\end{proof}

Since the dimension of $\HST{n}$ corresponds to the cardinality of the generating index set $G_H$, it is straightforward to obtain the following

\begin{cor}\label{cor13}
For all non-negative integers $n$, we have $\dim\HST{n} = \left\lceil\frac{n}{2}\right\rceil$.
\end{cor}

\subsection{Bases of \texorpdfstring{$\HST{n}$}{HST(n)}}

Let $n$ be a positive integer. For any positive integer $k\in\{1,\ldots,n\}$, we denote by $\ES{n}{k}$ the binary sequence of length $n$ consisting only of zeroes, except at position $k$. In other words, we have
$$
\ES{n}{k} = \left(\binomd{0}{j-k}\right)_{1\le j\le n},
$$
for all $k\in\{1,\ldots,n\}$. Since we have the local rule of the infinite Pascal matrix modulo $2$, we know that
$$
\Strig{\ES{n}{k}} = \left(\binomd{i-1}{j-k}\right)_{1\le i\le j\le n},
$$
for all integers $k\in\{1,\ldots,n\}$.

\begin{prop}
Let $n$ be a positive integer. Then, the set $\left\{\iStrig{1},\ldots,\iStrig{\left\lceil\frac{n}{2}\right\rceil}\right\}$ is a basis of $\HST{n}$, where
$$
\iStrig{k} = \Strig{\left(\ES{n}{k} + \ES{n}{n-k+1}\right)} = \left(\left(\binom{i-1}{j-k}+\binom{i-1}{j-n+k-1}\right)\tpmod{2}\right)_{1\le i\le j\le n},
$$
for all $k\in\left\{1,\ldots,\left\lfloor\frac{n}{2}\right\rfloor\right\}$, and when $n$ is odd
$$
\iStrig{\frac{n+1}{2}}=\Strig{\ES{n}{\frac{n+1}{2}}} = \left(\binomd{i-1}{j-\frac{n+1}{2}}\right)_{1\le i\le j\le n}.
$$
\end{prop}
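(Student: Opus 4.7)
My plan is to proceed in two short steps, leveraging Propositions~\ref{prop3} and~\ref{prop4}.

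First I would check that each proposed triangle $\iStrig{k}$ lies in $\HST{n}$. By the very definition of $\ES{n}{k}$, the sequence $\ES{n}{k}+\ES{n}{n-k+1}$ has a $1$ exactly at positions $k$ and $n-k+1$ (and a single $1$ at position $(n+1)/2$ for the middle case when $n$ is odd). Both patterns are invariant under $j\mapsto n-j+1$, so the first row of $\iStrig{k}$ is a symmetric sequence. By Proposition~\ref{prop3}, this forces $\iStrig{k}\in\HST{n}$.

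Next I would use the generating index set $G_H=\{(1,j)\mid 1\le j\le \lceil n/2\rceil\}$ from Proposition~\ref{prop4}. Since $\pi_{G_H}$ is an isomorphism $\HST{n}\to\{0,1\}^{G_H}$, it suffices to show that the images $\pi_{G_H}(\iStrig{k})$ form a basis of $\{0,1\}^{G_H}$. For $k\in\{1,\dots,\lfloor n/2\rfloor\}$ the first row of $\iStrig{k}$ has its only $1$'s (in the left half) at position $k$, because the companion position $n-k+1$ satisfies $n-k+1>\lceil n/2\rceil$ (easy check by parity of $n$); for odd $n$ and $k=\frac{n+1}{2}$, the unique $1$ in the first row sits at position $\frac{n+1}{2}=\lceil n/2\rceil$. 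Therefore the vectors $\pi_{G_H}(\iStrig{1}),\dots,\pi_{G_H}(\iStrig{\lceil n/2\rceil})$ are exactly the standard basis of $\{0,1\}^{G_H}$, so they form a basis and so do the $\iStrig{k}$.

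Finally, the explicit formula for the entries of $\iStrig{k}$ follows immediately from Proposition~\ref{prop18} applied to $\BS{n}{0}{1-k}=\ES{n}{k}$ (and likewise $\BS{n}{0}{k-n}=\ES{n}{n-k+1}$) together with the linearity of the map $S\mapsto\Strig{S}$. There is really no obstacle here: the only thing to watch is the indexing, namely confirming that the ``mirror'' index $n-k+1$ lies strictly to the right of $\lceil n/2\rceil$ when $k\le\lfloor n/2\rfloor$, which is the ingredient that makes the projections onto $G_H$ collapse to the standard basis.
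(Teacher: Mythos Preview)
Your argument for the basis claim is correct and follows essentially the same route as the paper: show each $\iStrig{k}$ has symmetric first row (hence lies in $\HST{n}$ by Proposition~\ref{prop3}), then observe that $\pi_{G_H}(\iStrig{k})=\ES{\lceil n/2\rceil}{k}$ is the $k$th standard basis vector of $\{0,1\}^{G_H}$, and conclude via Proposition~\ref{prop4}.

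One small slip in the final paragraph: your claimed identity $\BS{n}{0}{1-k}=\ES{n}{k}$ is false. By definition $\BS{n}{0}{l}=\bigl(\binomd{l+j-1}{0}\bigr)_{1\le j\le n}$, and since $\binom{a}{0}=1$ for every integer $a$, this is the constant sequence $(1)_n$ regardless of $l$. The sequence $\ES{n}{k}=\bigl(\binomd{0}{j-k}\bigr)_{1\le j\le n}$ is a \emph{row} of the Pascal matrix, not a column, so it is not of the form $\BS{n}{k'}{l}$. The explicit entry formula for $\iStrig{k}$ does not come from Proposition~\ref{prop18}; it comes directly from the identity $\Strig{\ES{n}{k}}=\bigl(\binomd{i-1}{j-k}\bigr)_{1\le i\le j\le n}$, which the paper records just before the proposition (an immediate instance of \eqref{eq14}), together with linearity of $S\mapsto\Strig{S}$.
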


\begin{proof}
First, by definition, it is clear that $\iStrig{k}\in\HST{n}$, for all $k\in\left\{1,\ldots,\left\lceil\frac{n}{2}\right\rceil\right\}$. Now we consider the set $G_H := \left\{(1,j)\ \middle|\ j\in\left\{1,\ldots,\left\lceil\frac{n}{2}\right\rceil\right\}\right\}$. Since for all $k\in\left\{1,\ldots,\left\lceil\frac{n}{2}\right\rceil\right\}$,
$$
\pi_{G_H} \left(\iStrig{k}\right) = \ES{\left\lceil\frac{n}{2}\right\rceil}{k},
$$
it follows that the set $\left\{ \pi_{G_H}\left(\iStrig{1}\right) , \ldots , \pi_{G_H}\left(\iStrig{\left\lceil\frac{n}{2}\right\rceil}\right) \right\}$ is a basis of $\Zd^{|G_H|}$. Moreover, since $G_H$ is a generating index set of $\HST{n}$ by Proposition~\ref{prop4}, we conclude that the set $\left\{\iStrig{1},\ldots,\iStrig{\left\lceil\frac{n}{2}\right\rceil}\right\}$ is a basis of $\HST{n}$.
\end{proof}

At the end of this section, we show that the generating index set $G_H$ also permits us to obtain a basis from the sequences $\BS{n}{k}{\ell}=\left(\binomd{\ell+j-1}{k}\right)_{1\le j\le n}$ introduced in Section~\ref{sec:4}.

\begin{lem}\label{lem3}
Let $k$ and $n$ be two positive integers of the same parity. Then, the $n$-length sequence $\BS{n}{k}{\ell}$ is symmetric for $\ell=\frac{k-n}{2}$. Moreover, the $k+2$ middle terms of $\BS{n}{k}{\frac{k-n}{2}}$ are of the form $(\ast\cdots\ast1\underbrace{00\cdots00}_{k}1\ast\cdots\ast)$, where $\ast$ stands for any element of $\Zd$.
\end{lem}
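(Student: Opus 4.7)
The plan is to work directly with the extended binomial coefficients that define $\BS{n}{k}{l}$, reducing both assertions to a single reflection identity in the infinite Pascal matrix modulo $2$.

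For the symmetry part, I would write out the two ends of the comparison. By definition, the $j$-th term of $\BS{n}{k}{l}$ is $\binomd{l+j-1}{k}$, so the sequence is symmetric precisely when
\[
\binomd{l+j-1}{k} = \binomd{l+n-j}{k}
\]
for every $j\in\{1,\dots,n\}$. Setting $a:=l+j-1$, the second argument becomes $k-1-a$ exactly when $2l+n-1=k-1$, that is, when $l=(k-n)/2$ (which is an integer because $k$ and $n$ share the same parity). Hence everything reduces to the key claim
\[
\binomd{a}{k}=\binomd{k-1-a}{k}\quad\text{for every }a\in\Z.
\]

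To establish this identity, I would split into three cases using the extended definition recalled before the lemma. If $a<0$, the formula $\binom{a}{k}=(-1)^k\binom{k-1-a}{k}$ gives equality modulo $2$, and $k-1-a>k-1\ge 0$, so the right-hand side is a standard binomial. If $0\le a\le k-1$, both $\binom{a}{k}$ and $\binom{k-1-a}{k}$ vanish because the lower index exceeds the upper (and $0\le k-1-a\le k-1$). Finally, if $a\ge k$, then $k-1-a<0$ and the same negative-argument formula, applied in reverse, gives $\binom{k-1-a}{k}=(-1)^k\binom{a}{k}$, again equal modulo $2$. This closes the symmetry part.

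For the middle terms, I note that when $l=(k-n)/2$ the index $l+j-1$ runs from $(k-n)/2$ up to $(k+n)/2-1$. The $k+2$ positions centered at $(n+1)/2$ are $j=\tfrac{n-k}{2},\tfrac{n-k}{2}+1,\ldots,\tfrac{n-k}{2}+k+1$; for these, $l+j-1$ takes the consecutive values $-1,0,1,\ldots,k$. It only remains to evaluate $\binomd{-1}{k},\binomd{0}{k},\ldots,\binomd{k}{k}$. From the definition, $\binom{-1}{k}=(-1)^k$, so $\binomd{-1}{k}=1$; for $0\le a\le k-1$ we have $\binomd{a}{k}=0$; and $\binomd{k}{k}=1$. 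This produces precisely the pattern $(1,\underbrace{0,\dots,0}_{k},1)$ sitting symmetrically at the middle of the sequence, as claimed.

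The only subtle point is the reflection identity $\binomd{a}{k}=\binomd{k-1-a}{k}$, since it requires correctly handling negative arguments through the extended Pascal recursion; once this is in hand both conclusions are immediate. No separate treatment of the parities of $k$ and $n$ is needed beyond ensuring that $(k-n)/2$ is an integer so that the indexing makes sense.
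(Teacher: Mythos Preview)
Your proof is correct and follows essentially the same route as the paper: both reduce the symmetry to the reflection identity $\binomd{a}{k}=\binomd{k-1-a}{k}$ and then read off the middle $k+2$ terms as $\binomd{-1}{k},\binomd{0}{k},\ldots,\binomd{k}{k}$. The only difference is cosmetic: the paper simply invokes the known formula $\binomd{a}{b}=\binomd{b-a-1}{b}$ (valid for $b\ge 0$) in one line, whereas you re-derive it via a three-case split on the sign and size of $a$; this makes your argument slightly longer but entirely self-contained.
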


\begin{proof}
Suppose that $k$ and $n$ are of the same parity; i.e., $\ell=\frac{k-n}{2}$ is an integer. Let $\BS{n}{k}{\ell}=\left(a_{j}\right)_{1\le j\le n}$ with $a_j=\binomd{\ell+j-1}{k}$, for all $j\in\{1,\ldots,n\}$. Since $\binomd{a}{b}=\binomd{b-a-1}{b}$, for any integers $a$ and $b$ such that $b\ge0$, it follows that
{\small
$$
a_{n-j+1}  = \binomd{\frac{k-n}{2}+(n-j+1)-1}{k} = \binomd{k-\left(\frac{k-n}{2}+n-j\right)-1}{k} = \binomd{\frac{k-n}{2}+j-1}{k} = a_j,
$$}
for all $j\in\{1,\ldots,n\}$. Therefore, the sequence $\BS{n}{k}{\ell}$ is symmetric for $\ell=\frac{k-n}{2}$. Moreover, since $k\ge1$, the $k+2$ middle terms of $\BS{n}{k}{\ell}$ are $\binomd{j}{k}$, for $j\in\{-1,\ldots,k\}$, where $\binomd{-1}{k}=\binomd{k}{k}=1$ and $\binomd{j}{k}=0$, for all $j\in\{0,\ldots,k-1\}$. This completes the proof.
\end{proof}

\begin{thm}\label{thm5}
Let $n$ be a non-negative integer. The set $\left\{\iStrig{1},\ldots,\iStrig{\left\lceil\frac{n}{2}\right\rceil}\right\}$ is a basis of $\HST{n}$, where
$$
\iStrig{k} = \Strig{\BS{n}{n-2k}{-k}} = \left(\binomd{-k+j-i}{n-2k+1-i}\right)_{1\le i\le j\le n},
$$
for all $k\in\left\{1,\ldots,\left\lfloor\frac{n}{2}\right\rfloor\right\}$ and $\iStrig{\frac{n+1}{2}}=\Strig{(1)_n}$ when $n$ is odd. 
\end{thm}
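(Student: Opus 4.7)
The plan is to exploit the generating index set $G_H = \{(1,j) : 1 \le j \le \lceil n/2 \rceil\}$ provided by Proposition~\ref{prop4}. Since the projection $\pi_{G_H} : \HST{n} \to \{0,1\}^{G_H}$ is an isomorphism, it suffices to prove two things: (a) each $\iStrig{k}$ belongs to $\HST{n}$, and (b) the images $\pi_{G_H}(\iStrig{k})$ form a basis of $\{0,1\}^{G_H}$.

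For (a), by Proposition~\ref{prop3} I need only check that the first row of each $\iStrig{k}$ is a symmetric binary sequence. For $k \in \{1, \ldots, \lfloor n/2 \rfloor\}$, this first row is $\BS{n}{n-2k}{-k}$. Since $n-2k$ and $n$ have the same parity and $-k = \frac{(n-2k)-n}{2}$, Lemma~\ref{lem3} (applied with its parameter $k$ replaced by $K:=n-2k$) yields symmetry whenever $K \ge 1$. The residual case $K=0$ (only occurring for $n$ even, $k = n/2$) is immediate, since $\BS{n}{0}{-n/2} = (1)_n$ is trivially symmetric. The extra generator $\iStrig{(n+1)/2} = \Strig{(1)_n}$ for $n$ odd is symmetric as well.

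For (b), I compute $\pi_{G_H}(\iStrig{k})$, namely the first $\lceil n/2 \rceil$ entries of the first row. For $K = n-2k \ge 1$, Lemma~\ref{lem3} describes the $K+2$ middle terms of $\BS{n}{K}{-k}$ as $\binomd{j}{K}$ for $j \in \{-1, \ldots, K\}$, which translate to the values $1, 0, \ldots, 0, 1$ situated at positions $k, k+1, \ldots, n-k+1$ of the sequence. The inequality $\lceil n/2 \rceil \le n-k$ (equivalent to $k \le \lfloor n/2 \rfloor$) guarantees that the first $\lceil n/2 \rceil$ entries have the shape $(\ast, \ldots, \ast, 1, 0, \ldots, 0)$ with the distinguished $1$ at position $k$. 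For the top index ($n$ even with $k = n/2$, or $n$ odd with $k = (n+1)/2$), the projection is $(1,1,\ldots,1)$, which carries a $1$ at position $\lceil n/2 \rceil$.

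Arranging these projections as the rows of an $\lceil n/2 \rceil \times \lceil n/2 \rceil$ matrix indexed by $k$, I obtain a lower-triangular matrix with $1$s on the diagonal, invertible over $\Zn{2}$. This proves (b) and hence the theorem, while the explicit coordinate formula $\iStrig{k} = \left(\binomd{-k+j-i}{n-2k+1-i}\right)_{1\le i\le j\le n}$ is read off directly from Proposition~\ref{prop18}. The main care needed is in aligning the indexing convention of Lemma~\ref{lem3} (where the symbol $k$ plays the role of our $n-2k$) with the theorem's parameter, and isolating the boundary case $n-2k = 0$ where Lemma~\ref{lem3} does not directly apply.
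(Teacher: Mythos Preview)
Your argument is correct and follows essentially the same route as the paper: membership in $\HST{n}$ via Lemma~\ref{lem3} and Proposition~\ref{prop3}, then a lower-triangular shape for the projections under $\pi_{G_H}$ using the middle-term description from Lemma~\ref{lem3}, and finally invoking Proposition~\ref{prop4}. If anything, you are slightly more careful than the paper in isolating the boundary case $n-2k=0$ (where Lemma~\ref{lem3} requires a positive parameter), which the paper glosses over but is indeed trivial since $\BS{n}{0}{-n/2}=(1)_n$.
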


\begin{proof}
First, we know from Lemma~\ref{lem3} that the sequence $\BS{n}{n-2k}{-k}$ is symmetric for all $k\in\left\{1,\ldots,\left\lfloor\frac{n}{2}\right\rfloor\right\}$. Obviously, when $n$ is odd, the constant sequence $(1)_n$ is also symmetric. Therefore by Proposition~\ref{prop3} and for all $k\in\left\{1,\ldots,\left\lceil\frac{n}{2}\right\rceil\right\}$, we have $\iStrig{k}\in\HST{n}$. Now, we consider the set $G_H := \left\{(1,j)\ \middle|\ j\in\left\{1,\ldots,\left\lceil\frac{n}{2}\right\rceil\right\}\right\}$. By Lemma~\ref{lem3} again, we know that the $n-2k+2$ middle terms of $\BS{n}{n-2k}{-k}$ are of the form $(\underbrace{\ast\cdots\ast}_{k-1}1\underbrace{00\cdots00}_{n-2k}1\underbrace{\ast\cdots\ast}_{k-1})$ and thus for all $k\in\left\{1,\ldots,\left\lfloor\frac{n}{2}\right\rfloor\right\}$, we have
$$
\pi_{G_H}\left(\iStrig{k}\right) = (\underbrace{\ast,\cdots,\ast}_{k-1},1,\underbrace{0,\cdots,0}_{\left\lceil\frac{n}{2}\right\rceil-k}).
$$
Moreover, $\pi_{G_H}\left(\iStrig{\frac{n+1}{2}}\right)=\pi_{G_H}\left(\Strig{(1)_n}\right)=(1,1,\ldots,1)$ when $n$ is odd. Therefore, the set $\left\{ \pi_{G_H}\left(\iStrig{k}\right) \ \middle|\ k\in\left\{1,\ldots,\left\lceil\frac{n}{2}\right\rceil\right\} \right\}$ is a basis of $\Zd^{|G_H|}$. Finally, since $G_H$ is a generating index set of $\HST{n}$ by Proposition~\ref{prop4}, we conclude that the set $\left\{\iStrig{1},\ldots,\iStrig{\left\lceil\frac{n}{2}\right\rceil}\right\}$ is a basis of $\HST{n}$.
\end{proof}

\begin{rem}
When $n$ is even, we have $\iStrig{\frac{n}{2}}=\Strig{\BS{n}{0}{-\frac{n}{2}}}=\Strig{(1)_n}$. Therefore, $\iStrig{\left\lceil\frac{n}{2}\right\rceil}=\Strig{(1)_n}$ for all integers $n$.
\end{rem}

For instance, for $n=7$, we obtain the basis
$$
\begin{array}{ll}
\iStrig{1} = \Strig{\BS{7}{5}{-1}} = \Strig{(1000001)},
&
\iStrig{2} = \Strig{\BS{7}{3}{-2}} = \Strig{(0100010)},
\\
\iStrig{3} = \Strig{\BS{7}{1}{-3}} = \Strig{(1010101)},
&
\iStrig{4} = \Strig{(1111111)}. \\
\end{array}
$$
All the horizontally symmetric Steinhaus triangles of size $7$ are depicted in Figure~\ref{fig8}, where the elements of the basis $\left\{\iStrig{1},\iStrig{2},\iStrig{3},\iStrig{4}\right\}$ are in red and, for every $\Strig{}\in\HST{7}$, the coordinate vector $(x_1,x_2,x_3,x_4)$ of $\Strig{}=x_1\iStrig{1}+x_2\iStrig{2}+x_3\iStrig{3}+x_4\iStrig{4}$ is given.

\begin{figure}[htbp]
\centerline{\includegraphics[width=\textwidth]{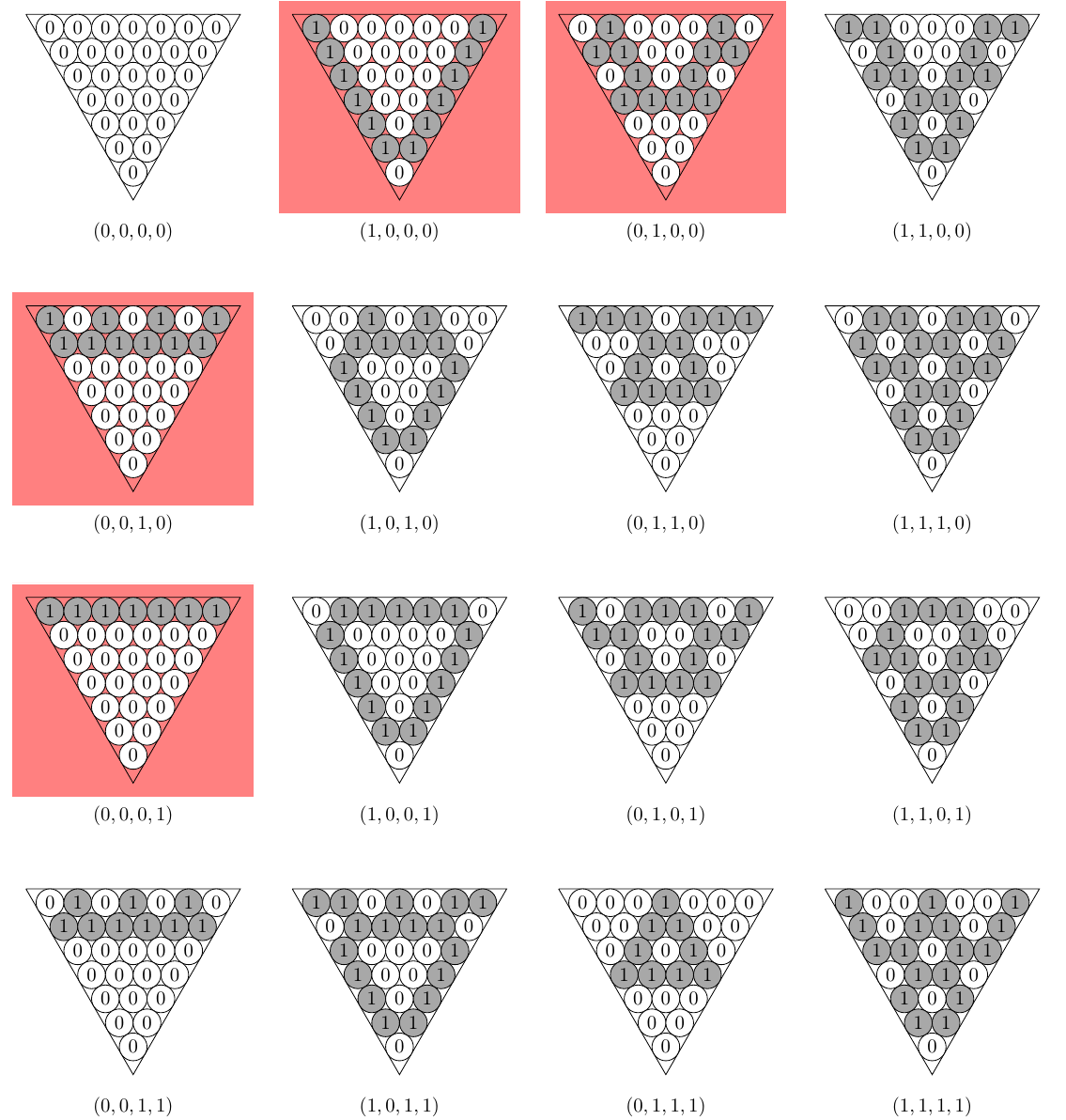}}
\caption{The $16$ triangles of $\HST{7}$ where the $4$ red triangles form a basis}\label{fig8}
\end{figure}

\section{Dihedrally symmetric Steinhaus triangles}\label{sec:5}

In this section, after characterizing dihedrally symmetric Steinhaus triangles, we determine generating index sets and a basis of $\DST{n}$.

\subsection{Characterizations of \texorpdfstring{$\DST{n}$}{DST(n)}}

We begin by showing that the dihedral symmetry of a Steinhaus triangle is only related to the symmetry of its first row and of its right and left sides.

\begin{prop}\label{prop12}
The Steinhaus triangle $\Strig{}$ is dihedrally symmetric if and only if two of the three Steinhaus triangles $\Strig{}$, $r\left(\Strig{}\right)$ and $r^2\left(\Strig{}\right)$ are horizontally symmetric.
\end{prop}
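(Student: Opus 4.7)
My plan is to use the dihedral group relations $r^3 = h^2 = hrhr = id$ (given earlier in the paper), which are equivalent to the standard dihedral presentation $hrh = r^{-1}$, or written otherwise, $hr = r^2h$ and $hr^2 = rh$. Let me abbreviate $T = \Strig{}$ throughout.

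For the forward direction, if $T$ is dihedrally symmetric then $r(T) = r^2(T) = T$ and $h(T) = T$, so $h(r(T)) = h(T) = T = r(T)$ and $h(r^2(T)) = h(T) = T = r^2(T)$; hence all three of $T$, $r(T)$, $r^2(T)$ are horizontally symmetric, and in particular any two of them are.

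For the backward direction, I will split into three cases according to which two of the triangles are assumed horizontally symmetric, and in each case deduce that $r(T) = T$ (which, combined with one of the hypothesized identities $h(\cdot) = \cdot$, gives $h(T) = T$ and hence dihedral symmetry). Concretely: if $h(T) = T$ and $h(r(T)) = r(T)$, apply $hr = r^2h$ to obtain $r(T) = h(r(T)) = r^2h(T) = r^2(T)$, so $r(T) = T$. If $h(T) = T$ and $h(r^2(T)) = r^2(T)$, apply $hr^2 = rh$ to obtain $r^2(T) = h(r^2(T)) = rh(T) = r(T)$, so again $r(T) = T$. If $h(r(T)) = r(T)$ and $h(r^2(T)) = r^2(T)$, then using $hr = r^2h$ we get $r^2h(T) = r(T)$, hence $h(T) = r^{-1}\cdot r(T)\cdot r^{-2}\cdot r = r^2(T)$ (i.e.\ $h(T) = r^{-1}(T) = r^2(T)$), and using $hr^2 = rh$ we analogously get $h(T) = r(T)$; combining yields $r(T) = r^2(T)$, whence $r(T) = T$ and also $h(T) = T$.

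The argument is essentially a direct computation inside the group $D_3$, so I do not anticipate any serious obstacle. The only thing to keep track of is the noncommutativity of $r$ and $h$, which is precisely what the identity $hr = r^2h$ encodes; once that is used correctly each of the three cases collapses in one line. No properties of Steinhaus triangles beyond the $D_3$-action are needed.
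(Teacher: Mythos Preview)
Your proof is correct and follows essentially the same approach as the paper: both arguments are pure manipulations inside $D_3$ using the relations $r^3 = h^2 = hrhr = id$ (equivalently $hr = r^2h$), with the paper treating one case in detail and reducing the other two to it, while you handle all three cases directly. One small point: in your third case the intermediate expression ``$h(T) = r^{-1}\cdot r(T)\cdot r^{-2}\cdot r$'' is garbled (it mixes group elements and actions), though your parenthetical ``i.e.\ $h(T) = r^{-1}(T) = r^2(T)$'' is the correct conclusion obtained by applying $r^{-2}$ to both sides of $r^2h(T) = r(T)$.
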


\begin{proof}
This follows directly from the fact that the automorphisms $h$ and $hr$ are generators of $D_3$.
First, if $\Strig{}$ is dihedrally symmetric, then $\Strig{}=r\left(\Strig{}\right)=r^2\left(\Strig{}\right)$. Since a dihedrally symmetric triangle is also horizontally symmetric, the result follows.

Suppose now that the Steinhaus triangles $\Strig{}$ and $r\left(\Strig{}\right)$ are horizontally symmetric. It follows that $h\left(\Strig{}\right)=\Strig{}$ and $hr\left(\Strig{}\right)=r\left(\Strig{}\right)$. Since $r^3=h^2=(hr)^2=id_{\ST{n}}$, we have $rhr=h$. Since both $\Strig{}$ and $r\left(\Strig{}\right)$ are horizontally symmetric, we obtain
$$
\Strig{} = r^3\left(\Strig{}\right) = r^2r\left(\Strig{}\right) = r^2hr\left(\Strig{}\right) = r\left(rhr\left(\Strig{}\right)\right) = rh\left(\Strig{}\right) = r\left(\Strig{}\right).
$$
We conclude that the Steinhaus triangle $\Strig{}$ is dihedrally symmetric since it is horizontally symmetric and rotationally symmetric.
If the two horizontally symmetric Steinhaus triangles are $r^2\left(\Strig{}\right)$ and $\Strig{}$, or $r\left(\Strig{}\right)$ and $r^2\left(\Strig{}\right)$, we can show in the same way that the triangles $r^2\left(\Strig{}\right)$ or $r\left(\Strig{}\right)$ are dihedrally symmetric, since $r\left(r^2\left(\Strig{}\right)\right)=\Strig{}$ and $r\left(r\left(\Strig{}\right)\right)=r^2\left(\Strig{}\right)$, respectively. In any case, we observe that the Steinhaus triangle $\Strig{}=r\left(\Strig{}\right)=r^2\left(\Strig{}\right)$ is dihedrally symmetric.
\end{proof}

\begin{cor}
The Steinhaus triangle $\Strig{S}=\left(a_{i,j}\right)_{1\le i\le j\le n}$ is dihedrally symmetric if and only if two of the three sequences, its first row $\left(a_{1,j}\right)_{1\le j\le n}$, its right side $\left(a_{j,n}\right)_{1\le j\le n}$ or its left side $\left(a_{j,j}\right)_{1\le j\le n}$, are symmetric.
\end{cor}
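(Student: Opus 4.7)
The plan is to derive this corollary as a direct combination of Proposition~\ref{prop12} and Proposition~\ref{prop3}. The key observation is that the three candidate boundary sequences of $\Strig{S}=(a_{i,j})_{1 \le i \le j \le n}$ are (up to reversal) precisely the first rows of the three Steinhaus triangles $\Strig{}$, $r(\Strig{})$, $r^2(\Strig{})$.

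First I would read off, from the explicit formulas for $r$ and $r^2$ given in the introduction, the first rows of the three rotated triangles. Setting $i=1$ in $r\left((a_{i,j})\right) = (a_{j-i+1,n-i+1})$ gives first row $(a_{j,n})_{1 \le j \le n}$, which is the right side of $\Strig{}$. Setting $i=1$ in $r^2\left((a_{i,j})\right) = (a_{n-j+1,n+i-j})$ gives first row $(a_{n-j+1,n-j+1})_{1 \le j \le n}$, which is the reversal of the left side $(a_{j,j})_{1 \le j \le n}$. Since a binary sequence is symmetric if and only if its reversal is symmetric, these three first rows are symmetric exactly when the first row, right side, and left side of $\Strig{}$ are respectively symmetric.

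Next I invoke Proposition~\ref{prop3}, which says a Steinhaus triangle is horizontally symmetric if and only if its first row is symmetric. Applied to each of $\Strig{}$, $r(\Strig{})$, $r^2(\Strig{})$, this yields the equivalences: $\Strig{}$ is horizontally symmetric iff $(a_{1,j})_{1 \le j \le n}$ is symmetric; $r(\Strig{})$ is horizontally symmetric iff $(a_{j,n})_{1 \le j \le n}$ is symmetric; and $r^2(\Strig{})$ is horizontally symmetric iff $(a_{j,j})_{1 \le j \le n}$ is symmetric.

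Finally I apply Proposition~\ref{prop12}, which characterizes dihedral symmetry of $\Strig{}$ as the condition that two of the three triangles $\Strig{}$, $r(\Strig{})$, $r^2(\Strig{})$ be horizontally symmetric. Translating each horizontal-symmetry condition into the corresponding boundary-sequence symmetry via the equivalences above yields exactly the statement of the corollary. There is no real obstacle here: the proof is a mechanical unpacking, and the only point to verify carefully is that reversing a sequence preserves symmetry (which is immediate from the definition).
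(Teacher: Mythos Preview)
Your proof is correct and follows exactly the same route as the paper: apply Proposition~\ref{prop3} to each of $\Strig{}$, $r(\Strig{})$, $r^2(\Strig{})$ to identify horizontal symmetry with symmetry of the corresponding boundary sequence, then invoke Proposition~\ref{prop12}. Your observation that the first row of $r^2(\Strig{})$ is the reversal of the left side (and that reversal preserves symmetry) is a small detail the paper leaves implicit, but otherwise the arguments coincide.
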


\begin{proof}
This follows directly from Proposition~\ref{prop12}, since, by Proposition~\ref{prop3}, we know that the Steinhaus triangles $\Strig{S}$, $r\left(\Strig{S}\right)$ and $r^2\left(\Strig{S}\right)$ are horizontally symmetric if and only if the sequences $\left(a_{1,j}\right)_{1\le j\le n}$, $\left(a_{j,n}\right)_{1\le j\le n}$ and $\left(a_{j,j}\right)_{1\le j\le n}$ are symmetric, respectively.
\end{proof}

Proposition~\ref{prop12} also enables us to show that the dihedral symmetry of a Steinhaus triangle only depends on the values of the middle terms of its rows, its columns or its diagonals of odd lengths.

\begin{cor}\label{cor8}
The Steinhaus triangle $\Strig{S}=\left(a_{i,j}\right)_{1\le i\le j\le n}$ is dihedrally symmetric if and only if two of the three sets $\left\{a_{n-2i,n-i}\middle| i\in\left\{0,\ldots,\left\lfloor\frac{n}{2}\right\rfloor-1\right\}\right\}$, $\left\{a_{i,2i-1}\middle| i\in\left\{1,\ldots,\left\lfloor\frac{n}{2}\right\rfloor\right\}\right\}$  and $\left\{a_{i,n-i+1}\middle| i\in\left\{1,\ldots,\left\lfloor\frac{n}{2}\right\rfloor\right\}\right\}$ are sets of zeroes.
\end{cor}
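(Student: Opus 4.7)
My plan is to deduce this corollary as a direct combination of Proposition~\ref{prop11} (horizontal symmetry characterized by the vanishing of the middle entries of odd-length rows) with Proposition~\ref{prop12} (dihedral symmetry characterized by horizontal symmetry of two of the three rotates $\Strig{S}$, $r(\Strig{S})$, $r^2(\Strig{S})$). The idea is that each of the three sets in the statement is precisely the set of entries of $\Strig{S}$ that sit at the middle-of-odd-row positions inside one of these three rotated triangles. Once this is shown, the equivalence is immediate.

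Applying Proposition~\ref{prop11} to $\Strig{S}$ itself gives horizontal symmetry if and only if $a_{n-2i,n-i}=0$ for every $i\in\{0,\ldots,\lfloor n/2\rfloor-1\}$; this is the first set. For the other two sets, I would use the explicit formulas $r((a_{i,j}))=(a_{j-i+1,n-i+1})$ and, by one more application, $r^2((a_{i,j}))=(a_{n-j+1,n-j+i})$ already recorded in Section~\ref{sec:1}. Applying Proposition~\ref{prop11} to $r(\Strig{S})$ requires the entry of $r(\Strig{S})$ at position $(n-2i,n-i)$ to vanish for every $i\in\{0,\ldots,\lfloor n/2\rfloor-1\}$; that entry is
$$
a_{(n-i)-(n-2i)+1,\,n-(n-2i)+1} \;=\; a_{i+1,\,2i+1},
$$
so the reindexing $j=i+1$ turns this into the second set $\{a_{j,2j-1}\mid j\in\{1,\ldots,\lfloor n/2\rfloor\}\}$. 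The same computation for $r^2(\Strig{S})$ yields
$$
a_{n-(n-i)+1,\,n-(n-i)+(n-2i)} \;=\; a_{i+1,\,n-i},
$$
which, after $j=i+1$, becomes the third set $\{a_{j,n-j+1}\mid j\in\{1,\ldots,\lfloor n/2\rfloor\}\}$.

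Having made these three identifications, the vanishing of each of the three sets is equivalent, by Proposition~\ref{prop11}, to the horizontal symmetry of $\Strig{S}$, $r(\Strig{S})$, or $r^2(\Strig{S})$ respectively. Proposition~\ref{prop12} then asserts that $\Strig{S}$ is dihedrally symmetric if and only if two of these three triangles are horizontally symmetric, i.e., if and only if two of the three sets consist entirely of zeroes, which is the stated equivalence. The only step that requires any care is the index bookkeeping for $r$ and $r^2$: one must verify that the middle-column index of an odd-length row of the rotated triangle really maps back, through the rotation formula, to the positions $(j,2j-1)$ and $(j,n-j+1)$ advertised in the statement, and that the range $i\in\{0,\ldots,\lfloor n/2\rfloor-1\}$ translates cleanly to $j\in\{1,\ldots,\lfloor n/2\rfloor\}$. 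Beyond this routine reindexing, no further computation is needed.
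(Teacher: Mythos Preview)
Your proposal is correct and follows exactly the same approach as the paper: apply Proposition~\ref{prop11} to each of $\Strig{S}$, $r(\Strig{S})$, $r^2(\Strig{S})$, translate the resulting vanishing conditions back to entries of $\Strig{S}$ via the rotation formulas, and conclude with Proposition~\ref{prop12}. In fact, you spell out the index bookkeeping for $r$ and $r^2$ more explicitly than the paper does.
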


\begin{proof}
From Proposition~\ref{prop11}, we know that the Steinhaus triangles $\Strig{S}$, $r\left(\Strig{S}\right)$ and $r^2\left(\Strig{S}\right)$ are horizontally symmetric if and only if $a_{n-2i,n-i}=0$ for all $i\in\left\{0,\ldots,\left\lfloor\frac{n}{2}\right\rfloor-1\right\}$, $a_{i,2i-1}=0$ for all $i\in\left\{1,\ldots,\left\lfloor\frac{n}{2}\right\rfloor\right\}$ and $a_{i,n-i+1}=0$ for all $i\in\left\{1,\ldots,\left\lfloor\frac{n}{2}\right\rfloor\right\}$, respectively. Finally, by Proposition~\ref{prop12}, the Steinhaus triangle $\Strig{S}$ is dihedrally symmetric if and only if two of the three Steinhaus triangles $\Strig{S}$, $r\left(\Strig{S}\right)$ and $r^2\left(\Strig{S}\right)$ are horizontally symmetric. This completes the proof.
\end{proof}

For any positive integer $n\ge3$, by definition of $\DST{n}$ and $\He$, it is clear that $\He\left(\DST{n}\right)\subset\DST{n-3}$. The precise relationship between a dihedrally symmetric Steinhaus triangle $\Strig{S}$ and its subtriangle $\He\left(\Strig{S}\right)$ is given in the following

\begin{prop}\label{prop7}
Let $S$ be a binary sequence of length $n\ge 3$. The Steinhaus triangle $\Strig{S}$ is dihedrally symmetric if and only if $\He\left(\Strig{S}\right)=\Strig{S'}$ is dihedrally symmetric, $\sigma_2\left(S'\right)=0$ and $S=(0)\cdot\left(\int_{i,x}S'\right)\cdot(0)$, for some $i\in\{1,\ldots,n-2\}$ and some $x\in\Zd=\{0,1\}$.
\end{prop}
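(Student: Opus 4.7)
The plan is to reduce the statement to the already-established characterizations: the rotational analogue Proposition~\ref{prop5}, the horizontal characterization Proposition~\ref{prop3}, and the symmetry-derivation link Proposition~\ref{prop1}. Since, by Proposition~\ref{prop12}, dihedral symmetry is equivalent to the conjunction of rotational and horizontal symmetry, the natural strategy is to treat the two symmetries separately and observe how the conditions on $S'$ tighten.

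For the forward implication, assume $\Strig{S}$ is dihedrally symmetric. Proposition~\ref{prop5} gives that $\He\left(\Strig{S}\right)=\Strig{S'}$ is rotationally symmetric and that $S=\left(\sigma_2(S')\right)\cdot\int_{i,x}S'\cdot\left(\sigma_2(S')\right)$ for some $i\in\{1,\ldots,n-2\}$ and $x\in\{0,1\}$, while Proposition~\ref{prop3} gives that $S$ is symmetric. Since padding a sequence by identical symbols on both ends preserves symmetry, the middle block $\int_{i,x}S'$ must be symmetric. Applying Proposition~\ref{prop1} to $\int_{i,x}S'$, whose derivative is $S'$, yields simultaneously that $S'$ is symmetric (hence $\Strig{S'}$ is horizontally symmetric, and combined with its rotational symmetry, dihedrally symmetric) and that $\sigma_2(S')=0$, so the padding in the formula for $S$ collapses to zeros.

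For the converse, assume $\Strig{S'}$ is dihedrally symmetric, $\sigma_2(S')=0$ and $S=(0)\cdot\int_{i,x}S'\cdot(0)$. Because $\sigma_2(S')=0$, the sequence $S$ has exactly the shape demanded in Proposition~\ref{prop5}, so $\Strig{S}$ is rotationally symmetric. Now $S'$ is symmetric (being horizontally symmetric) with $\sigma_2(S')=0$, so the converse direction of Proposition~\ref{prop1}, applied to $\int_{i,x}S'$, gives that $\int_{i,x}S'$ is symmetric; padding with zeros preserves this, so $S$ is symmetric, whence $\Strig{S}$ is horizontally symmetric by Proposition~\ref{prop3}. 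The two symmetries together, via Proposition~\ref{prop12}, yield dihedral symmetry of $\Strig{S}$.

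No serious obstacle is anticipated: the bookkeeping of $i$, $x$ and the matching of the padding symbol $\sigma_2(S')$ with $0$ is the only subtle point, and it is handled cleanly by Proposition~\ref{prop1}, which is precisely the bridge between the symmetry of a sequence and that of its derivative. The proof should essentially amount to chaining Propositions~\ref{prop1}, \ref{prop3}, \ref{prop5} and \ref{prop12}, noting at each step that symmetric padding is invisible to the symmetry condition.
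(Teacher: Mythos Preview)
Your proposal is correct and follows essentially the same approach as the paper: both proofs reduce dihedral symmetry to the conjunction of Proposition~\ref{prop5} (rotational part) and Propositions~\ref{prop3}/\ref{prop1} (horizontal part), and then observe that under the assumption $\Strig{S'}\in\DST{n-3}$ the combined conditions collapse to $\sigma_2(S')=0$ and zero padding. The only cosmetic difference is that you apply Proposition~\ref{prop1} directly to the antiderivative $\int_{i,x}S'$, whereas the paper applies it to $\partial S$ and then relates the endpoint terms $a_{2,2},a_{2,n}$ and $\sigma_2(\partial S)$ back to $\sigma_2(S')$; the content is the same.
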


Proposition~\ref{prop7} appears in \cite{Barbe:2001aa} in a more general context. For the convenience of the reader, a proof is given here.

\begin{proof}
Let $\Strig{S}=(a_{i,j})_{1\le i\le j\le n}\in\ST{n}$ such that $\Strig{S'}=\He(\Strig{S})=(a_{1+i,2+j})_{1\le i\le j\le n-3}\in\DST{n-3}$. From Proposition~\ref{prop5}, Proposition~\ref{prop3} and Proposition~\ref{prop1}, we deduce that the Steinhaus triangle $\Strig{S}$ is dihedrally symmetric if and only if $a_{1,1}=a_{1,n}=\sigma_2(S')$, $\partial S$ is symmetric and $\sigma_2\left(\partial S\right)=0$. Since $\Strig{S'}\in\DST{n-3}$, the sequence $S'$ is symmetric. Therefore the sequence $\partial S$ is symmetric if and only if $a_{2,1}=a_{2,n}$. Moreover, it is clear that $\sigma\left(\partial S\right) \equiv \left(\sigma\left(S'\right)+a_{2,1}+a_{2,n}\right)\tpmod{2}$. We claim that $a_{1,1}=a_{1,n}=\sigma_2(S')$, $\partial S$ is symmetric and $\sigma_2\left(\partial S\right)=0$ if and only if $a_{1,1}=a_{1,n}=\sigma_2(S')=0$. First, suppose that $a_{1,1}=a_{1,n}=\sigma_2(S')$, $\partial S$ is symmetric and $\sigma_2\left(\partial S\right)=0$. Since $\partial S$ is symmetric, it follows that $a_{2,1}=a_{2,n}$ and thus $\sigma_2\left(S'\right)=\sigma_2\left(\partial S\right)=0$. Conversely, suppose that $a_{1,1}=a_{1,n}=\sigma_2(S')=0$. Since $S'$ is symmetric and $\sigma_2\left(S'\right)=0$, we know from Proposition~\ref{prop1} that its antiderived sequence $\int_{i,x}S'$ is symmetric too. Moreover, we have $a_{1,1}=a_{1,n}$. It follows that the sequence $S$ is symmetric and, by Proposition~\ref{prop1} again, we obtain that the sequence $\partial S$ is symmetric and $\sigma_2\left(\partial S\right)=0$. This completes the proof.
\end{proof}

The set of dihedrally symmetric Steinhaus triangles $\Strig{S}$ of size $n$ with $\sigma(S)$ even is denoted by $\DSTz{n}$. It is clear that $\DSTz{n}$ is a linear subspace of $\DST{n}$. Moreover, the vector space $\DST{n}$ can be expressed in terms of its linear subspace $\DSTz{n}$.

\begin{prop}\label{prop9}
Let $n$ be a non-negative integer. Then, we have
$$
\DST{n} = \left\{\begin{array}{ll}
\DSTz{n} & \text{for }n\text{ even},\\
\DSTz{n} \sqcup \left(\DSTz{n} + U_n\right) & \text{for }n\text{ odd},\\
\end{array}\right.
$$
where $\sqcup$ is the disjoint union of two sets.
\end{prop}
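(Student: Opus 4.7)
The plan is to reduce everything to a parity calculation on the first row. By Proposition~\ref{prop3}, any triangle in $\DST{n}$ has a symmetric first row $S$, and Lemma~\ref{lem6} then controls $\sigma_2(S)$: it is automatically $0$ when $n$ is even, and equals the middle entry of $S$ when $n$ is odd. So the decomposition in the statement is really a statement about which symmetric first rows occur with which sum parity.

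As a preliminary step I would verify that $U_n \in \DST{n}$, so that $\DSTz{n} + U_n$ is a well-defined subset of $\DST{n}$. By construction $U_n = \rho(\Strig{(1)_n}) \in \RST{n}$, and the first row of $U_n$ — namely $(0,1,\ldots,1,0)$ for $n\geq 3$, and $\emptyset$, $(1)$, $(0,0)$ for $n=0,1,2$ — is symmetric, so horizontal symmetry follows from Proposition~\ref{prop3}. In each of these cases the sum of the first row has the same parity as $n$; I would record this explicitly, since it is the only numerical input the argument uses.

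For $n$ even the inclusion $\DSTz{n} \subseteq \DST{n}$ is tautological, and the reverse follows at once: any $\Strig{S} \in \DST{n}$ has $S$ symmetric of even length, so $\sigma_2(S) = 0$ by Lemma~\ref{lem6}, placing $\Strig{S}$ in $\DSTz{n}$.

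For $n$ odd, let $s_n$ denote the first row of $U_n$; by the preliminary step, $\sigma_2(s_n) = 1$. Given $\Strig{S} \in \DST{n}$, I split on the parity of $\sigma_2(S)$. If $\sigma_2(S) = 0$ then $\Strig{S} \in \DSTz{n}$ directly; otherwise $\Strig{S} + U_n$ lies in $\DST{n}$ (both summands do, and $\DST{n}$ is a $\Zn{2}$-vector space) and its first row $S + s_n$ has even sum, so $\Strig{S} + U_n \in \DSTz{n}$ and hence $\Strig{S} \in \DSTz{n} + U_n$. Disjointness of the two pieces is immediate: a triangle in the intersection would have $\sigma_2(S) = 0$ and $\sigma_2(S + s_n) = 0$ simultaneously, forcing $\sigma_2(s_n) = 0$, contradicting $\sigma_2(s_n) = 1$.

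I do not anticipate any real obstacle; the argument is essentially a parity dichotomy. The only care needed is the bookkeeping for the small cases $n\in\{0,1,2\}$, where $U_n$ is defined separately from the generic formula $\Strig{(011\cdots110)}$.
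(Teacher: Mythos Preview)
Your proposal is correct and follows essentially the same approach as the paper: both reduce to Lemma~\ref{lem6} applied to the symmetric first row, handle the even case immediately, and for odd $n$ use $U_n\in\DST{n}$ with odd first-row sum as a coset representative. Your write-up is in fact a bit more explicit than the paper's (you spell out why $U_n\in\DST{n}$ and why the two pieces are disjoint), but there is no substantive difference in method.
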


\begin{proof}
For any symmetric sequence $S=\left(a_j\right)_{1\le j\le n}$ of length $n$, we know from Lemma~\ref{lem6} that $\sigma_2(S)=0$ when $n$ is even and $\sigma_2(S)=a_{\frac{n+1}{2}}$ when $n$ is odd. For $n$ even, it follows that $\DST{n}=\DSTz{n}$. If $n$ is odd, then we consider $U_n=\rho\left((1)_n\right)$. It is clear that $U_n\in\DST{n}$ and is generated from a sequence of odd sum, for all odd numbers $n$. Since $\Strig{S}\in\DST{n}$ with $\sigma_2(S)=1$ if and only if $\Strig{S}+U_n\in\DSTz{n}$, it follows when $n$ is odd that $\DST{n} = \DSTz{n} \sqcup \left(\DSTz{n} + U_n\right)$.
\end{proof}

Using Lemma~\ref{lem6}, it is straightforward to obtain from Propositon~\ref{prop7} the following

\begin{cor}\label{cor4}
Let $S$ be a binary sequence of length $n\ge 3$. For $n$ even, the Steinhaus triangle $\Strig{S}$ is in $\DSTz{n}$ if and only if $\He(\Strig{S})=\Strig{S'}$ is in $\DSTz{n-3}$ and $S=(0)\cdot\left(\int_{i,x}S'\right)\cdot(0)$, for some $i\in\{1,\ldots,n-2\}$ and some $x\in\Zd$. For $n$ odd, the Steinhaus triangle $\Strig{S}$ is in $\DSTz{n}$ if and only if $\He(\Strig{S})=\Strig{S'}$ is in $\DSTz{n-3}$ and $S=(0)\cdot\left(\int_{\frac{n-1}{2},0}S'\right)\cdot(0)$.
\end{cor}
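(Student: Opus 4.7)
The plan is to derive both equivalences directly from Proposition~\ref{prop7}, using Proposition~\ref{prop9} and Lemma~\ref{lem6} to keep careful track of $\sigma_2$ on symmetric sequences. Proposition~\ref{prop7} already characterizes membership in $\DST{n}$ in terms of the $\He$-subtriangle plus an antiderivative decomposition, so the only real work is to cut $\DST{\,\cdot\,}$ down to $\DSTz{\,\cdot\,}$ on both sides of the equivalence, which depends on the parity of $n$.

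For the even case, Proposition~\ref{prop9} gives $\DST{n}=\DSTz{n}$ outright, so membership in $\DSTz{n}$ is just dihedral symmetry. Proposition~\ref{prop7} then yields: $\Strig{S}\in\DST{n}$ iff $\Strig{S'}=\He(\Strig{S})\in\DST{n-3}$, $\sigma_2(S')=0$, and $S=(0)\cdot\int_{i,x}S'\cdot(0)$ for some $(i,x)$. Since $n-3$ is odd, the two conditions ``$\Strig{S'}\in\DST{n-3}$'' and ``$\sigma_2(S')=0$'' are by definition equivalent to $\Strig{S'}\in\DSTz{n-3}$, which closes the even case.

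For the odd case, now $n-3$ is even, so Proposition~\ref{prop9} gives $\DST{n-3}=\DSTz{n-3}$ and Lemma~\ref{lem6} applied to the symmetric sequence $S'$ of even length forces $\sigma_2(S')=0$ automatically, making the relevant clause of Proposition~\ref{prop7} free. Thus Proposition~\ref{prop7} reduces to: $\Strig{S}\in\DST{n}$ iff $\Strig{S'}\in\DSTz{n-3}$ and $S=(0)\cdot\int_{i,x}S'\cdot(0)$ for some $(i,x)$. The remaining step is to impose the extra condition $\sigma_2(S)=0$ and see which choice of antiderived sequence is singled out. Writing $S=(a_j)_{1\le j\le n}$, Lemma~\ref{lem6} gives $\sigma_2(S)=a_{(n+1)/2}$, and from the factorization the middle entry $a_{(n+1)/2}$ is precisely the $\frac{n-1}{2}$-th term of $\int_{i,x}S'$. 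The identity $\int_{i,0}S'+\int_{i,1}S'=(1)_{n-2}$ recalled earlier shows that exactly one of the two available antiderived sequences has this middle term equal to $0$, and by definition of $\int_{i,x}$ that sequence is $\int_{(n-1)/2,0}S'$. So $\sigma_2(S)=0$ is equivalent to $S=(0)\cdot\int_{(n-1)/2,0}S'\cdot(0)$; the converse direction is immediate since this explicit choice forces $a_{(n+1)/2}=0$ and hence $\sigma_2(S)=0$ by Lemma~\ref{lem6} again.

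The only step that needs any real attention, and which I expect to be the main source of off-by-one slippage, is the index translation in the odd case: after prepending a $0$ to the antiderived block, one must verify that the middle position $\frac{n+1}{2}$ of $S$ lands at position $\frac{n-1}{2}$ of $\int_{i,x}S'$, and that this is exactly the coordinate pinned to the prescribed value $x$ when $(i,x)=(\frac{n-1}{2},0)$. Once this offset is bookkept correctly, both directions of each equivalence are direct rewrites of Proposition~\ref{prop7} combined with Proposition~\ref{prop9} and Lemma~\ref{lem6}.
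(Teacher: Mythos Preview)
Your proposal is correct and follows essentially the same approach as the paper, which simply states that the corollary is straightforward from Proposition~\ref{prop7} and Lemma~\ref{lem6}. Your write-up spells out the details the paper omits---in particular the parity-dependent role of Proposition~\ref{prop9} and the index bookkeeping showing that the middle term of $S$ lands at position $\tfrac{n-1}{2}$ of the antiderived block---and these are all handled correctly.
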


\subsection{Generating index sets of \texorpdfstring{$\DST{n}$}{DST(n)}}

We begin this subsection by giving, from Corollary~\ref{cor4}, generating index sets of $\DSTz{n}$.

\begin{thm}\label{thm7}
Let $n$ and $m$ be non-negative integers such that $m=\left\lfloor\frac{n}{6}\right\rfloor+\deltamod{4}{n}{6}$. For every integer $i\in\left\{1,\ldots,\left\lfloor\frac{n}{3}\right\rfloor\right\}$, let $j_i\in\left\{2i,\ldots,n-i\right\}$. Then, the set
$$
G_{D_0} = \left\{ (2i+1,j_{2i+1}) \ \middle|\ i\in\{0,\ldots,m-1\} \right\},
$$
when $n$ is even, or
$$
G_{D_0} = \left\{ (2i,j_{2i})\ \middle|\ i\in\{1,\ldots,m\} \right\},
$$
when $n$ is odd, is a generating index set of $\DSTz{n}$.
\end{thm}

\begin{proof}
We proceed by induction on $n$. For $n\in\{0,1,2\}$, it is clear that $\Strig{(0)_n}$ is the only element of $\DSTz{n}$. Therefore for these values of $n$, the empty set $\emptyset$ is a generating set of $\DSTz{n}$.

For $n\ge3$, suppose that the result is true for any size strictly less than $n$. Let $m=\left\lfloor\frac{n}{6}\right\rfloor + \deltamod{4}{n}{6}$.  We distinguish two cases following the parity of $n$.
\begin{case}
Suppose first that $n$ is odd.
We consider the subset $\He\left(G_{D_0}\right)\subset\Strig{(n-3)}$ defined by
$$
\He\left(G_{D_0}\right) = \left\{ (2i-1,j_{2i}-2) \ \middle|\ i\in\left\{1,\ldots, m\right\} \right\}
$$
and the linear maps $f_1$ and $f_2$ defined by
$$
\begin{array}{lllll}
f_1 : \DSTz{n} \longrightarrow \DSTz{n-3} & \text{by} & f_1\left(\Strig{S}\right) = \He\left(\Strig{S}\right) & \text{and} \\[1.5ex]
f_2 : \DSTz{n-3} \longrightarrow \Zd^m & \text{by} & f_2\left(\Strig{S'}\right) = \pi_{\He\left(G_{D_0}\right)}\left(\Strig{S'}\right). \\
\end{array}
$$
Then, for any $(a_{i,j})_{1\le i\le j\le n}\in\DSTz{n}$, we have
$$
f_2f_1\left((a_{i,j})_{1\le i\le j\le n}\right) \begin{array}[t]{l}
 = f_2\left((a_{1+i,2+j})_{1\le i\le j\le n-3}\right) \\
 = \pi_{\He\left(G_{D_0}\right)}\left((a_{1+i,2+j})_{1\le i\le j\le n-3}\right) \\
 = \left(a_{2i,j_{2i}}\right)_{1\le i\le m} = \pi_{G_{D_0}}\left((a_{i,j})_{1\le i\le j\le n}\right). \\
\end{array}
$$
Therefore $f_2f_1=\pi_{G_{D_0}}$. From Corollary~\ref{cor4}, we know that $f_1$ is an isomorphism whose inverse is defined by $f_1^{-1}\left(\Strig{S'}\right)=\Strig{\left((0)\cdot\left(\int_{\frac{n-1}{2},0}S'\right)\cdot(0)\right)}$ for all $\Strig{S'}\in\DSTz{n-3}$. Moreover, since we have
$$
2(2i-1)\le j_{2i}-2\le (n-3)-(2i-1),
$$
for all $i\in\left\{1,\ldots, m\right\}$ and $m=\left\lfloor\frac{n}{6}\right\rfloor = \left\lfloor\frac{n-3}{6}\right\rfloor + \deltamod{4}{n-3}{6}$ when $n$ is odd, the set $\He\left(G_{D_0}\right)$ is a generating index set of $\DSTz{n-3}$ by the induction hypothesis. Therefore, $f_2$ is an isomorphism. Finally, since the linear map $\pi_{G_{D_0}}=f_2f_1$ is an isomorphism, in this case the set $G_{D_0}$ is a generating index set of $\DSTz{n}$.
\end{case}
\begin{case}
Suppose now that $n$ is even.
We consider the subset $\He\left(G_{D_0}\right)\subset\Strig{(n-3)}$ defined by
$$
\He\left(G_{D_0}\right) := \left\{ (2i,j_{2i+1}-2) \ \middle|\ i\in\left\{1,\ldots, m-1\right\} \right\}
$$
and the linear maps $f_1$ and $f_2$ defined by
\begin{equation*}
\resizebox{\textwidth}{!}{$
\begin{array}{lll}
f_1 : \displaystyle\DSTz{n} \longrightarrow \Zd\times\DSTz{n-3} & \text{by} & f_1\left(\Strig{S}\right) = f_1\left((a_{i,j})_{1\le i\le j\le n}\right) = \left(a_{1,j_1} , \He\left(\Strig{S}\right) \right), \\[1.5ex]
f_2 : \displaystyle\Zd\times\DSTz{n-3} \longrightarrow \Zd^m & \text{by} & f_2\left(\left(x,\Strig{S'}\right)\right) = (x)\cdot\pi_{\He\left(G_{D_0}\right)}\left(\Strig{S'}\right). \\
\end{array}
$}
\end{equation*}
Then, for any $(a_{i,j})_{1\le i\le j\le n}\in\DSTz{n}$, we have
$$
f_2f_1\left((a_{i,j})_{1\le i\le j\le n}\right) \begin{array}[t]{l}
 = f_2\left(a_{1,j_1},(a_{1+i,2+j})_{1\le i\le j\le n-3}\right) \\
 = (a_{1,j_1})\cdot \pi_{\He\left(G_{D_0}\right)}\left((a_{1+i,2+j})_{1\le i\le j\le n-3}\right) \\
 = (a_{1,j_1})\cdot \left(a_{2i+1,j_{2i+1}}\right)_{1\le i\le m-1} \\
 = \left(a_{2i+1,j_{2i+1}}\right)_{0\le i\le m-1} = \pi_{G_{D_0}}\left((a_{i,j})_{1\le i\le j\le n}\right). \\
\end{array}
$$
Therefore $f_2f_1=\pi_{G_{D_0}}$. We know from Corollary~\ref{cor4} that $f_1$ is an isomorphism whose inverse is defined by $f_1^{-1}\left(x,\Strig{S'}\right)=\Strig{\left((0)\cdot\left(\int_{j_1-1,x}S'\right)\cdot(0)\right)}$ for all $\left(x,\Strig{S'}\right)\in\Zd\times\DSTz{n-3}$. Moreover, since we have
$$
4i\le j_{2i+1}-2\le (n-3)-2i,
$$
for all $i\in\left\{1,\ldots,m-1\right\}$ and $m-1=\left\lfloor\frac{n}{6}\right\rfloor + \deltamod{4}{n}{6}-1=\left\lfloor\frac{n-3}{6}\right\rfloor$ when $n$ is even, the set $\He\left(G_{D_0}\right)$ is a generating index set of $\DSTz{n-3}$ by the induction hypothesis. Therefore $\pi_{\He\left(G_{D_0}\right)}$ and thus $f_2$ are isomorphisms. Finally, since the linear map $\pi_{G_{D_0}}=f_2f_1$ is an isomorphism, the set $G_{D_0}$ is a generating index set of $\DSTz{n}$.
\end{case}
This completes the proof.
\end{proof}

Since the dimension of $\DSTz{n}$ corresponds to the cardinality of the generating index set $G_{D_0}$, it is easy to obtain the following

\begin{cor}\label{cor10}
For all non-negative integers $n$, we have $\dim\DSTz{n}=\left\lfloor\frac{n}{6}\right\rfloor+\deltamod{4}{n}{6}$.
\end{cor}

Using Proposition~\ref{prop9} and Theorem~\ref{thm7}, we are now ready to give a generating index set of $\DST{n}$.

\begin{thm}\label{thm13}
Let $n$ and $m$ be non-negative integers such that $m=\left\lfloor\frac{n+3}{6}\right\rfloor+\deltamod{1}{n}{6}$. For every integer $i\in\left\{1,\ldots,\left\lfloor\frac{n}{3}\right\rfloor\right\}$, let $j_i\in\left\{2i,\ldots,n-i\right\}$. Then, the set
$$
G_D = \left\{ (2i+1,j_{2i+1}) \ \middle|\ i\in\{0,\ldots,m-1\} \right\},
$$
when $n$ is even, or
$$
G_D = \left\{(1,j_1)\right\}\cup\left\{ (2i,j_{2i})\ \middle|\ i\in\{1,\ldots,m-1\} \right\},
$$
when $n$ is odd, is a generating index set of $\DST{n}$.
\end{thm}

\begin{proof}
First, suppose that $n$ is even. We know, from Proposition~\ref{prop9}, that $\DST{n}=\DSTz{n}$. Moreover, since $m=\left\lfloor\frac{n+3}{6}\right\rfloor=\left\lfloor\frac{n}{6}\right\rfloor+\deltamod{4}{n}{6}$ when $n$ is even, it follows from Theorem~\ref{thm7} that
$$
G_D = G_{D_0} = \left\{ (2i+1,j_{2i+1}) \ \middle|\ i\in\{0,\ldots,m-1\} \right\}
$$
is a generating index set of $\DST{n}$.

Suppose now that $n$ is odd. From Proposition~\ref{prop9}, we know that $\DST{n}=\DSTz{n}\sqcup \left(\DSTz{n}+U_n\right)$. Therefore $\left\{(1,j_1)\right\}\cup G_{D_0}$ is a generating set of $\DST{n}$, where $G_{D_0}$ is a generating index set of $\DSTz{n}$. Moreover, since $m-1=\left\lfloor\frac{n+3}{6}\right\rfloor+\deltamod{1}{n}{6}-1=\left\lfloor\frac{n}{6}\right\rfloor$ when $n$ is odd, it follows from Theorem~\ref{thm7} that
$$
G_D = \left\{(1,j_1)\right\}\cup G_{D_0} = \left\{(1,j_1)\right\}\cup\left\{ (2i,j_{2i})\ \middle|\ i\in\{1,\ldots,m-1\} \right\}
$$
is a generating index set of $\DST{n}$. This completes the proof.
\end{proof}

\begin{cor}\label{cor5}
Let $n$ and $m$ be non-negative integers such that $m=\left\lfloor\frac{n+3}{6}\right\rfloor+\deltamod{1}{n}{6}$. When $n$ is even, the set
$$
G_D = \left\{ \left(2i+1,n-\left\lfloor\frac{n}{3}\right\rfloor\right) \ \middle|\ i\in\{0,\ldots,m-1\} \right\}
$$
or when $n$ is odd, the set
$$
G_D = \left\{\left(1,n-\left\lfloor\frac{n}{3}\right\rfloor\right)\right\}\cup\left\{ \left(2i,n-\left\lfloor\frac{n}{3}\right\rfloor\right)\ \middle|\ i\in\{1,\ldots,m-1\} \right\}
$$
is a generating index set of $\DST{n}$.
\end{cor}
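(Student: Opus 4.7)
My plan is to deduce this corollary directly from Theorem~\ref{thm13} by specializing the free parameters. Specifically, I would set $j_i = n - \left\lfloor\frac{n}{3}\right\rfloor$ for every index $i \in \left\{1, \ldots, \left\lfloor\frac{n}{3}\right\rfloor\right\}$ appearing in the description of $G_D$. With this choice, the set $G_D$ in the corollary coincides exactly with the set described in Theorem~\ref{thm13}, so all that remains is to verify that $n - \left\lfloor\frac{n}{3}\right\rfloor$ lies in the admissible interval $\{2i, \ldots, n-i\}$ for each relevant $i$.

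This verification reduces to two inequalities: $i \leq \left\lfloor\frac{n}{3}\right\rfloor$ (equivalent to the upper bound $n - \left\lfloor\frac{n}{3}\right\rfloor \leq n-i$) and $2i + \left\lfloor\frac{n}{3}\right\rfloor \leq n$ (equivalent to the lower bound $2i \leq n - \left\lfloor\frac{n}{3}\right\rfloor$). The indices appearing in $G_D$ are $1, 3, 5, \ldots, 2m-1$ when $n$ is even, and $1$ together with $2, 4, \ldots, 2m-2$ when $n$ is odd. Both inequalities are monotone in $i$, so it suffices to check them at the largest index appearing, namely $2m-1$ in the even case and $\max(1, 2m-2)$ in the odd case.

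The main obstacle, though entirely routine, is the case analysis: since $m = \left\lfloor\frac{n+3}{6}\right\rfloor + \delta_{1,(n \bmod 6)}$ is defined piecewise modulo $6$, I would write $n = 6k + r$ with $r \in \{0, 1, 2, 3, 4, 5\}$, compute $\left\lfloor\frac{n}{3}\right\rfloor$ and $m$ explicitly in each of the six residue classes, and confirm both inequalities in each. The edge case $n = 1$ (odd, $m = 1$) degenerates to the singleton $\{(1, 1)\}$, which is trivially a generating index set of $\DST{1}$. None of these steps involves genuine difficulty, and by the end the hypotheses of Theorem~\ref{thm13} are satisfied for the chosen $j_i$, so the conclusion that $G_D$ is a generating index set of $\DST{n}$ follows immediately.
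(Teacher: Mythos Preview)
Your proposal is correct and follows exactly the paper's approach: specialize Theorem~\ref{thm13} by taking every $j_i = n - \left\lfloor\frac{n}{3}\right\rfloor$ and verify that this value lies in $\{2i,\ldots,n-i\}$. The paper dispatches the verification in one line, since for any $i \le \left\lfloor\frac{n}{3}\right\rfloor$ one has $n - \left\lfloor\frac{n}{3}\right\rfloor \le n-i$ trivially and $2i + \left\lfloor\frac{n}{3}\right\rfloor \le 3\left\lfloor\frac{n}{3}\right\rfloor \le n$; your residue-by-residue case analysis modulo $6$ is therefore more work than necessary, but not wrong.
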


\begin{proof}
This follows from Theorem~\ref{thm13}, since $2i\le n-\left\lfloor\frac{n}{3}\right\rfloor\le n-i$, for all $i\in\left\{1,\ldots,\left\lfloor\frac{n}{3}\right\rfloor\right\}$.
\end{proof}

Since the dimension of $\DST{n}$ corresponds to the cardinality of the generating index set $G_{D}$, it is straightforward to obtain the following

\begin{cor}\label{cor15}
For all non-negative integers $n$, we have $\dim\DST{n}=\left\lfloor\frac{n+3}{6}\right\rfloor+\deltamod{1}{n}{6}$.
\end{cor}

\subsection{Basis of \texorpdfstring{$\DST{n}$}{DST(n)}}

First, using the operators $\He^k$ and the generating index sets $G_D$ introduced above, we obtain a family of bases of $\DST{n}$.

\begin{thm}\label{thm2}
Let $n$ and $m$ be non-negative integers such that $m=\left\lfloor\frac{n+3}{6}\right\rfloor + \deltamod{1}{n}{6}$. For every $k\in\left\{0,\ldots,\left\lfloor\frac{n}{3}\right\rfloor-1\right\}$, let $\iStrig{k}\in\DST{n}$ be such that $\He^{k}\left(\iStrig{k}\right)=U_{n-3k}$. Then when $n$ is even, the set
$$
\left\{ \iStrig{2k}\ \middle|\ k\in\left\{0,\ldots,m-1\right\}\right\}
$$
or when $n$ is odd, the set
$$
\left\{\iStrig{0}\right\}\cup\left\{ \iStrig{2k+1}\ \middle|\ k\in\left\{0,\ldots,m-2\right\}\right\}
$$
is a basis of $\DST{n}$.
\end{thm}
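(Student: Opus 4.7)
The plan is to mimic the proof of Theorem~\ref{thm1} closely, substituting the generating index set $G_D$ from Corollary~\ref{cor5} for the rotationally symmetric set $G_R$ used there. Both generating sets single out the column $j=n-\left\lfloor\frac{n}{3}\right\rfloor$, so Lemma~\ref{lem2} is directly applicable and will remain the engine of the argument. The only new ingredient is that $G_D$ is strictly coarser than $G_R$: for $n$ even it picks out entries at odd row indices $i\in\{1,3,\ldots,2m-1\}$, while for $n$ odd it picks out the top entry $i=1$ together with entries at even row indices $i\in\{2,4,\ldots,2(m-1)\}$.

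First I would verify that $\iStrig{k}\in\DST{n}$ is consistent with Lemma~\ref{lem2}: the only facts used there are $\He^k(\iStrig{k})=U_{n-3k}$ and $\iStrig{k}\in\ST{n}$, so the conclusions
$$
a_{i,n-\lfloor n/3\rfloor}=\delta_{i,k+1}\ \text{for}\ i\in\{k+1,\ldots,\lfloor n/3\rfloor\}
$$
remain valid for each $\iStrig{k}$ in the present setting. Next I would compute $\pi_{G_D}(\iStrig{2k})$ for $n$ even: by the displayed formula the entry at $i=2k+1$ equals $1$, while the entries at $i=2k+3,2k+5,\ldots,2m-1$ all vanish, so
$$
\pi_{G_D}(\iStrig{2k})=\bigl(\underbrace{\ast,\ldots,\ast}_{k},1,\underbrace{0,\ldots,0}_{m-k-1}\bigr).
$$
Stacking these $m$ vectors gives a unit lower triangular matrix over $\Z/2\Z$, so the images span ${\{0,1\}}^{G_D}$; since $G_D$ is a generating index set of $\DST{n}$ by Corollary~\ref{cor5}, the preimages $\{\iStrig{2k}\,|\,0\le k\le m-1\}$ form a basis of $\DST{n}$, and the count matches $\dim\DST{n}$ from Corollary~\ref{cor15}.

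For $n$ odd I would argue analogously. The image $\pi_{G_D}(\iStrig{0})$ reads off the column $j=n-\lfloor n/3\rfloor$ at rows $i\in\{1,2,4,\ldots,2(m-1)\}$; Lemma~\ref{lem2} with $k=0$ gives a $1$ at $i=1$ followed by zeros, so $\pi_{G_D}(\iStrig{0})=(1,0,\ldots,0)$. For each $k\in\{0,\ldots,m-2\}$, the image $\pi_{G_D}(\iStrig{2k+1})$ again by Lemma~\ref{lem2} has value $1$ at row $i=2k+2\in G_D$ and value $0$ at the larger even rows $2k+4,\ldots,2(m-1)$, so
$$
\pi_{G_D}(\iStrig{2k+1})=\bigl(\underbrace{\ast,\ldots,\ast}_{k+1},1,\underbrace{0,\ldots,0}_{m-k-2}\bigr).
$$
The resulting $m\times m$ matrix is again unit lower triangular, and the conclusion follows exactly as above.

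The main obstacle is purely bookkeeping: one must check that the indices appearing in $G_D$ really lie in the range $\{k+1,\ldots,\lfloor n/3\rfloor\}$ governed by Lemma~\ref{lem2}, i.e.\ $2m-1\le\lfloor n/3\rfloor$ for $n$ even and $2(m-1)\le\lfloor n/3\rfloor$ for $n$ odd. This requires a short case analysis on $n\bmod 6$; when $n\equiv 1\pmod 6$ the extra addendum of Lemma~\ref{lem2} (the claim that $a_{\lfloor n/3\rfloor+1,n-\lfloor n/3\rfloor}=1$ whenever $\He^{\lfloor n/3\rfloor}(\iStrig{})=U_1$) is not needed here, because $G_D$ never reaches row $\lfloor n/3\rfloor+1$, so the dichotomy between the two parities of $n$ collapses into the clean triangular-matrix picture above.
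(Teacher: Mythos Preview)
Your proposal is correct and follows essentially the same approach as the paper's own proof: both use the generating index set $G_D$ from Corollary~\ref{cor5}, apply Lemma~\ref{lem2} to read off the column $j=n-\lfloor n/3\rfloor$, and conclude via the lower-triangular shape of the $\pi_{G_D}$-images. The paper is terser in the odd case (it just says ``the proof is similar''), whereas you spell out the details and correctly observe that the $n\equiv 1\pmod 3$ addendum of Lemma~\ref{lem2} is never invoked here.
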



\begin{proof}
Suppose first that $n$ is even. We consider the set
$$
G_D = \left\{ \left(2i+1,n-\left\lfloor\frac{n}{3}\right\rfloor\right) \ \middle|\ i\in\{0,\ldots,m-1\} \right\}.
$$
Let $k\in\left\{0,\ldots,m-1\right\}$. For $\iStrig{2k}=\left(a_{i,j}\right)_{1\le i\le j\le n}$, since $\He^{2k}\left(\iStrig{2k}\right)=U_{n-6k}$, it follows from Lemma~\ref{lem2} that
$$
a_{i,n-\left\lfloor\frac{n}{3}\right\rfloor} = \left\{\begin{array}{ll}
1 & \text{for } i=2k+1, \\
0 & \text{for } i\in\left\{2k+2,\ldots,\left\lfloor\frac{n}{3}\right\rfloor\right\}.
\end{array}\right.
$$
Moreover, it is clear that $2m-1\le \left\lfloor\frac{n}{3}\right\rfloor$ and thus
$$
\pi_{G_D}\left(\iStrig{2k}\right) = ( \underbrace{\ast , \ldots, \ast}_{k} , 1 , \underbrace{0 , \ldots , 0}_{m-k-1}  ),
$$
where $\ast$ stands for any element of $\Zd$. Therefore the set $\left\{ \pi_{G_D}\left(\iStrig{2k}\right)\ \middle|\ k\in\{0,\ldots,m-1\}\right\}$ is a basis of $\Zd^{|G_D|}$. Finally, since $G_D$ is a generating index set of $\DST{n}$ by Corollary~\ref{cor5}, we conclude that
$$
\left\{ \iStrig{2k}\ \middle|\ k\in\left\{0,\ldots,m-1\right\}\right\}
$$
is a basis of $\DST{n}$.

Suppose now that $n$ is odd. The proof is similar to the even case by considering the generating index set $G_D$ from Corollary~\ref{cor5}
$$
G_D = \left\{\left(1,n-\left\lfloor\frac{n}{3}\right\rfloor\right)\right\}\cup\left\{ \left(2i,n-\left\lfloor\frac{n}{3}\right\rfloor\right)\ \middle|\ i\in\{1,\ldots,m-1\} \right\}.
$$
Using Lemma~\ref{lem2}
$$
\left\{\pi_{G_D}\left(\iStrig{0}\right)\right\}\cup\left\{ \pi_{G_D}\left(\iStrig{2k+1}\right)\ \middle|\ k\in\{0,\ldots,m-2\}\right\}
$$
is a basis of $\Zd^{|G_D|}$ . This implies that the set
$$
\left\{\iStrig{0}\right\}\cup\left\{ \iStrig{2k+1}\ \middle|\ k\in\left\{0,\ldots,m-2\right\}\right\}
$$
is a basis of $\DST{n}$. This completes the proof.
\end{proof}

We now consider the restriction of the linear map $\rho$ on the linear subspace $\HST{n}$; i.e., the linear map $\restriction{\rho}{\HST{n}} : \HST{n}\longrightarrow \DST{n}$ defined by $\rho=r^2+r+id_{n}$. Obviously this map is surjective since $\rho(\Strig{})=\Strig{}$ for all $\Strig{}\in\DST{n}$. Note that, by definition, we have $U_{n-3k}=\rho\left((1)_{n-3k}\right)$. From this, we have

\begin{cor}\label{cor6}
Let $n$ and $m$ be non-negative integers such that $m=\left\lfloor\frac{n+3}{6}\right\rfloor + \deltamod{1}{n}{6}$. For every $k\in\left\{0,\ldots,\left\lfloor\frac{n}{3}\right\rfloor-1\right\}$, let $S_k$ be a symmetric binary sequence of length $n$ such that $\partial^kS_k=(1)_{n-k}$. Then when $n$ is even, the set
$$
\left\{ \rho\left(\Strig{S_{2k}}\right)\ \middle|\ k\in\left\{0,\ldots,m-1\right\}\right\}
$$
or when $n$ is odd, the set
$$
\left\{\rho\left(\Strig{S_0}\right)\right\}\cup\left\{ \rho\left(\Strig{S_{2k+1}}\right)\ \middle|\ k\in\left\{0,\ldots,m-2\right\}\right\}
$$
is a basis of $\DST{n}$.
\end{cor}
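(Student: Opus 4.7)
The plan is to derive this corollary directly from Theorem~\ref{thm2} by setting $\iStrig{k}:=\rho\left(\Strig{S_k}\right)$ and verifying the two hypotheses of that theorem, namely that $\iStrig{k}\in\DST{n}$ and that $\He^{k}\left(\iStrig{k}\right)=U_{n-3k}$.

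For the first hypothesis, I would note that $S_k$ is symmetric by assumption, so Proposition~\ref{prop3} gives $\Strig{S_k}\in\HST{n}$, and the restriction $\restriction{\rho}{\HST{n}}$ is precisely the map $\HST{n}\longrightarrow\DST{n}$ introduced in the paragraph preceding the corollary. Hence $\rho\left(\Strig{S_k}\right)\in\DST{n}$.

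For the second hypothesis, the key point is that $\He$ commutes with both $r$ and $h$: removing the top row and the two sides of a Steinhaus triangle produces a smaller triangle on which the ambient $D_3$-action restricts to its own $D_3$-action. Consequently $\He^{k}\circ\rho=\rho\circ\He^{k}$ on $\ST{n}$. The hypothesis $\partial^{k}S_k=(1)_{n-k}$ means that the $(k+1)$-st row of $\Strig{S_k}$ consists entirely of ones, so its middle $n-3k$ terms---which by inspection of the definition of $\He^{k}$ form the top row of $\He^{k}\left(\Strig{S_k}\right)$---are also all equal to $1$. Therefore $\He^{k}\left(\Strig{S_k}\right)=\Strig{(1)_{n-3k}}$, and applying $\rho$ gives $U_{n-3k}$ by definition. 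Theorem~\ref{thm2} then yields the two claimed bases directly.

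The only substantive point is the commutation $\He\circ r=r\circ\He$ (and the analogous one for $h$, which together imply $\He\circ\rho=\rho\circ\He$). This can be verified either geometrically, using that $r$ permutes the three outer edges and maps the inner triangle of size $n-3$ onto itself, or by unwinding the coordinate formulas $r\left((a_{i,j})\right)=(a_{j-i+1,n-i+1})$ and $\He\left((a_{i,j})\right)=(a_{1+i,2+j})$ and matching indices. Once that identity is in place, the rest is the same direct substitution used in the proof of Corollary~\ref{cor1}.
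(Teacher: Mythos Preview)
Your proof is correct and follows essentially the same approach as the paper: set $\iStrig{k}=\rho(\Strig{S_k})$, use Proposition~\ref{prop3} and the surjection $\restriction{\rho}{\HST{n}}:\HST{n}\to\DST{n}$ to get $\iStrig{k}\in\DST{n}$, then use the commutation $\He^{k}\circ\rho=\rho\circ\He^{k}$ together with $\partial^{k}S_k=(1)_{n-k}$ to obtain $\He^{k}(\iStrig{k})=\rho(\Strig{(1)_{n-3k}})=U_{n-3k}$, and apply Theorem~\ref{thm2}. The paper's proof is more terse---it simply writes the commutation identity inline without further justification, since the analogous step already appeared in the proof of Corollary~\ref{cor1}---but the logic is identical.
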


\begin{rem}
$\rho\left(\Strig{S_0}\right) = \rho\left(\Strig{(1)_n}\right) = U_n$ in the previous result.
\end{rem}

\begin{proof}
Let $k\in\left\{0,\ldots,\left\lfloor\frac{n}{3}\right\rfloor-1\right\}$. First, since $S_k$ is symmetric, we know from Proposition~\ref{prop3} that $\Strig{S_k}\in\HST{n}$ and thus by the definition of $\rho$, we have $\rho\left(\Strig{S_k}\right)\in\DST{n}$. Moreover, since $\partial^{k}S_k=(1)_{n-k}$, it follows that
$$
\He^{k}\left(\rho\left(\Strig{S_k}\right)\right) = \rho\left(\He^{k}\left(\Strig{S_k}\right)\right) = \rho\left(\Strig{(1)_{n-3k}}\right) = U_{n-3k}.
$$
Therefore, the result follows directly from Theorem~\ref{thm2} by considering the dihedrally symmetric Steinhaus triangles $\iStrig{k} = \rho\left(\Strig{S_{k}}\right)$.
\end{proof}

We end this section by giving an explicit basis of $\DST{n}$ in terms of the $n$-length binary sequences
$$
\BS{n}{k}{\ell} = \left(\binomd{\ell+j-1}{k}\right)_{1\le j\le n},
$$
for all integers $k$ and $\ell$.

\begin{thm}\label{thm8}
Let $n$ and $m$ be non-negative integers such that $m=\left\lfloor\frac{n+3}{6}\right\rfloor + \deltamod{1}{n}{6}$. For every $k\in\left\{0,\ldots,\left\lfloor\frac{n}{3}\right\rfloor-1\right\}$ of the same parity as $n$, let 
\begin{equation*}
\resizebox{\textwidth}{!}{$
\displaystyle\iStrig{k} = \rho\left(\Strig{\BS{n}{k}{\frac{k-n}{2}}}\right) = \left( \left(\binom{\frac{k-n}{2}+j-i}{k+1-i} + \binom{\frac{k+n}{2}-j}{k+i-j} + \binom{\frac{k-n}{2}+i-1}{k+j-n}\right) \tpmod{2}\right)_{1\le i\le j\le n}.
$}
\end{equation*}
Then, when $n$ is even, the set
$$
\left\{ \iStrig{2k}\ \middle|\ k\in\left\{0,\ldots,m-1\right\}\right\}
$$
or when $n$ is odd, the set
$$
\left\{U_n\right\}\cup\left\{ \iStrig{2k+1}\ \middle|\ k\in\left\{0,\ldots,m-2\right\}\right\}
$$
is a basis of $\DST{n}$.
\end{thm}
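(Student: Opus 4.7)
The plan is to deduce Theorem~\ref{thm8} directly from Corollary~\ref{cor6} by exhibiting, for each relevant index $k$, an explicit symmetric binary sequence $S_k$ of length $n$ satisfying $\partial^k S_k = (1)_{n-k}$. The natural candidates are columns of the infinite Pascal matrix modulo $2$, namely $S_k := \BS{n}{k}{(k-n)/2}$. The hypothesis that $k$ and $n$ have the same parity is precisely what is needed to make the shift $(k-n)/2$ an integer, so these sequences are well defined for every index appearing in the statement.

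First I would verify the two hypotheses of Corollary~\ref{cor6} for these $S_k$. Symmetry is exactly the content of Lemma~\ref{lem3}, which applies for every $k\ge 1$ of the same parity as $n$; in the remaining case $k=0$ (which occurs only when $n$ is even), the sequence $\BS{n}{0}{-n/2}$ collapses to the constant $(1)_n$, which is trivially symmetric, so $\iStrig{0}=\rho\left(\Strig{(1)_n}\right)=U_n$. The derivative condition is an immediate application of Proposition~\ref{prop18}:
$$
\partial^k \BS{n}{k}{(k-n)/2} \;=\; \BS{n-k}{0}{(k-n)/2} \;=\; (1)_{n-k},
$$
since $\binomd{a}{0}=1$ for every integer $a$. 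When $n$ is odd, the same-parity restriction forces $k\ge 1$ in our formula, so the role of $S_0$ in Corollary~\ref{cor6} is simply played by the constant sequence $(1)_n$, whose image under $\rho$ is $U_n$; this matches the separate inclusion of $U_n$ in the odd-$n$ basis.

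With both hypotheses verified, Corollary~\ref{cor6} yields immediately that the prescribed sets are bases of $\DST{n}$. It then remains to derive the explicit closed-form expression for $\iStrig{k}$. Proposition~\ref{prop18} gives the entries of $\Strig{\BS{n}{k}{(k-n)/2}}$ as $a_{i,j}=\binomd{(k-n)/2+j-i}{k+1-i}$, and Proposition~\ref{prop16} expresses $\rho\bigl((a_{i,j})\bigr)$ as the sum $a_{i,j}+a_{j-i+1,n-i+1}+a_{n-j+1,n+i-j}$ modulo $2$. A direct index substitution transforms the second and third terms into $\binomd{(k+n)/2-j}{k+i-j}$ and $\binomd{(k-n)/2+i-1}{k+j-n}$ respectively, matching the formula in the statement.

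There is no real obstacle here: once one identifies the family $\BS{n}{k}{(k-n)/2}$ as the natural Pascal-matrix candidate, everything follows from Lemma~\ref{lem3}, Propositions~\ref{prop16} and~\ref{prop18}, and Corollary~\ref{cor6}. The only mild care required is the parity bookkeeping that ensures the shift $(k-n)/2$ is integral and that the $k=0$ term fits the framework in both parities of $n$.
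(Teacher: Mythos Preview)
Your proof is correct and follows essentially the same approach as the paper: choose $S_k=\BS{n}{k}{(k-n)/2}$, invoke Lemma~\ref{lem3} for symmetry and Proposition~\ref{prop18} for $\partial^k S_k=(1)_{n-k}$, then apply Corollary~\ref{cor6}, with $S_0=(1)_n$ handled separately when $n$ is odd. Your treatment is in fact slightly more careful than the paper's, since you explicitly address the case $k=0$ (where Lemma~\ref{lem3}, stated for positive $k$, does not formally apply) and spell out the derivation of the closed-form entries via Propositions~\ref{prop16} and~\ref{prop18}.
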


\begin{proof}
Let $k\in\left\{0,\ldots,\left\lfloor\frac{n}{3}\right\rfloor-1\right\}$ be of the same parity as $n$. First, we know from Lemma~\ref{lem3} that the sequence $\BS{n}{k}{\frac{k-n}{2}}$ is symmetric. Moreover, by Proposition~\ref{prop18}, we have
$$
\partial^k\left(\BS{n}{k}{\frac{k-n}{2}}\right) = \BS{n-k}{0}{\frac{k-n}{2}} = (1)_{n-k}.
$$
We conclude the proof by using Corollary~\ref{cor6} with the sequences $S_k=\BS{n}{k}{\frac{k-n}{2}}$, for all $k\in\left\{0,\ldots,\left\lfloor\frac{n}{3}\right\rfloor-1\right\}$ of the same parity as $n$, and $S_0=(1)_n$, when $n$ is odd.
\end{proof}

\begin{cor}\label{cor11}
Let $n$ and $m$ be non-negative integers such that $m=\left\lfloor\frac{n}{6}\right\rfloor + \deltamod{4}{n}{6}$. For every $k\in\left\{0,\ldots,\left\lfloor\frac{n}{3}\right\rfloor-1\right\}$ of the same parity as $n$, let 
\begin{equation*}
\resizebox{\textwidth}{!}{$
\displaystyle\iStrig{k} = \rho\left(\Strig{\BS{n}{k}{\frac{k-n}{2}}}\right) = \left(\left( \binom{\frac{k-n}{2}+j-i}{k+1-i} + \binom{\frac{k+n}{2}-j}{k+i-j} + \binom{\frac{k-n}{2}+i-1}{k+j-n}\right) \tpmod{2}\right)_{1\le i\le j\le n}.
$}
\end{equation*}
Then, when $n$ is even, the set
$$
\left\{ \iStrig{2k}\ \middle|\ k\in\left\{0,\ldots,m-1\right\}\right\}
$$
or when $n$ is odd, the set
$$
\left\{ \iStrig{2k+1}\ \middle|\ k\in\left\{0,\ldots,m-1\right\}\right\}
$$
is a basis of $\DSTz{n}$.
\end{cor}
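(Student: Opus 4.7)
The plan is to derive Corollary~\ref{cor11} directly from Theorem~\ref{thm8}, using Proposition~\ref{prop9} and Corollary~\ref{cor10}. For even $n$, Proposition~\ref{prop9} gives $\DST{n}=\DSTz{n}$, and a case check on $n\bmod 6\in\{0,2,4\}$ shows that the constant $\lfloor(n+3)/6\rfloor+\delta_{1,(n\bmod 6)}$ appearing in Theorem~\ref{thm8} coincides with the constant $\lfloor n/6\rfloor+\delta_{4,(n\bmod 6)}$ appearing in Corollary~\ref{cor11}. Thus the even-$n$ case is nothing but the even-$n$ case of Theorem~\ref{thm8}.

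For odd $n$, let $m$ denote the constant of Corollary~\ref{cor11}; the same case check yields $\lfloor(n+3)/6\rfloor+\delta_{1,(n\bmod 6)}=m+1$. Theorem~\ref{thm8} then supplies the basis $\{U_n\}\cup\{\iStrig{2k+1}\mid k\in\{0,\ldots,m-1\}\}$ of $\DST{n}$. By Proposition~\ref{prop9}, $\DST{n}=\DSTz{n}\sqcup(\DSTz{n}+U_n)$, so Corollary~\ref{cor10} gives $\dim\DSTz{n}=m$. It therefore suffices to establish the single assertion $\iStrig{2k+1}\in\DSTz{n}$ for every $k\in\{0,\ldots,m-1\}$: once this holds, the $m$ triangles $\iStrig{2k+1}$ are linearly independent elements of $\DSTz{n}$ (being part of a basis of $\DST{n}$) and hence form a basis.

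To verify $\iStrig{2k+1}\in\DSTz{n}$ I would compute the sum $\sigma_2(S')$ of the first row $S'$ of $\iStrig{2k+1}=\rho(\Strig{S})$, where $S=\BS{n}{2k+1}{(2k+1-n)/2}$, and show it vanishes. Since $\rho=r^2+r+id_{\ST{n}}$, the row $S'$ is the componentwise sum of $S$, of the first row of $r(\Strig{S})$ (which by the definition of $r$ is the right side $(a_{j,n})_{1\le j\le n}$ of $\Strig{S}$), and of the first row of $r^2(\Strig{S})$ (which is the reverse $(a_{n-j+1,n-j+1})_{1\le j\le n}$ of the left side). Since $\Strig{S}\in\HST{n}$ by Lemma~\ref{lem3} and Proposition~\ref{prop3}, the horizontal reflection evaluated at the diagonal entry $(j,j)$ forces $a_{j,j}=a_{j,n}$, so the right and left sides of $\Strig{S}$ coincide as sequences. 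Hence the last two summands are reverses of one another and contribute equal sums, leaving $\sigma_2(S')\equiv\sigma_2(S)\pmod 2$. Finally, $S$ is symmetric of odd length $n$, so Lemma~\ref{lem6} gives $\sigma_2(S)=\binomd{k}{2k+1}=0$ because $2k+1>k$.

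The main obstacle, though small, is the cancellation step: showing that for $\Strig{S}\in\HST{n}$ the first-row sum of $\rho(\Strig{S})$ reduces modulo $2$ to $\sigma_2(S)$, via the coincidence of the left and right sides forced by horizontal symmetry. Once this geometric observation is in place, the rest is bookkeeping that combines Theorem~\ref{thm8}, Proposition~\ref{prop9} and Corollary~\ref{cor10}.
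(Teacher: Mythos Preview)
Your proof is correct and follows the same approach as the paper: deduce the corollary from Theorem~\ref{thm8} together with the decomposition of Proposition~\ref{prop9}. The paper's one-line proof leaves implicit the verification, in the odd case, that each $\iStrig{2k+1}$ actually lies in $\DSTz{n}$ rather than merely in $\DST{n}$; you supply this missing step correctly via the horizontal symmetry of $\Strig{S}$ (which forces the left and right sides to coincide, so that the $r$- and $r^2$-contributions to the first-row sum cancel) and Lemma~\ref{lem6} applied to the symmetric sequence $\BS{n}{2k+1}{(2k+1-n)/2}$.
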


\begin{proof}
Using Theorem~\ref{thm8} and Proposition~\ref{prop9}, the result follows since $\DST{n}=\DSTz{n}$ when $n$ is even and $\DST{n} = \DSTz{n} \sqcup \left(\DSTz{n} + U_n\right)$ when $n$ is odd.
\end{proof}

For instance, for $n=22$, we obtain the basis given in Table~\ref{tab2}.
\begin{table}[htbp]
$$
\begin{array}{|c|c|c|}
\hline
\vspace{-2ex} & & \\
k & \BS{22}{2k}{k-11} & \rho\left(\Strig{\BS{22}{2k}{k-11}}\right) \\[1.5ex]
\hline
0 & (1111111111111111111111) & \iStrig{0}=\Strig{(0111111111111111111110)} \\
\hline
1 & (1100110011001100110011) & \iStrig{2}=\Strig{(0110110011001100110110)} \\
\hline
2 & (1000011110000111100001) & \iStrig{4}=\Strig{(0111111110000111111110)} \\
\hline
3 & (0000001100000011000000) & \iStrig{6}=\Strig{(0000000100000010000000)} \\
\hline
\end{array}
$$
\caption{A basis of $\DST{22}$}\label{tab2}
\end{table}
All the dihedrally symmetric Steinhaus triangles of size $22$ are depicted in Figure~\ref{fig10}, where the elements of the basis $\left\{\iStrig{0},\iStrig{2},\iStrig{4},\iStrig{6}\right\}$ are in red and, for every $\Strig{}\in\DST{22}$, the coordinate vector $(x_0,x_2,x_4,x_6)$ of $\Strig{}=x_0\iStrig{0}+x_2\iStrig{2}+x_4\iStrig{4}+x_6\iStrig{6}$ is given.

\begin{figure}[htbp]
\centerline{\includegraphics[width=1.1\textwidth]{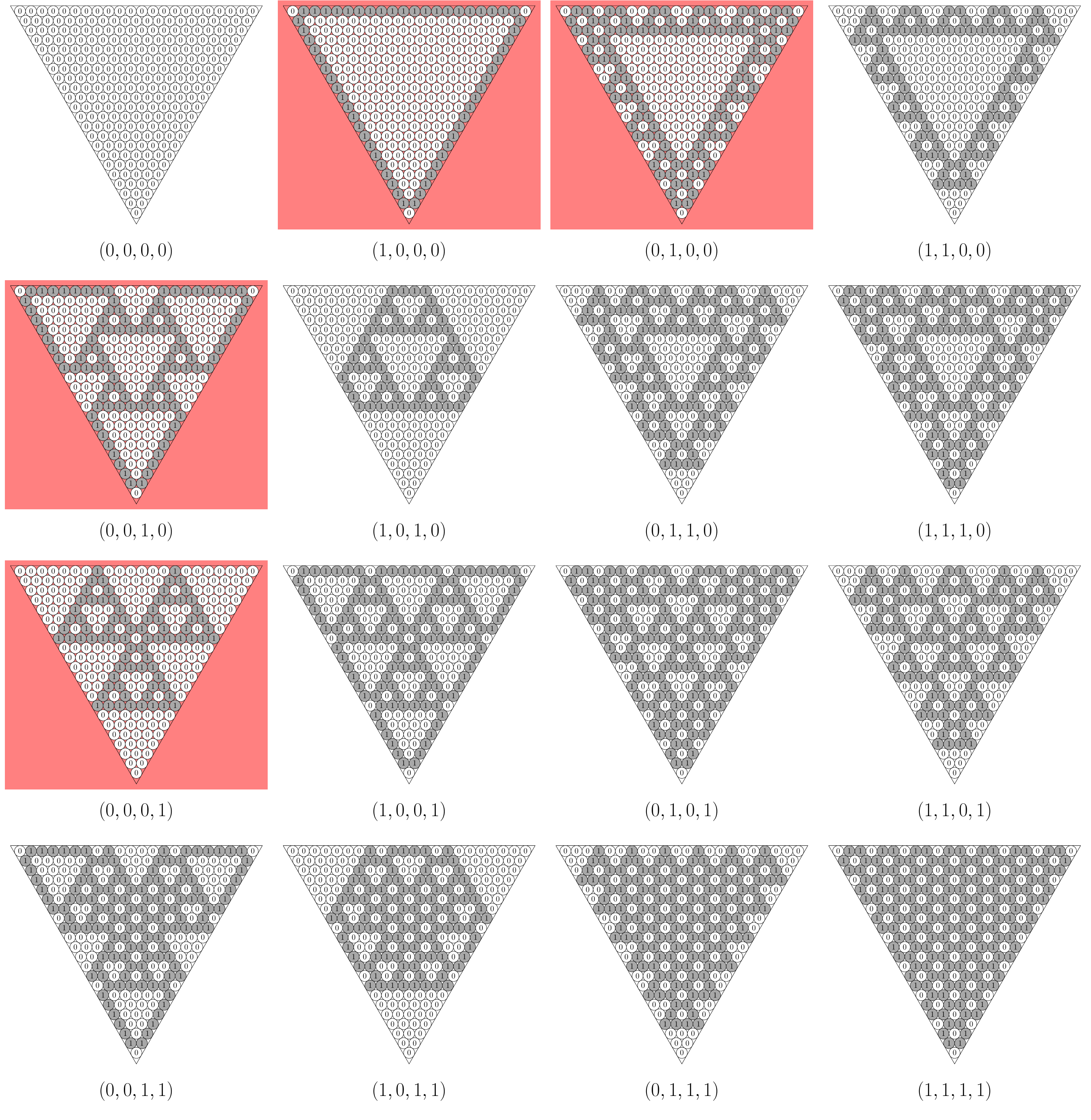}}
\caption{The $16$ triangles of $\DST{22}$ where the $4$ red triangles form a basis}\label{fig10}
\end{figure}

\section{Parity-regular Steinhaus graphs}\label{sec:6}

For any binary sequence $S=\left(a_j\right)_{1\le j\le n}$ of length $n$, we denote by $ir(S)$ the {\em interlacing of the sequence $S$ and its reversed sequence}; i.e., the sequence $ir(S)=\left(b_j\right)_{1\le j\le 2n}$ of length $2n$ defined by
$$
b_{2j-1} = a_j\quad\text{and}\quad b_{2j}=a_{n-j+1},
$$
for all $j\in\{1,\ldots,n\}$. For instance, for $S=(101000)$, we have $ir(S)=({\color{red}1}{\color{blue}0}{\color{red}0}{\color{blue}0}{\color{red}1}{\color{blue}0}{\color{red}0}{\color{blue}1}{\color{red}0}{\color{blue}0}{\color{red}0}{\color{blue}1})$.

For any positive integer $n$, we consider the linear map
$$
\theta : \SG{n} \longrightarrow \ST{2n-1}\quad\text{by}\quad\theta\left(\Sgraph{S}\right) = \Strig{\textstyle\int_{n,0}ir(S)}.
$$ 
Note that the Steinhaus triangle $\Strig{S}\in\ST{n-1}$ is then a subtriangle of $\theta(\Sgraph{S})\in\ST{2n-1}$. Indeed, for the sequence $S=\left(a_j\right)_{1\le j\le n-1}$ and the Steinhaus triangle $\theta(\Sgraph{S})=\Strig{\int_{n,0}ir(S)}=\left(a_{i,j}\right)_{1\le i\le j\le 2n-1}$, the Steinhaus triangle $\Strig{S}$ is simply the subtriangle $\left(a_{2i,2j}\right)_{1\le i\le j\le n-1}$. Since $a_{2,2j}=a_j$ for all $j\in\{1,\ldots,n-1\}$, by the definition of $\theta$ and using the local rule, we have
\begin{equation*}
\resizebox{\textwidth}{!}{$
\displaystyle a_{2i,2j} \equiv a_{2i-1,2j-1} + a_{2i-1,2j} \equiv a_{2i-2,2j-2} + 2a_{2i-2,2j-1} + a_{2i-2,2j} \equiv a_{2i-2,2j-2} + a_{2i-2,2j} \pmod{2},
$}
\end{equation*}
for all integers $i$ and $j$ such that $2\le i\le j\le n-1$. For instance, for the sequence $S=(101000)$, the Steinhaus triangle $\theta(\Sgraph{S})$ is depicted in Figure~\ref{fig11}, where the subtriangle $\Strig{S}$ appears in red.

\begin{figure}[htbp]
\centerline{
\begin{tikzpicture}[scale=0.25]
\pgfmathparse{sqrt(3)}\let\sq\pgfmathresult
\figST{{0,1,1,1,1,0,0,0,1,1,1,1,0}}{black}{col0}{col1}{white}
\draw[fill=col2!50!col1] (1,-\sq) circle (1);
\draw (1,-\sq) node {$1$};
\draw[fill=col2] (5,-\sq) circle (1);
\draw (5,-\sq) node {$0$};
\draw[fill=col2!50!col1] (9,-\sq) circle (1);
\draw (9,-\sq) node {$1$};
\draw[fill=col2] (13,-\sq) circle (1);
\draw (13,-\sq) node {$0$};
\draw[fill=col2] (17,-\sq) circle (1);
\draw (17,-\sq) node {$0$};
\draw[fill=col2] (21,-\sq) circle (1);
\draw (21,-\sq) node {$0$};
\draw[fill=col2!50!col1] (3,-3*\sq) circle (1);
\draw (3,-3*\sq) node {$1$};
\draw[fill=col2!50!col1] (7,-3*\sq) circle (1);
\draw (7,-3*\sq) node {$1$};
\draw[fill=col2!50!col1] (11,-3*\sq) circle (1);
\draw (11,-3*\sq) node {$1$};
\draw[fill=col2] (15,-3*\sq) circle (1);
\draw (15,-3*\sq) node {$0$};
\draw[fill=col2] (19,-3*\sq) circle (1);
\draw (19,-3*\sq) node {$0$};
\draw[fill=col2] (5,-5*\sq) circle (1);
\draw (5,-5*\sq) node {$0$};
\draw[fill=col2] (9,-5*\sq) circle (1);
\draw (9,-5*\sq) node {$0$};
\draw[fill=col2!50!col1] (13,-5*\sq) circle (1);
\draw (13,-5*\sq) node {$1$};
\draw[fill=col2] (17,-5*\sq) circle (1);
\draw (17,-5*\sq) node {$0$};
\draw[fill=col2] (7,-7*\sq) circle (1);
\draw (7,-7*\sq) node {$0$};
\draw[fill=col2!50!col1] (11,-7*\sq) circle (1);
\draw (11,-7*\sq) node {$1$};
\draw[fill=col2!50!col1] (15,-7*\sq) circle (1);
\draw (15,-7*\sq) node {$1$};
\draw[fill=col2!50!col1] (9,-9*\sq) circle (1);
\draw (9,-9*\sq) node {$1$};
\draw[fill=col2] (13,-9*\sq) circle (1);
\draw (13,-9*\sq) node {$0$};
\draw[fill=col2!50!col1] (11,-11*\sq) circle (1);
\draw (11,-11*\sq) node {$1$};
\end{tikzpicture}}
\caption{The Steinhaus triangle $\theta(\Sgraph{101000})$ where $\Strig{(101000)}$ appears in red}\label{fig11}
\end{figure}

By definition of the linear map $ir$, we know that the sequence $ir(S)$ is symmetric and $\sigma_2\left(ir(S)\right)=0$. It follows from Proposition~\ref{prop1} that the sequence $\int_{n,0}ir(S)$ is symmetric too. Therefore, using Proposition~\ref{prop3}, we have
$$
\theta(\Sgraph{S})=\Strig{{\textstyle\int_{n,0}ir(S)}}\in\HST{2n-1},
$$
for all $\Sgraph{S}\in\SG{n}$. Moreover, using Lemma~\ref{lem6} and since by definition, the middle term of the sequence $\int_{n,0}ir(S)$ is $0$, we have that
$$
\sigma_2\left({\textstyle\int_{n,0}ir(S)}\right)=0,
$$
for any sequence $S$ of length $n-1$.

The main result of this section is to show that the restriction of $\theta$ to the linear subspace of even Steinhaus graphs $\ESG{n}$ induces an isomorphism between $\ESG{n}$ and $\DSTz{2n-1}$.

\begin{thm}\label{thm3}
Let $S$ be a binary sequence of length $n-1\ge 0$. Then the Steinhaus graph $\Sgraph{S}$ is even if and only if the Steinhaus triangle $\theta\left(\Sgraph{S}\right)$ is dihedrally symmetric.
\end{thm}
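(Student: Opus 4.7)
My plan is to reduce the claim to a single inclusion, closed off by a dimension match. Write $\Strig{T}:=\theta(\Sgraph{S})$ with $T=\int_{n,0}ir(S)$. The paragraph just before the theorem already shows $\Strig{T}\in\HST{2n-1}$; since $T$ is symmetric of odd length with middle entry $T_n=0$, Lemma~\ref{lem6} forces $\sigma_2(T)=0$, so $\Strig{T}\in\DST{2n-1}$ if and only if $\Strig{T}\in\DSTz{2n-1}$. By Proposition~\ref{prop12}, given the already established horizontal symmetry, $\Strig{T}$ is dihedrally symmetric iff it is also rotationally symmetric; and by Corollary~\ref{cor8} this reduces to the vanishing of the $n-1$ entries $c_{i,2i-1}$ for $i\in\{1,\ldots,n-1\}$, where we set $\Strig{T}=(c_{i,j})_{1\le i\le j\le 2n-1}$---horizontal symmetry kills the row-middle family automatically and identifies the two diagonal-middle families.

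The second ingredient is dimension matching. The map $\theta$ is $\Zn{2}$-linear and injective, because (as spelled out in the text) $\Strig{S}$ embeds inside $\Strig{T}$ as the subtriangle at even indices $(2i,2j)$. Dymacek~\cite{Dymacek:1979aa} gives $\dim\ESG{n}=\lfloor(n-1)/3\rfloor$, while, since $2n-1$ is never $\equiv 4\pmod 6$, Corollary~\ref{cor10} yields $\dim\DSTz{2n-1}=\lfloor(2n-1)/6\rfloor=\lfloor(n-1)/3\rfloor$. Hence, once I establish the one-sided inclusion $\theta(\ESG{n})\subseteq\DSTz{2n-1}$, the restriction $\theta|_{\ESG{n}}\colon\ESG{n}\to\DSTz{2n-1}$ is an injective $\Zn{2}$-linear map between spaces of equal finite dimension, hence bijective; injectivity of $\theta$ on all of $\SG{n}$ then forces $\theta^{-1}(\DSTz{2n-1})=\ESG{n}$, yielding both directions of the claimed equivalence.

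It therefore remains to verify that one-sided inclusion: if $\Sgraph{S}$ is even, then $c_{i,2i-1}=0$ for every $i\in\{1,\ldots,n-1\}$. I would expand $c_{i,2i-1}=\sum_{m=0}^{i-1}\binom{i-1}{m}T_{i+m}$ via Lemma~\ref{lem1}, use $T_j=\sum_{k=j}^{n-1}ir(S)_k\pmod 2$ for $j\le n$ and the mirror identity $T_j=\sum_{k=n}^{j-1}ir(S)_k\pmod 2$ for $j\ge n$ (both read straight off the definition of $\int_{n,0}ir(S)$), swap summations, and exploit $\sum_{m=0}^{i-1}\binom{i-1}{m}=2^{i-1}\equiv 0\pmod 2$ (for $i\ge 2$) to collapse the bi-ranged sum to
\[
c_{i,2i-1}\equiv\sum_{k=i}^{2i-2}ir(S)_k\cdot F(i-1,k-i)\pmod 2,\qquad F(a,b):=\sum_{j=0}^{b}\binom{a}{j}.
\]
A parallel expansion of the degree $d(v)=\sum_{j=1}^{v-1}a_{j,v}+\sum_{j=v+1}^{n}a_{v,j}$ via Lemma~\ref{lem1} applied to the adjacency matrix, combined with the hockey-stick identity, writes each $d(v)$ as an explicit $\Zn{2}$-linear form in $s_1,\ldots,s_{n-1}$. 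Matching these two expansions shows that $c_{i,2i-1}$ lies in $\mathrm{span}_{\Zn{2}}\{d(1),\ldots,d(n)\}$; the small cases $c_{1,1}=d(1)$, $c_{2,3}=d(2)$, $c_{3,5}=d(2)+d(3)$, $c_{4,7}=d(3)+d(4)$, $c_{5,9}=d(3)+d(5)$, $c_{7,13}=d(4)+d(5)+d(6)+d(7)$ make the pattern concrete, and then $\Sgraph{S}\in\ESG{n}$ kills every $d(v)$ and hence every $c_{i,2i-1}$.

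The main obstacle is pinning down this last coefficient match in full generality: the weights $F(i-1,\ell)$ are partial binomial sums modulo~$2$, and aligning them cleanly with the coefficients appearing in the $d(v)$'s calls for Lucas's theorem on the base-$2$ digits of $i-1$. An attractive alternative would be an induction on $n$ driven by Corollary~\ref{cor4}, which recursively describes $\DSTz{N}$ via the subtriangle operator $\He$: using $\He^2$ to pass from size $2n-1$ down to size $2(n-3)-1$ and tracking the corresponding surgery on $S$ would let the inductive hypothesis do the combinatorial bookkeeping in place of an ad hoc binomial identity.
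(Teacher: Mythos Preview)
Your strategic reduction is sound: horizontal symmetry plus $T_n=0$ indeed place $\Strig{T}$ in $\HST{2n-1}$ with $\sigma_2(T)=0$, Corollary~\ref{cor8} reduces dihedral symmetry to the vanishing of the $c_{i,2i-1}$, and the dimension count (Dymacek for $\ESG{n}$, Corollary~\ref{cor10} for $\DSTz{2n-1}$, both giving $\lfloor(n-1)/3\rfloor$) would turn a one-sided inclusion into the full equivalence via injectivity of $\theta$. That framework is correct and genuinely different from the paper's direct argument.

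However, the proposal has a real gap exactly where you flag it: you do not prove that each $c_{i,2i-1}$ lies in the $\Zn{2}$-span of the degree forms $d(1),\ldots,d(n)$. A handful of small cases and the observation that Lucas' theorem or an $\He^2$-induction ``would'' close it is not a proof; and this step is the entire content of the implication you need. In fact the unconditional identity $c_{i,2i-1}\equiv d(v_i)$ is \emph{false} in general---the paper's Lemma~\ref{lem4} shows that $d(v_i)\equiv c_{i,2i-1}$ only under the extra hypothesis that column $C_{2i-2}$ of $\Strig{T}$ is symmetric, and without that hypothesis an obstruction term survives. So the clean ``match coefficients and cite Lucas'' route is more delicate than it looks: you would have to track those obstruction terms and show they themselves are degree combinations, which is not obvious.

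The paper sidesteps this entirely. Rather than a dimension-count end-run, it proves both directions directly through Lemma~\ref{lem4}: for the hard direction (even $\Rightarrow$ dihedral) it runs an induction on the column index, using $d(v_i)\equiv 0$ together with Proposition~\ref{prop1} and Lemma~\ref{lem6} to propagate column symmetry from $C_{2i}$ to $C_{2i+1}$ and $C_{2i+2}$. This bootstrapping is precisely what legitimises the conditional identity at each step, and it never appeals to Dymacek's dimension formula---indeed, that formula is re-derived afterwards as a corollary, so importing it as you do would make that later corollary vacuous (though not circular).
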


The proof is based on the following

\begin{lem}\label{lem4}
Let $S$ be a binary sequence of length $n-1\ge 0$. For all $i\in\{1,\ldots,n\}$, let $v_i$ denote the $i$th vertex of the Steinhaus graph $\Sgraph{S}$ and let $\theta\left(\Sgraph{S}\right)=\left(b_{i,j}\right)_{1\le i\le j\le 2n-1}\in\HST{2n-1}$. Then,
$$
\deg(v_1) \equiv b_{1,1} \pmod{2}
$$
and, for any $i\in\{2,\ldots,n\}$, if the $(2i-2)$-th column $C_{2i-2}=\left(b_{j,2i-2}\right)_{1\le j\le 2i-2}$ of $\theta\left(\Sgraph{S}\right)$ is symmetric, then
$$
\deg(v_i) \equiv b_{i,2i-1} \pmod{2}.
$$
\end{lem}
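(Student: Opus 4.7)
Part 1 is immediate: $\deg(v_1)\equiv\sum_{j=1}^{n-1}a_j=\sigma(S)\pmod 2$, and the antiderivation formula yields $b_{1,1}\equiv\sum_{k=1}^{n-1}ir(S)_k\pmod 2$; since $ir(S)$ is palindromic of even length $2(n-1)$, its first $n-1$ terms sum (over $\Z$) to $\sigma(S)$, giving $b_{1,1}\equiv\sigma(S)\equiv\deg(v_1)\pmod 2$.

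For part 2 my plan rests on three ingredients. First, using the embedding $\Strig{S}_{p,q}=b_{2p,2q}$ recalled in the text, the adjacency entry at $(p,q)$ with $p<q$ equals $b_{2p,2q-2}$; summing row $i$ of the adjacency matrix (and using $a_{j,i}=a_{i,j}$ for the $j<i$ part) gives $\deg(v_i)\equiv E+Q\pmod 2$, where
$$E=\sum_{p=1}^{i-1}b_{2p,2i-2}\qquad\text{and}\qquad Q=\sum_{q=i}^{n-1}b_{2i,2q}.$$
Second, iterating the local rule $b_{r,2i-1}=b_{r-1,2i-2}+b_{r-1,2i-1}$ down column $2i-1$ telescopes to
$$b_{R,2i-1}\equiv b_{1,2i-1}+\sum_{r=1}^{R-1}b_{r,2i-2}\pmod 2;$$
taking $R=i$ yields $b_{i,2i-1}=b_{1,2i-1}+A$ with $A=\sum_{r=1}^{i-1}b_{r,2i-2}$, while $R=2i-1$ expresses $b_{2i-1,2i-1}$ as $b_{1,2i-1}$ plus the entire column sum of $C_{2i-2}$. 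Third, the horizontal symmetry of $\theta(\Sgraph{S})\in\HST{2n-1}$ (always valid) will be combined with the hypothesis that $C_{2i-2}$ is symmetric to produce the required cancellations.

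Specifically, the column symmetry of $C_{2i-2}$ pairs $r$ with $2i-1-r$; since $2i-1$ is odd, these pairs match first-half indices with second-half indices and simultaneously match even indices with odd ones, so both $A$ and $E$ receive one element per pair of a common value, giving $E\equiv A\pmod 2$. The same pairing makes the full column sum $2A\equiv 0\pmod 2$, so the telescoping identity for $R=2i-1$ collapses to $b_{2i-1,2i-1}\equiv b_{1,2i-1}\pmod 2$. For $Q$, applying $b_{2i,s}=b_{2i-1,s-1}+b_{2i-1,s}$ term by term rewrites it as the sum of all of row $2i-1$ except its last entry $b_{2i-1,2n-1}$; since this odd-length row is a palindrome, Lemma~\ref{lem6} identifies its sum with the middle entry $b_{2i-1,n+i-1}$ modulo $2$, and the palindromy also gives $b_{2i-1,2n-1}=b_{2i-1,2i-1}$, whence $Q\equiv b_{2i-1,n+i-1}+b_{2i-1,2i-1}\pmod 2$. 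Finally, the palindromy of the even-length row $2i-2$ forces $b_{2i-2,n+i-2}=b_{2i-2,n+i-1}$, and one application of the local rule gives $b_{2i-1,n+i-1}=b_{2i-2,n+i-2}+b_{2i-2,n+i-1}\equiv 0\pmod 2$ unconditionally. Assembling everything gives $Q\equiv b_{1,2i-1}\pmod 2$ and therefore $\deg(v_i)\equiv A+b_{1,2i-1}=b_{i,2i-1}\pmod 2$.

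The main subtlety is that the column hypothesis must be used twice — once to equate $E\equiv A$ via the parity/half pairing, and once to kill the tail in the extended telescoping identity for $b_{2i-1,2i-1}$ — while the crucial auxiliary vanishing $b_{2i-1,n+i-1}\equiv 0$ comes for free from the horizontal symmetry of $\theta(\Sgraph{S})$ one row higher. Without this three-way interplay the spurious terms introduced by the local-rule manipulations on $E$, $Q$, and $b_{i,2i-1}$ would not cancel.
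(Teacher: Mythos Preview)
Your proof is correct and follows essentially the same architecture as the paper's: decompose $\deg(v_i)$ into the column piece $E=\sum_{p=1}^{i-1}b_{2p,2i-2}$ and the row piece $Q=\sum_{q=i}^{n-1}b_{2i,2q}$, use the column-symmetry hypothesis on $C_{2i-2}$ to handle $E$, use the horizontal symmetry of $\theta(\Sgraph{S})$ to handle $Q$, and invoke the vanishing $b_{2i-1,n+i-1}\equiv 0$ to cancel the residual term.

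The organisational details differ slightly. The paper converts $E$ and $Q$ via the double-counting identities $\sum_{p=1}^{i-1}b_{2p,2i-2}=\sum_{j=i}^{2i-2}b_{j,2i-2}$ and $\sum_{q=i}^{n-1}b_{2i,2q}=\sum_{j=2i}^{n+i-1}b_{2i,j}$, then telescopes each to reach $b_{i,2i-1}+b_{2i-1,n+i-1}$ directly (the two copies of $b_{2i-1,2i-1}$ cancel), and finally reads off $b_{2i-1,n+i-1}=0$ from Proposition~\ref{prop11} together with the built-in $b_{1,n}=0$. You instead route through $b_{1,2i-1}$: your pairing argument gives $E=A$ exactly, your column telescoping gives $b_{i,2i-1}=b_{1,2i-1}+A$ and $b_{2i-1,2i-1}\equiv b_{1,2i-1}$, and you derive $b_{2i-1,n+i-1}\equiv 0$ self-containedly from the palindromy of row $2i-2$ plus one application of the local rule. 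Both routes are equally short; yours has the mild advantage that the key vanishing is obtained without appealing to Proposition~\ref{prop11} or to the special value $b_{1,n}=0$.
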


\begin{proof}[Proof of Lemma~\ref{lem4}]
We consider the adjacency matrix $\Smat{S}=\left(a_{i,j}\right)_{1\le i,j\le n}$ of the Steinhaus graph $\Sgraph{S}$. We already know that the upper-triangular part of $\Smat{S}$; i.e., the Steinhaus triangle $\Strig{S}=\left(a_{i,j}\right)_{1\le i<j\le n}$, corresponds to the subtriangle $\left(b_{2i,2j}\right)_{1\le i\le j\le n-1}$ of $\theta\left(\Sgraph{S}\right)$. In other words, we have $a_{i,j}=b_{2i,2j-2}$, for all integers $i$ and $j$ such that $1\le i<j\le n$. Then,
\begin{equation}\label{eq4}
\deg(v_1) = \sum_{j=2}^{n}a_{1,j} = \sum_{j=2}^{n}b_{2,2j-2} = \sum_{j=1}^{n-1}b_{2,2j},
\end{equation}
\begin{equation}\label{eq5}
\deg(v_i) = \sum_{j=1}^{i-1}a_{j,i} + \sum_{j=i+1}^{n}a_{i,j} = \sum_{j=1}^{i-1}b_{2j,2i-2} + \sum_{j=i+1}^{n}b_{2i,2j-2} = \sum_{j=1}^{i-1}b_{2j,2i-2} + \sum_{j=i}^{n-1}b_{2i,2j},
\end{equation}
for all $i\in\{2,\ldots,n-1\}$, and
\begin{equation}\label{eq6}
\deg(v_n) = \sum_{j=1}^{n-1}a_{j,n} = \sum_{j=1}^{n-1}b_{2j,2n-2}.
\end{equation}

We claim that
\begin{equation}\label{eq7}
\sum_{j=i}^{n-1}b_{2i,2j} = \sum_{j=2i}^{n+i-1}b_{2i,j},
\end{equation}
for all $i\in\{1,\ldots,n-1\}$. Since $\theta\left(\Sgraph{S}\right)$ is horizontally symmetric, we know that its $i$th row $R_i=\left(b_{i,j}\right)_{i\le j\le 2n-1}$ is symmetric for all $i\in\{1,\ldots,2n-1\}$. Let $i\in\{1,\ldots,n-1\}$. Since the sequence $R_{2i}$ is symmetric of even length $2(n-i)$ with $b_{2i,j}=b_{2i,2n-1+2i-j}$ for all $j\in\{2i,\ldots,2n-1\}$, we obtain the following identities by dividing in half $\sigma(R_{2i})$ in two different ways:
$$
\sigma(R_{2i}) = \sum_{j=2i}^{2n-1}b_{2i,j} = \sum_{j=i}^{n-1}b_{2i,2j} + \sum_{j=i}^{n-1}b_{2i,2j+1} = \sum_{j=i}^{n-1}b_{2i,2j} + \sum_{j=i}^{n-1}b_{2i,2n-1+2i-(2j+1)} = 2\sum_{j=i}^{n-1}b_{2i,2j}
$$
and
$$
\sigma(R_{2i}) = \sum_{j=2i}^{2n-1}b_{2i,j} = \sum_{j=2i}^{n+i-1}b_{2i,j} + \sum_{j=n+i}^{2n-1}b_{2i,j} = \sum_{j=2i}^{n+i-1}b_{2i,j} + \sum_{j=n+i}^{2n-1}b_{2i,2n-1+2i-j} = 2\sum_{j=2i}^{n+i-1}b_{2i,j}.
$$
Combining these two identities, the claim \eqref{eq7} is proved.

Using \eqref{eq7} and the local rule, we deduce from \eqref{eq4} that
\begin{equation}\label{eq9}
\deg(v_1) = \sum_{j=1}^{n-1}b_{2,2j} = \sum_{j=2}^{n}b_{2,j} \equiv \sum_{j=2}^{n}(b_{1,j-1}+b_{1,j}) = \sum_{j=1}^{n-1}b_{1,j}+\sum_{j=2}^{n}b_{1,j} \equiv b_{1,1}+b_{1,n}\pmod{2}.
\end{equation}
Now, let $i\in\{2,\ldots,n\}$ and suppose that the column $C_{2i-2}$ of even size $2i-2$ is symmetric. Then, as with \eqref{eq7}, using a double counting of $\sigma(C_{2i-2})$, we obtain the following identity
\begin{equation}\label{eq8}
\sum_{j=1}^{i-1}b_{2j,2i-2} = \sum_{j=i}^{2i-2}b_{j,2i-2}.
\end{equation}
Using \eqref{eq7}, \eqref{eq8} and the local rule, we deduce from \eqref{eq5} that
\begin{equation}\label{eq10}
\deg(v_i) \begin{array}[t]{l}
= \displaystyle\sum_{j=1}^{i-1}b_{2j,2i-2} + \sum_{j=i}^{n-1}b_{2i,2j} = \sum_{j=i}^{2i-2}b_{j,2i-2} + \sum_{j=2i}^{n+i-1}b_{2i,j} \\ \ \\
\equiv \displaystyle\sum_{j=i}^{2i-2}(b_{j,2i-1}+b_{j+1,2i-1}) + \sum_{j=2i}^{n+i-1}(b_{2i-1,j-1}+b_{2i-1,j}) \\ \ \\
= \displaystyle\sum_{j=i}^{2i-2}b_{j,2i-1} + \sum_{j=i+1}^{2i-1}b_{j,2i-1} + \sum_{j=2i-1}^{n+i-2}b_{2i-1,j} + \sum_{j=2i}^{n+i-1}b_{2i-1,j} \\ \ \\
\equiv b_{i,2i-1} + b_{2i-1,2i-1} + b_{2i-1,2i-1} + b_{2i-1,n+i-1} \equiv b_{i,2i-1} + b_{2i-1,n+i-1} \pmod{2},
\end{array}
\end{equation}
for all $i\in\{2,\ldots,n-1\}$. From \eqref{eq6}, we have that
\begin{equation}\label{eq11}
\deg(v_n) \begin{array}[t]{l}
= \displaystyle\sum_{j=1}^{n-1}b_{2j,2n-2} = \sum_{j=n}^{2n-2}b_{j,2n-2} \equiv \sum_{j=n}^{2n-2}(b_{j,2n-1}+b_{j+1,2n-1}) \\ \ \\
= \displaystyle\sum_{j=n}^{2n-2}b_{j,2n-1} + \sum_{j=n+1}^{2n-1}b_{j,2n-1} \equiv b_{n,2n-1} + b_{2n-1,2n-1} \pmod{2}.
\end{array}
\end{equation}

Since $\theta\left(\Sgraph{S}\right)$ is horizontally symmetric, we know by Proposition~\ref{prop11} that
$$
b_{2n-1-2i,2n-1-i}=0\ \text{for all}\ i\in\{0,\ldots,n-2\}.
$$
Moreover, by definition of $\theta(\Sgraph{S})=\Strig{{\textstyle\int_{n,0}ir(S)}}$, we have that $b_{1,n}=0$. Therefore, we have
\begin{equation}\label{eq3}
b_{2i+1,n+i} = 0\ \text{for all}\ i\in\{0,\ldots,n-1\}.
\end{equation}
Finally, by combining \eqref{eq9}, \eqref{eq10} and \eqref{eq11} with \eqref{eq3}, Lemma~\ref{lem4} is proved.
\end{proof}

\begin{proof}[Proof of Theorem~\ref{thm3}]
First, suppose that $\theta\left(\Sgraph{S}\right)$ is dihedrally symmetric. Then, the Steinhaus triangle $r\left(\theta\left(\Sgraph{S}\right)\right)=\theta\left(\Sgraph{S}\right)$ is horizontally symmetric and the column $C_i$ of $\theta\left(\Sgraph{S}\right)$ is symmetric for all $i\in\left\{1,\ldots,2n-1\right\}$. It follows from Lemma~\ref{lem4} that
$$
\deg(v_i) \equiv b_{i,2i-1} \pmod{2}
$$
for all $i\in\{1,\ldots,n\}$. Since $\theta\left(\Sgraph{S}\right)$ is dihedrally symmetric, we know from Corollary~\ref{cor8} that $b_{i,2i-1}=0$ for all $i\in\{1,\ldots,n-1\}$. Moreover, since $\theta\left(\Sgraph{S}\right)\in\DSTz{2n-1}$, we have $b_{n,2n-1}=b_{1,n}=0$. Therefore, for every $i\in\{1,\ldots,n\}$, the vertex $v_i$ is of even degree and the Steinhaus graph $\Sgraph{S}$ is even.

Conversely, suppose that the Steinhaus graph $\Sgraph{S}$ is even. We prove, by induction on $i$, that all the columns $C_i$ of $\theta\left(\Sgraph{S}\right)$ are symmetric. From Lemma~\ref{lem4}, we know that $b_{1,1}\equiv\deg(v_1)\tpmod{2}$. Since $\deg(v_1)$ is even, it follows that $b_{1,1}=0$. Moreover, since $b_{2,2} \equiv b_{1,1}+b_{1,2} \equiv b_{1,2} \tpmod{2}$, we have that $b_{1,2}=b_{2,2}$. Therefore, the columns $C_1$ and $C_2$ are symmetric. Suppose now that the columns $C_1, C_2, \ldots,C_{2i}$ are symmetric for some $i\in\{1,\ldots,n-1\}$. First, since $C_{2i}$ is symmetric of even length, we know from Lemma~\ref{lem6} that $\sigma_2\left(C_{2i}\right)=0$. Therefore, since $\partial C_{2i+1}=C_{2i}$, it follows from Proposition~\ref{prop1} that $C_{2i+1}$ is symmetric. Moreover, since $C_{2i}$ is symmetric and the vertex $v_{i+1}$ is of even degree, we obtain by Lemma~\ref{lem4} that
$$
b_{i+1,2i+1} \equiv \deg(v_{i+1}) \equiv 0 \pmod{2}.
$$
Since $b_{i+1,2i+1}=0$, from Lemma~\ref{lem6} we have that $\sigma_2\left(C_{2i+1}\right)=0$. Therefore, when $i<n-1$, since $\partial C_{2i+2}=C_{2i+1}$, using Proposition~\ref{prop1} again, we have that $C_{2i+2}$ is symmetric. This concludes the proof that all the columns $C_i$ of $\theta\left(\Sgraph{S}\right)$ are symmetric. Obviously, it follows that the Steinhaus triangle $r\left(\theta\left(\Sgraph{S}\right)\right)$ is horizontally symmetric. Finally, since the triangles $\theta\left(\Sgraph{S}\right)$ and $r\left(\theta\left(\Sgraph{S}\right)\right)$ are horizontally symmetric, we know from Proposition~\ref{prop12} that $\theta\left(\Sgraph{S}\right)$ is dihedrally symmetric.
\end{proof}

\begin{cor}\label{cor9}
For any positive integer $n$, the restriction
$$
\theta\vert_{\ESG{n}} : \ESG{n} \longrightarrow \DSTz{2n-1}
$$
is an isomorphism.
\end{cor}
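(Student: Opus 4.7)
The plan is to verify three points in sequence: well-definedness of the restriction (image lies in $\DSTz{2n-1}$), injectivity, and surjectivity, with the surjectivity step being the only substantial one.

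First I would establish well-definedness. For any $\Sgraph{S}\in\ESG{n}$, Theorem~\ref{thm3} gives $\theta(\Sgraph{S})\in\DST{2n-1}$. To refine this to $\DSTz{2n-1}$, I observe that $\theta(\Sgraph{S})=\Strig{\int_{n,0}ir(S)}$ has a symmetric first row (since $ir(S)$ is symmetric with even sum, whence its antiderived sequence $\int_{n,0}ir(S)$ is symmetric by Proposition~\ref{prop1}), and by the defining formula of $\int_{n,0}$ the $n$-th (middle) term of this odd-length sequence is $0$. Lemma~\ref{lem6} then yields $\sigma_2\left(\int_{n,0}ir(S)\right)=0$, so $\theta(\Sgraph{S})\in\DSTz{2n-1}$.

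Injectivity is immediate: if $\theta(\Sgraph{S})=0$, then $\int_{n,0}ir(S)=(0)_{2n-1}$, hence $ir(S)=\partial\left(\int_{n,0}ir(S)\right)=(0)_{2n-2}$ by item \ref{item1}) of the fundamental theorem of calculus, so $S=(0)_{n-1}$.

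The core of the argument is surjectivity. Given $\Strig{T}\in\DSTz{2n-1}$ with $T=(t_j)_{1\le j\le 2n-1}$, the sequence $T$ is symmetric (Proposition~\ref{prop3}), has odd length, and satisfies $\sigma_2(T)=0$; by Lemma~\ref{lem6} this forces $t_n=0$. Since $T$ is symmetric, $\partial T$ is a symmetric binary sequence of even length $2n-2$ by Proposition~\ref{prop1}. The key observation (which I would verify by a short computation) is that \emph{every} symmetric sequence of even length $2n-2$ is the interlacing $ir(S)$ of a unique $S$ of length $n-1$ with its reverse: explicitly, one sets $S=(b_1,b_3,\ldots,b_{2n-3})$ from $\partial T = (b_1,\ldots,b_{2n-2})$, and the symmetry identity $b_j=b_{2n-1-j}$ forces $b_{2j}=b_{2n-2j-1}=a_{n-j}$, matching the definition of $ir(S)$. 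With this $S$ in hand, item \ref{item2}) of the fundamental theorem of calculus gives
\[
\int_{n,0}(\partial T) = T + \left(t_n + 0\bmod 2\right)_{2n-1} = T + (0)_{2n-1} = T,
\]
so $\theta(\Sgraph{S})=\Strig{\int_{n,0}ir(S)}=\Strig{\int_{n,0}\partial T}=\Strig{T}$. Finally, since $\Strig{T}\in\DSTz{2n-1}\subseteq\DST{2n-1}$ is dihedrally symmetric, Theorem~\ref{thm3} implies $\Sgraph{S}\in\ESG{n}$, completing surjectivity.

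The only step that requires any care is the bijection between symmetric even-length sequences and interlacings $ir(S)$; once this is noted, the rest is a direct application of Theorem~\ref{thm3} and the fundamental theorem of calculus for derived/antiderived sequences.
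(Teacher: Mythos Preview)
Your proof is correct. The paper's own proof takes a slightly more compact route: it directly exhibits the inverse map $\psi\colon \DSTz{2n-1}\to\ESG{n}$ defined by $\psi\bigl((a_{i,j})\bigr)=\Sgraph{(a_{2,2j})_{1\le j\le n-1}}$ and then appeals to Theorem~\ref{thm3} for well-definedness on both sides, leaving the verification of $\psi\circ\theta=\mathrm{id}$ and $\theta\circ\psi=\mathrm{id}$ as ``easy''. Your argument is essentially the same construction unpacked: your preimage $S=(b_1,b_3,\ldots,b_{2n-3})$ extracted from $\partial T$ is precisely $(a_{2,2},a_{2,4},\ldots,a_{2,2n-2})$, i.e.\ the paper's $\psi(\Strig{T})$. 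What you gain is an explicit justification---via the characterization of symmetric even-length sequences as interlacings $ir(S)$ and the fundamental theorem of calculus---of the step the paper merely asserts; what the paper gains is brevity by naming the inverse and invoking the subtriangle observation made just before Theorem~\ref{thm3}.
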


\begin{proof}
Let $n$ be a positive integer. We consider the linear map
$$
\psi : \DSTz{2n-1} \longrightarrow \ESG{n}\quad\text{by}\quad\psi\left(\left(a_{i,j}\right)_{1\le i\le j\le 2n-1}\right) = \Sgraph{\left(a_{2,2j}\right)_{1\le j\le n-1}}.
$$
We know from Theorem~\ref{thm3} that the linear maps $\theta\vert_{\ESG{n}}$ and $\psi$ are well defined. Moreover, it is easy to verify that $\psi\circ\theta\vert_{\ESG{n}}=id_{\ESG{n}}$ and $\theta\vert_{\ESG{n}}\circ\psi=id_{\DSTz{2n-1}}$. This completes the proof.
\end{proof}

This new result permits us to obtain the following two corollaries that were first proved in \cite{Dymacek:1979aa}.

\begin{cor}
For all positive integers $n$, we have $\dim\ESG{n}=\left\lfloor\frac{n-1}{3}\right\rfloor$.
\end{cor}

\begin{proof}
By Corollary~\ref{cor9}, the vector space $\ESG{n}$ is isomorphic to $\DSTz{2n-1}$. From Corollary~\ref{cor10}, we deduce that
$$
\dim\ESG{n}=\dim\DSTz{2n-1}=\left\lfloor\frac{2n-1}{6}\right\rfloor+\deltamod{4}{2n-1}{6}=\left\lfloor\frac{2n-1}{6}\right\rfloor=\left\lfloor\frac{n-1}{3}\right\rfloor,
$$
for all positive integers $n$.
\end{proof}

\begin{cor}
The Steinhaus matrix $\Smat{S}$ associated to an even Steinhaus graph $\Sgraph{S}$ is doubly symmetric, i.e., all the diagonals of $\Smat{S}$ are symmetric.
\end{cor}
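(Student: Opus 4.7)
The plan is to exploit Theorem~\ref{thm3}: since $\Sgraph{S}$ is even, the triangle $\theta(\Sgraph{S})$ is dihedrally symmetric, and I will translate its symmetries directly into a persymmetry relation for $\Smat{S}$.

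First I would set notation: write $\Smat{S} = (c_{i,j})_{1\le i,j\le n}$ and $\theta(\Sgraph{S}) = (b_{i,j})_{1\le i\le j\le 2n-1}$. From the paragraph introducing $\theta$, the upper triangular entries of $\Smat{S}$ satisfy $c_{i,j} = b_{2i,\,2j-2}$ for $1\le i<j\le n$; the remaining entries are determined by $c_{i,i}=0$ and $c_{j,i}=c_{i,j}$. The statement that every diagonal of $\Smat{S}$ is symmetric unpacks to the persymmetry condition $c_{i,j}=c_{n-j+1,\,n-i+1}$: setting $j=i+d$, the $d$-th diagonal $(c_{k,\,k+d})_{1\le k\le n-d}$ is a palindrome precisely when it is invariant under the reversing substitution $k\mapsto n-d-k+1$, which is exactly that persymmetry relation.

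The only non-trivial case of the persymmetry is $1\le i<j\le n$; the case $i>j$ follows by combining this with $c_{i,j}=c_{j,i}$, and $i=j$ is immediate since $c_{i,i}=0=c_{n-i+1,\,n-i+1}$. For $i<j$, I would apply horizontal symmetry $b_{i',j'}=b_{i',\,2n-1+i'-j'}$ and then rotational symmetry $b_{i',j'}=b_{j'-i'+1,\,2n-i'}$, both valid on $\theta(\Sgraph{S})$ because it lies in $\DSTz{2n-1}$ by Theorem~\ref{thm3}, giving
\begin{equation*}
c_{i,j} \;=\; b_{2i,\,2j-2} \;=\; b_{2i,\,2n+2i-2j+1} \;=\; b_{2n-2j+2,\,2n-2i} \;=\; c_{n-j+1,\,n-i+1},
\end{equation*}
where the last equality identifies the entry as $b_{2(n-j+1),\,2(n-i+1)-2}$ and the fact that $n-j+1<n-i+1$ ensures we are in the regime where the correspondence $c_{i',j'}=b_{2i',\,2j'-2}$ applies.

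I do not expect any real obstacle. Once the embedding $c_{i,j}=b_{2i,2j-2}$ of $\Smat{S}$ into $\theta(\Sgraph{S})$ is made explicit, the persymmetry reduces to the composition of two of the three dihedral symmetries of $\theta(\Sgraph{S})$, and the equivalence between persymmetry and the palindromicity of every diagonal is a standard reindexing argument.
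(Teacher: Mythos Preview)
Your proof is correct and follows essentially the same route as the paper's. Both proofs invoke Theorem~\ref{thm3} to place $\theta(\Sgraph{S})$ in $\DSTz{2n-1}$ and then exploit the single symmetry $b_{i',j'}=b_{2n-j',\,2n-i'}$ to deduce persymmetry; the paper obtains this relation by observing that the diagonals of $\theta(\Sgraph{S})$ are symmetric (equivalently, $r^2\theta(\Sgraph{S})\in\HST{2n-1}$), whereas you obtain it as the composite $r\circ h$, but these are the same automorphism and the remaining index computation is identical up to the harmless shift between triangle indices $b_{2i,2j}$ and matrix indices $b_{2i,2j-2}$.
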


\begin{rem}
This was a key result in the first proof of the formula $\dim\ESG{n}=\left\lfloor\frac{n-1}{3}\right\rfloor$ in \cite{Dymacek:1979aa}. Another simple proof of this result can also be found in \cite{Chappelon:2009aa}.
\end{rem}

\begin{proof}
Let $S$ be a binary sequence of length $n-1$ whose associated Steinhaus graph $\Sgraph{S}$ is even. In other words, we want to prove that the Steinhaus triangle $r^2\left(\Strig{S}\right)$ is horizontally symmetric. Let $\theta\left(\Sgraph{S}\right)=\left(b_{i,j}\right)_{1\le i\le j\le 2n-1}$. Since $\theta\left(\Sgraph{S}\right)\in\DSTz{2n-1}$ by Theorem~\ref{thm3}, it follows that $r^2\left(\theta\left(\Sgraph{S}\right)\right)$ is horizontally symmetric. Therefore, the diagonal $D_i=\left(b_{j,i+j}\right)_{1\le j\le 2n-1-i}$ is symmetric for all $i\in\{0,\ldots,2n-2\}$. It follows that
\begin{equation}\label{eq12}
b_{j,i+j} = b_{2n-i-j,2n-j},
\end{equation}
for all $j\in\{1,\ldots,2n-1-i\}$ and for all $i\in\{0,\ldots,2n-2\}$. Let $\Strig{S}=\left(a_{i,j}\right)_{1\le i\le j\le n-1}$. As already seen, $\Strig{S}$ corresponds to the subtriangle $\Strig{S}=\left(b_{2i,2j}\right)_{1\le i\le j\le n-1}$ of $\theta\left(\Sgraph{S}\right)$. Therefore, we have $a_{i,j}=b_{2i,2j}$ for all integers $i$ and $j$ such that $1\le i\le j\le n-1$. Let $i\in\{0,\ldots,n-2\}$. From \eqref{eq12}, we know that
$$
a_{j,i+j} = b_{2j,2i+2j} = b_{2n-2i-2j,2n-2j} = b_{2(n-i-j),2(n-j)} = a_{n-i-j,n-j},
$$
for all $j\in\{1,\ldots,n-1-i\}$. We conclude that the diagonal $\left(a_{j,i+j}\right)_{1\le j\le n-1-i}$ is symmetric for all $i\in\{0,\ldots,n-2\}$ and the Steinhaus triangle $r^2\left(\Strig{S}\right)$ is horizontally symmetric.
\end{proof}

Using Theorem~\ref{thm3} and the results of Section~\ref{sec:5}, we are now ready for giving a basis of $\ESG{n}$.

\begin{thm}\label{thm9}
Let $n$ be a positive integer. The set
$$
\left\{ \psi\rho\left(\Strig{\BS{2n-1}{2k+1}{k-n+1}}\right) \ \middle|\ k\in\left\{0,\ldots,\left\lfloor\frac{n-1}{3}\right\rfloor-1\right\} \right\}
$$
is a basis of $\ESG{n}$, where $\psi\rho\left(\Strig{\BS{2n-1}{2k+1}{k-n+1}}\right) = \Sgraph{S_k}$ with
$$
S_k = \left(\left(\binom{k-n+2j-1}{2k}+\binom{k+n-2j}{2k-2j+3}+\binom{k-n+2}{2k-2n+2j+2}\right)\tpmod{2}\right)_{1\le j\le n-1},
$$
for all $k\in\left\{0,\ldots,\left\lfloor\frac{n-1}{3}\right\rfloor-1\right\}$.
\end{thm}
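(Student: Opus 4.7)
The plan is to pull back an explicit basis of $\DSTz{2n-1}$ through the isomorphism $\psi$ of Corollary~\ref{cor9}, and then unfold the formula by restricting the Steinhaus triangle to its second row at even positions. Concretely, the two inputs are already at hand: Corollary~\ref{cor11} supplies a basis of $\DSTz{N}$ for any non-negative integer $N$ in terms of the triangles $\rho\left(\Strig{\BS{N}{K}{(K-N)/2}}\right)$, with $K$ of the same parity as $N$ and in a prescribed range; Corollary~\ref{cor9} tells us that $\psi : \DSTz{2n-1}\rightarrow\ESG{n}$ is an isomorphism, so any basis of the former is mapped bijectively to a basis of the latter.

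First I would specialize Corollary~\ref{cor11} to $N=2n-1$, which is odd. In this case $N\bmod 6\in\{1,3,5\}$, so $\delta_{4,(N\bmod 6)}=0$, and the number of basis elements is $\left\lfloor\frac{2n-1}{6}\right\rfloor$. A short case analysis on $n\bmod 3$ shows that this number equals $\left\lfloor\frac{n-1}{3}\right\rfloor$, matching the index range $k\in\{0,\ldots,\left\lfloor\frac{n-1}{3}\right\rfloor-1\}$ in the statement. The basis provided by Corollary~\ref{cor11} is then $\left\{\rho\left(\Strig{\BS{2n-1}{2k+1}{k-n+1}}\right)\ \middle|\ k\in\left\{0,\ldots,\left\lfloor\frac{n-1}{3}\right\rfloor-1\right\}\right\}$, since $\frac{(2k+1)-(2n-1)}{2}=k-n+1$. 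Applying the isomorphism $\psi$ of Corollary~\ref{cor9} yields the set described in the statement, and the fact that $\psi$ is an isomorphism immediately gives the basis property.

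The only remaining task is to check the explicit formula for $S_k$. Using the closed form of $\rho\left(\Strig{\BS{N}{K}{(K-N)/2}}\right)$ given in Theorem~\ref{thm8} with $N=2n-1$ and $K=2k+1$, the $(i,j)$-entry of this dihedrally symmetric Steinhaus triangle reads
\begin{equation*}
\binom{k-n+1+j-i}{2k+2-i}+\binom{k+n-j}{2k+1+i-j}+\binom{k-n+i}{2k+2+j-2n}\pmod{2}.
\end{equation*}
By definition of $\psi$, the associated binary sequence $S_k$ is obtained by extracting the entries at positions $(2,2j)$ for $j\in\{1,\ldots,n-1\}$. Substituting $i=2$ and replacing $j$ by $2j$ in the displayed formula produces exactly
\begin{equation*}
\binom{k-n+2j-1}{2k}+\binom{k+n-2j}{2k-2j+3}+\binom{k-n+2}{2k-2n+2j+2}\pmod{2},
\end{equation*}
which is the claimed expression for the $j$th term of $S_k$.

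No serious obstacle is expected: the argument is essentially a change of parameters followed by a routine binomial rewriting. The only points requiring care are the verification of the index-range identity $\left\lfloor\frac{2n-1}{6}\right\rfloor=\left\lfloor\frac{n-1}{3}\right\rfloor$ and the careful substitution of $i=2$, $j\leftrightarrow 2j$ in the three binomial coefficients of Theorem~\ref{thm8}, both of which are elementary.
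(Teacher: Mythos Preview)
Your proposal is correct and follows essentially the same route as the paper: specialize Corollary~\ref{cor11} to $N=2n-1$ to obtain a basis of $\DSTz{2n-1}$, push it through the isomorphism $\psi$ of Corollary~\ref{cor9}, and read off $S_k$ by substituting $i=2$, $j\mapsto 2j$ in the closed form of Theorem~\ref{thm8}. The only cosmetic difference is that you make the verification $\left\lfloor\tfrac{2n-1}{6}\right\rfloor=\left\lfloor\tfrac{n-1}{3}\right\rfloor$ explicit, which the paper leaves implicit.
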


\begin{proof}
Let $n$ be a positive integer. From Corollary~\ref{cor11}, we know that
$$
\left\{ \rho\left(\Strig{\BS{2n-1}{2k+1}{k-n+1}}\right) \ \middle|\ k\in\left\{0,\ldots,\left\lfloor\frac{n-1}{3}\right\rfloor-1\right\} \right\}
$$
is a basis of $\DSTz{2n-1}$, where $\rho\left(\Strig{\BS{2n-1}{2k+1}{k-n+1}}\right)$ is the triangle
\begin{equation*}
\resizebox{\textwidth}{!}{$
\displaystyle\left(\left(\binom{k-n+1+j-i}{2k+2-i}+\binom{k+n-j}{2k+1+i-j}+\binom{k-n+i}{2k+2+j-2n}\right)\tpmod{2}\right)_{1\le i\le j\le 2n-1},
$}
\end{equation*}
for all $k\in\left\{0,\ldots,\left\lfloor\frac{n-1}{3}\right\rfloor-1\right\}$. It follows, by Corollary~\ref{cor9}, that
$$
\left\{ \psi\rho\left(\Strig{\BS{2n-1}{2k}{k-n+1}}\right) \ \middle|\ k\in\left\{0,\ldots,\left\lfloor\frac{n-1}{3}\right\rfloor-1\right\} \right\}
$$
is a basis of $\ESG{n}$. Since $\psi\left(\left(a_{i,j}\right)\right)=\Sgraph{\left(a_{2,2j}\right)_{1\le j\le n-1}}$, we conclude that
$$
\psi\rho\left(\Strig{\BS{2n-1}{2k+1}{k-n+1}}\right) = \Sgraph{S_k},
$$
with
$$
S_k = \left(\left(\binom{k-n+2j-1}{2k}+\binom{k+n-2j}{2k-2j+3}+\binom{k-n+2}{2k-2n+2j+2}\right)\tpmod{2}\right)_{1\le j\le n-1},
$$
for all $k\in\left\{0,1,\ldots,\left\lfloor\frac{n-1}{3}\right\rfloor-1\right\}$.
\end{proof}

For instance, for $n=12$, we obtain the basis given in Table~\ref{tab3}.

\begin{table}[htbp]
\resizebox{\textwidth}{!}{
$
\begin{array}{|c|c|c|c|}
\hline
\vspace{-2ex} & & & \\
k & \BS{23}{2k+1}{k-11} & \rho\left(\Strig{\BS{23}{2k+1}{k-11}}\right) & \psi\left(\rho\left(\Strig{\BS{23}{2k+1}{k-11}}\right)\right) \\[1.5ex]
\hline
0 & (10101010101010101010101) & \Strig{(01101010101010101010110)} & G_1=\Sgraph{11111111110} \\
\hline
1 & (01000100010001000100010) & \Strig{(00010100010001000101000)} & G_2=\Sgraph{01101010110} \\
\hline
2 & (10000010100000101000001) & \Strig{(01111110100000101111110)} & G_3=\Sgraph{10011001000} \\
\hline
\end{array}
$}
\caption{A basis of $\ESG{12}$}\label{tab3}
\end{table}

All the even Steinhaus graphs of order $12$ are depicted in Figure~\ref{fig15}, where the elements of the basis $\left\{G_1,G_2,G_3\right\}$ are in red and, for every $G\in\ESG{12}$, the coordinate vector $(x_1,x_2,x_3)$ of $G=x_1G_1+x_2G_2+x_3G_3$ is given.

\begin{figure}[htbp]
\centerline{\includegraphics[width=\textwidth]{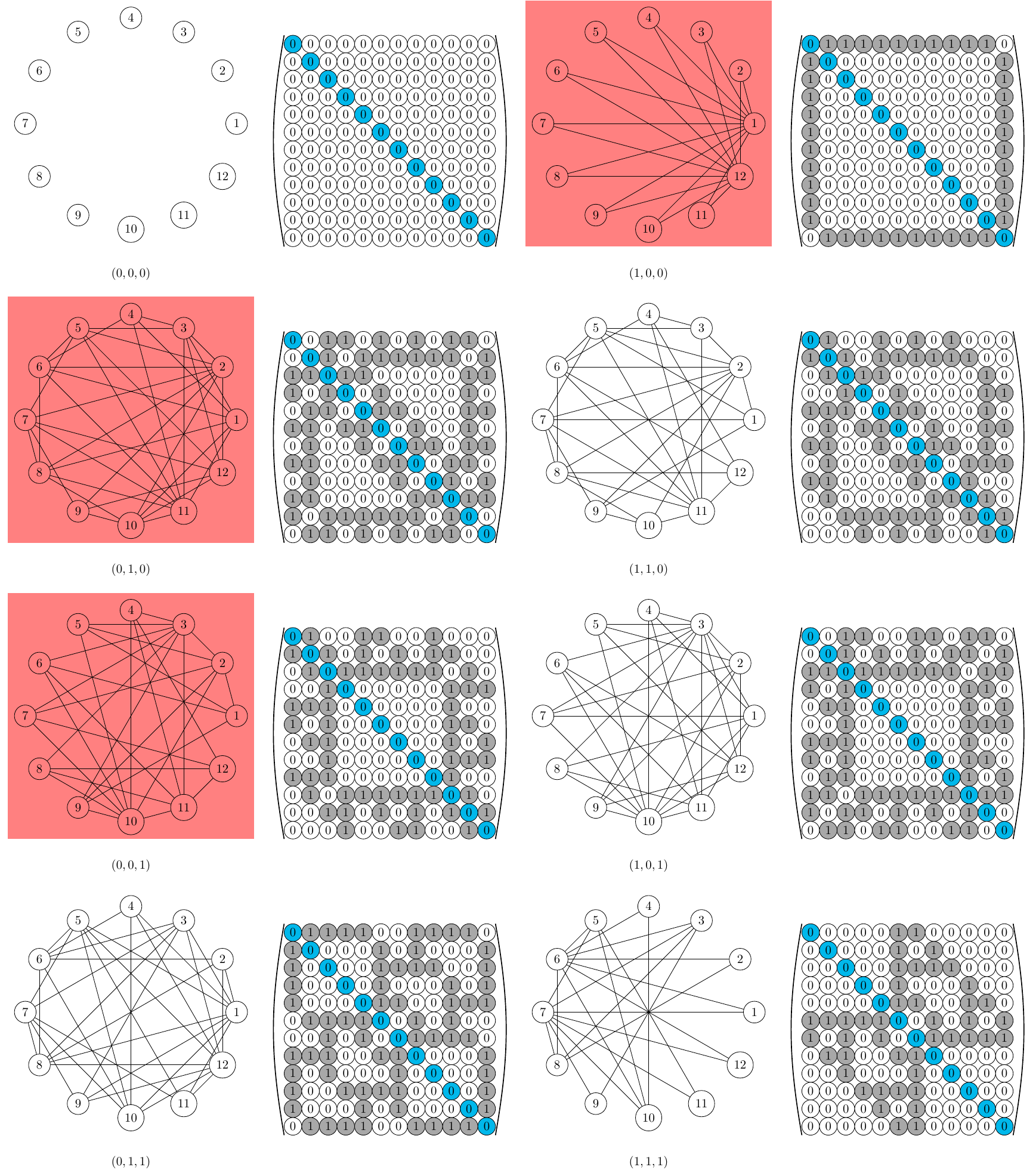}}
\caption{The $8$ graphs of $\ESG{12}$ where the $3$ red graphs form a basis}\label{fig15}
\end{figure}

We end this section by giving a basis of the linear subspace $\PRSG{n}$ of parity-regular Steinhaus graphs. By definition, we know that $\PRSG{n}=\ESG{n}\sqcup\OSG{n}$, where $\OSG{n}$ is the set of odd Steinhaus graphs of order $n$. As already remarked, it is clear that $\OSG{n}=\emptyset$, when $n$ is odd. For $n$ even, we obtain the following

\begin{prop}\label{prop19}
For any positive integer $n$,
$$
\OSG{2n} = \ESG{2n} + \Sgraph{(0)_{2n-2}\cdot(1)}.
$$
\end{prop}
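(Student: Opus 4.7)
The plan is to reduce the statement to verifying that $\Sgraph{(0)_{2n-2}\cdot(1)}$ itself lies in $\OSG{2n}$. Once this is established, the displayed equality follows at once from the fact, recalled in Section~\ref{sec:1}, that $\OSG{2n}$ is an affine subspace of $\SG{2n}$ of direction $\ESG{2n}$, and hence coincides with $\ESG{2n}+G$ for any element $G\in\OSG{2n}$.

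To show that $\Sgraph{(0)_{2n-2}\cdot(1)}$ is odd, I would first determine its Steinhaus triangle $\Strig{(0)_{2n-2}\cdot(1)}=(a_{i,j})_{1\le i\le j\le 2n-1}$ explicitly. Since its first row has the single nonzero entry $a_{1,2n-1}=1$, Lemma~\ref{lem1} collapses to $a_{i,j}\equiv\binom{i-1}{j-(2n-1)}\pmod 2$, which equals $1$ precisely when $j=2n-1$ (the inequality $0\le i-1$ being automatic) and vanishes for all $j<2n-1$. Thus $\Strig{(0)_{2n-2}\cdot(1)}$ has ones all along its right side and zeros everywhere else.

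Translating this back to the adjacency matrix $\Smat{(0)_{2n-2}\cdot(1)}$, the only nonzero above-diagonal entries are those in the last column, so $\Sgraph{(0)_{2n-2}\cdot(1)}$ is the star $K_{1,2n-1}$ centered at vertex $2n$. The center then has degree $2n-1$ and every leaf has degree $1$, so every vertex has odd degree. This places $\Sgraph{(0)_{2n-2}\cdot(1)}$ in $\OSG{2n}$ and finishes the argument. No step looks like a genuine obstacle: the only computation, identifying the right side of the triangle as its unique nonzero diagonal, is a one-line application of Lemma~\ref{lem1} once one observes that the top row has a single $1$.
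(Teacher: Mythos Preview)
Your argument is correct. The computation of the triangle $\Strig{(0)_{2n-2}\cdot(1)}$ via Lemma~\ref{lem1} is right, and the identification of $\Sgraph{(0)_{2n-2}\cdot(1)}$ with the star $K_{1,2n-1}$ centred at vertex $2n$ is exactly what falls out of that computation. Since the paper explicitly records in Section~\ref{sec:1} that $\OSG{2n}$ is an affine subspace of direction $\ESG{2n}$ (citing Dymacek), invoking that fact is legitimate, and the proposition then follows immediately from $\Sgraph{(0)_{2n-2}\cdot(1)}\in\OSG{2n}$.

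The paper's own proof takes a different route. Rather than appealing to the affine structure as a black box, it proves Lemma~\ref{lem5}: for \emph{any} sequence $S$ of length $2n-1$, the graph $\Sgraph{S}$ is even if and only if $\Sgraph{S+(0)_{2n-2}\cdot(1)}$ is odd. The lemma is established by tracking how flipping the last bit of $S$ changes each column of the adjacency matrix, and hence each vertex degree, modulo $2$. Proposition~\ref{prop19} is then an immediate corollary. Compared to your approach, the paper's argument is self-contained (it does not lean on the Dymacek result) and yields the slightly stronger bidirectional statement of Lemma~\ref{lem5}. Your approach is shorter and more transparent: once the affine structure is granted, exhibiting a single odd Steinhaus graph suffices, and the star is the most natural candidate.
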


The proof is based on the following lemma, where the linear map $\eta$ is defined by
$$
\eta : \SG{2n} \longrightarrow \SG{2n}\quad\text{by}\quad \eta\left(\Sgraph{S}\right) = \Sgraph{S+(0)_{2n-2}\cdot(1)}.
$$

\begin{lem}\label{lem5}
For any positive integer $n$, the Steinhaus graph $\Sgraph{S}$ of order $2n$ is even if and only if $\eta\left(\Sgraph{S}\right)$ is odd.
\end{lem}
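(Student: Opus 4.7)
The plan is to identify $\Sgraph{(0)_{2n-2}\cdot(1)}$ concretely and then observe that adding it to any Steinhaus graph toggles the parity of every vertex's degree.

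First I would analyze the Steinhaus triangle $\Strig{(0)_{2n-2}\cdot(1)}$ of size $2n-1$. By Lemma~\ref{lem1} (or equivalently \eqref{eq14}), its entries are $\binomd{i-1}{j-(2n-1)}$, which vanish unless $j=2n-1$, in which case the value is $\binomd{i-1}{0}=1$. Thus the triangle has $1$'s precisely along its right side. Translating this into the adjacency matrix of the Steinhaus graph $\Sgraph{(0)_{2n-2}\cdot(1)}$ of order $2n$, the only nonzero off-diagonal entries are those of the last row and last column. In other words, $\Sgraph{(0)_{2n-2}\cdot(1)}$ is the star $K_{1,2n-1}$ whose center is vertex $2n$ and whose leaves are vertices $1,\ldots,2n-1$.

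Next, I would invoke linearity: since the map $S\mapsto\Smat{S}$ is $\Z/2\Z$-linear, the adjacency matrix of $\iota(\Sgraph{S})$ is the entrywise sum modulo $2$ of $\Smat{S}$ and $\Smat{(0)_{2n-2}\cdot(1)}$. Edgewise, this means $\iota(\Sgraph{S})$ is obtained from $\Sgraph{S}$ by toggling exactly the edges of the star described above, i.e., the $2n-1$ edges incident to vertex $2n$.

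Now I would compute degrees in $\iota(\Sgraph{S})$. For each vertex $i\in\{1,\ldots,2n-1\}$, exactly one incident edge (the one joining $i$ to $2n$) is toggled, so its degree changes by $\pm 1$ and its parity flips. For vertex $2n$, all $2n-1$ of its incident edges are toggled, so its new degree equals $(2n-1)-\deg(2n)$; since $2n-1$ is odd, this also flips the parity. Hence every vertex degree changes parity under $\iota$, which immediately yields the equivalence: $\Sgraph{S}$ has all degrees even iff $\iota(\Sgraph{S})$ has all degrees odd. There is no real obstacle here; the only substantive step is the star-graph identification, and the remainder is a one-line parity argument.
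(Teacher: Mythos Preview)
Your proof is correct and follows essentially the same approach as the paper: both identify that the Steinhaus triangle $\Strig{(0)_{2n-2}\cdot(1)}$ has $1$'s only on its right side (equivalently, that the associated Steinhaus graph is the star centered at vertex $2n$), and then verify that this flips the parity of every vertex degree. Your phrasing via the star $K_{1,2n-1}$ and edge toggling is a bit more graph-theoretic than the paper's direct matrix computation, but the content is the same.
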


\begin{proof}[Proof of Lemma~\ref{lem5}]
We consider the respective adjacency matrices $\Smat{S}=\left(a_{i,j}\right)_{1\le i,j\le 2n}$ of $\Sgraph{S}$ and $\Smat{S+(0)_{2n-2}\cdot(1)}=\left(b_{i,j}\right)_{1\le i,j\le 2n}$ of $\eta\left(\Sgraph{S}\right)$. It is easy to see that
$$
\left\{\begin{array}{ll}
b_{i,j} = a_{i,j} & \text{for all } 1\le i<j\le 2n-1,\\
b_{i,2n} \equiv a_{i,2n}+1\pmod{2} & \text{for all } 1\le i\le 2n-1,
\end{array}\right.
$$
since the sequence $S+(0)_{2n-2}\cdot(1)$ only differs by the last term from the sequence $S$. It follows that
$$
\deg_{\Sgraph{S}}\left(v_1\right) = \sum_{j=2}^{2n}a_{1,j} \equiv \sum_{j=2}^{2n}b_{1,n}+1 = \deg_{\eta\left(\Sgraph{S}\right)}\left(v_1\right) +1\pmod{2},
$$
$$
\deg_{\Sgraph{S}}\left(v_i\right) = \sum_{j=1}^{i-1}a_{j,i} + \sum_{j=i+1}^{2n}a_{i,j} \equiv \sum_{j=1}^{i-1}b_{j,i} + \sum_{j=i+1}^{2n}b_{i,j}+1 = \deg_{\eta\left(\Sgraph{S}\right)}\left(v_i\right) +1\pmod{2},
$$
for all $i\in\left\{2,\ldots,2n-1\right\}$, and
$$
\deg_{\Sgraph{S}}\left(v_{2n}\right) = \sum_{i=1}^{2n-1}a_{i,2n} \equiv \sum_{i=1}^{2n-1}b_{i,2n}+2n-1 \equiv \deg_{\eta\left(\Sgraph{S}\right)}\left(v_{2n}\right) +1\pmod{2},
$$
where $\deg_{\Sgraph{S}}\left(v_i\right)$ and $\deg_{\eta\left(\Sgraph{S}\right)}\left(v_i\right)$ are the degrees of the $i$th vertex of the Steinhaus graphs $\Sgraph{S}$ and $\eta\left(\Sgraph{S}\right)$, respectively, for all $i\in\left\{1,\ldots,2n\right\}$. Since for all $i\in\left\{1,\ldots,2n\right\}$, we have $\deg_{\Sgraph{S}}\left(v_i\right)\equiv\left(\deg_{\eta\left(\Sgraph{S}\right)}\left(v_i\right)+1\right)\tpmod{2}$, the result follows.
\end{proof}

\begin{proof}[Proof of Proposition~\ref{prop19}]
From Lemma~\ref{lem5}, it is clear that $\eta$ induces an involution on $\PRSG{2n}$ with $\OSG{2n} = \eta\left(\ESG{2n}\right) =  \ESG{2n} + \Sgraph{(0)_{2n-2}\cdot(1)}$.
\end{proof}

It immediately follows that, for any positive integer $n$, we have
\begin{equation}\label{eq24}
\PRSG{2n-1} = \ESG{2n-1}
\end{equation}
and
\begin{equation}\label{eq25}
\PRSG{2n} = \ESG{2n}\sqcup\left(\ESG{2n}+\Sgraph{(0)_{2n-2}\cdot(1)}\right).
\end{equation}
This proves

\begin{prop}
For all positive integers $n$, we have $\dim\PRSG{n}=\left\lfloor\frac{n-1}{3}\right\rfloor+\deltamod{0}{n}{2}$.
\end{prop}

Combining the identities \eqref{eq24} and \eqref{eq25} with Theorem~\ref{thm9}, we obtain the following

\begin{thm}
Let $n$ be a positive integer. The set
$$
\left\{ \psi\rho\left(\Strig{\BS{2n-1}{2k+1}{k-n+1}}\right) \ \middle|\ k\in\left\{0,\ldots,\left\lfloor\frac{n-1}{3}\right\rfloor-1\right\} \right\},
$$
when $n$ is odd, or the set
$$
\left\{\Sgraph{(0)_{n-2}\cdot(1)}\right\} \cup \left\{ \psi\rho\left(\Strig{\BS{2n-1}{2k+1}{k-n+1}}\right) \ \middle|\ k\in\left\{0,\ldots,\left\lfloor\frac{n-1}{3}\right\rfloor-1\right\} \right\},
$$
when $n$ is even, is a basis of $\PRSG{n}$, where $\psi\rho\left(\Strig{\BS{2n-1}{2k+1}{k-n+1}}\right) = \Sgraph{S_k}$ with
$$
S_k = \left(\left(\binom{k-n+2j-1}{2k}+\binom{k+n-2j}{2k-2j+3}+\binom{k-n+2}{2k-2n+2j+2}\right)\tpmod{2}\right)_{1\le j\le n-1},
$$
for all $k\in\left\{0,\ldots,\left\lfloor\frac{n-1}{3}\right\rfloor-1\right\}$.
\end{thm}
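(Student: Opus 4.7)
The plan is to derive the theorem almost directly from Theorem~\ref{thm9} combined with the structural identities \eqref{eq24} and \eqref{eq25}, which already decompose $\PRSG{n}$ in terms of $\ESG{n}$. The essential content is therefore just a short linear-algebra argument splitting on the parity of $n$; nothing new has to be computed from scratch.

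First I would dispatch the case $n$ odd. By \eqref{eq24} we have $\PRSG{n}=\ESG{n}$, and Theorem~\ref{thm9} already gives the desired family as a basis of $\ESG{n}$; there is nothing else to do. For $n$ even, the key identity \eqref{eq25} presents $\PRSG{n}$ as $\ESG{n}\sqcup(\ESG{n}+g_n)$ where $g_n:=\Sgraph{(0)_{n-2}\cdot(1)}$. Since this is the union of $\ESG{n}$ with a single coset, $\PRSG{n}=\ESG{n}+\Zn{2}\cdot g_n$ as a subspace, and its dimension is $\dim\ESG{n}+1$, matching the known formula for $\dim\PRSG{n}$.

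To conclude, I would add the vector $g_n$ to the basis of $\ESG{n}$ produced by Theorem~\ref{thm9}. What must be checked is that $g_n\notin\ESG{n}$, so that the enlarged family remains linearly independent. The cleanest way is to observe that the Steinhaus triangle $\Strig{(0)_{n-2}\cdot(1)}$ has $1$'s exactly along its right side (each iterated derivative of $(0,\ldots,0,1)$ is again of the form $(0,\ldots,0,1)$), so in the adjacency matrix of $g_n$ only the edges $\{v_i,v_n\}$ for $i<n$ are present; hence $g_n$ is a star with centre $v_n$, each $v_i$ ($i<n$) has degree $1$ and $v_n$ has degree $n-1$, which is odd because $n$ is even. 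Thus $g_n\in\OSG{n}\subset\PRSG{n}\setminus\ESG{n}$, so $\{g_n\}\cup B_e$ is linearly independent for any basis $B_e$ of $\ESG{n}$, and by the dimension count it is a basis of $\PRSG{n}$.

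I do not expect any real obstacle: the two cases follow immediately from \eqref{eq24}, \eqref{eq25}, Theorem~\ref{thm9}, and the elementary verification that $g_n$ is odd. The only substantive remark to make explicit is the last one, namely the identification of $\Sgraph{(0)_{n-2}\cdot(1)}$ as a star graph, which guarantees that it lies outside $\ESG{n}$ whenever $n$ is even.
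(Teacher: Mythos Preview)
Your proposal is correct and follows essentially the same route as the paper: the paper's own proof is the single sentence ``Combining the identities \eqref{eq24} and \eqref{eq25} with Theorem~\ref{thm9}, we obtain the following'', which is exactly your plan. Your explicit verification that $\Sgraph{(0)_{n-2}\cdot(1)}$ is a star graph (hence odd) is a pleasant concrete touch but is not strictly needed, since \eqref{eq25} is already stated as a \emph{disjoint} union, so $g_n\notin\ESG{n}$ is built in.
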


For instance, for $n=12$, we obtain
{\footnotesize
$$
\begin{array}{cccc}
G_0 = \Sgraph{00000000001}, & G_1=\Sgraph{11111111110}, & G_2=\Sgraph{01101010110}, & G_3=\Sgraph{10011001000}. \\
\end{array}
$$}
All the parity-regular Steinhaus graphs of order $12$ are depicted in Figure~\ref{fig15} for the even graphs and in Figure~\ref{fig24} for the odd ones, where the elements of the basis $\left\{G_0,G_1,G_2,G_3\right\}$ are in red and for every $G\in\PRSG{12}$, the coordinate vector $(x_0,x_1,x_2,x_3)$ of $G=x_0G_0+x_1G_1+x_2G_2+x_3G_3$ is given ($(x_1,x_2,x_3)$ when $x_0=0$ in Figure~\ref{fig15}).

\begin{figure}[htbp]
\centerline{\includegraphics[width=\textwidth]{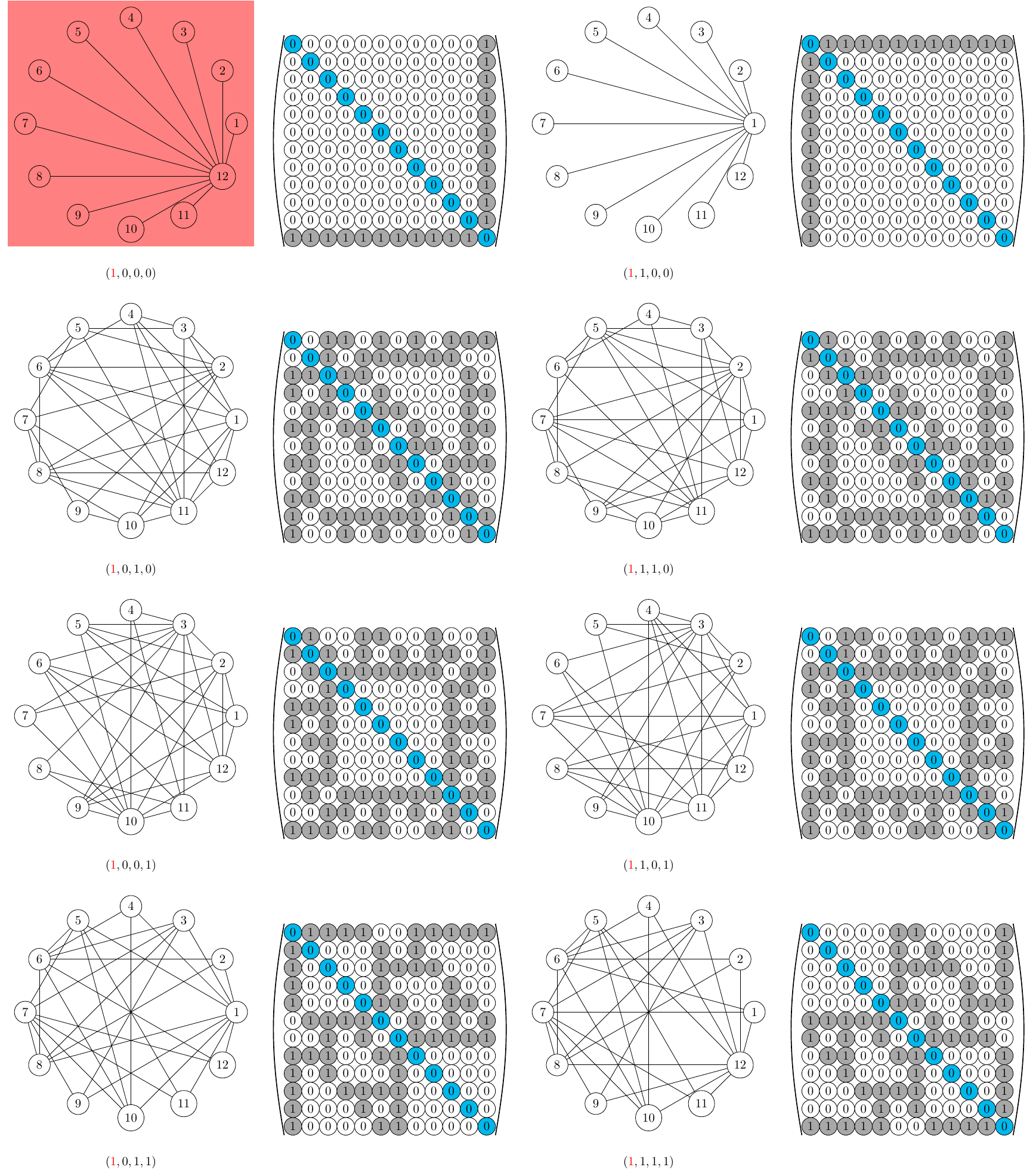}}
\caption{The $8$ graphs of $\OSG{12}$}\label{fig24}
\end{figure}

\section{Symmetric generalized Pascal triangles}\label{sec:7}

Other kinds of binary triangles with a similar definition as Steinhaus triangles can be considered. A {\em generalized Pascal triangle of size $n$} is a triangle $\Delta=(a_{i,j})_{1\le j\le i\le n}$ of $0$'s and $1$'s satisfying the local rule; i.e., $a_{i,j}\equiv\left(a_{i-1,j-1}+a_{i-1,j}\right)\tpmod{2}$ for all integers $i,j$ such that $2\le j< i\le n$. A generalized Pascal triangle $(a_{i,j})_{1\le j\le i\le n}$ is completely determined by its left side $L=(a_{i,1})_{1\le i\le n}$ and its right side $R=(a_{i,i})_{1\le i\le n}$. Therefore, we denote by $\Ptrig{L}{R}$ the generalized Pascal triangle generated from the sequences $L$ and $R$. The set of binary generalized Pascal triangles of size $n$ is denoted by $\PT{n}$. Since the set of generalized Pascal triangles is closed under addition modulo $2$, it follows that $\PT{n}$ is a vector space over $\Zn{2}$. An example of generalized Pascal triangle of size $5$ is depicted in Figure~\ref{fig12}.

\begin{figure}[htbp]
\centerline{
\begin{tikzpicture}[scale=0.25]
\figPT{{1,1,0,0,1,0,1,0,0}}{black}{col0}{col1}{white}
\end{tikzpicture}}
\caption{The generalized Pascal triangle $\Ptrig{(11100)}{(11001)}$}\label{fig12}
\end{figure}

Since a generalized Pascal triangle is uniquely determined by its left and right sides, which have the same first term, the dimension of $\PT{n}$ is $2n-1$. Moreover, there exists a natural isomorphism between $\PT{n}$ and $\ST{2n-1}$. Indeed, as depicted in Figure~\ref{fig23}, a generalized Pascal triangle of size $n$ can be seen as a subtriangle of a Steinhaus triangle of size $2n-1$.

\begin{figure}[htbp]
\centerline{
\begin{tikzpicture}[scale=0.25]
\pgfmathparse{sqrt(3)}\let\sq\pgfmathresult
\figST{{1,1,0,0,1,0,1,0,0}}{black}{col0}{col1}{white}
\draw[fill=col2!50!col1] (8,0) circle (1);
\draw (8,0) node {$1$};
\draw[fill=col2!50!col1] (7,-\sq) circle (1);
\draw (7,-\sq) node {$1$};
\draw[fill=col2!50!col1] (9,-\sq) circle (1);
\draw (9,-\sq) node {$1$};
\draw[fill=col2!50!col1] (6,-2*\sq) circle (1);
\draw (6,-2*\sq) node {$1$};
\draw[fill=col2] (8,-2*\sq) circle (1);
\draw (8,-2*\sq) node {$0$};
\draw[fill=col2] (10,-2*\sq) circle (1);
\draw (10,-2*\sq) node {$0$};
\draw[fill=col2] (5,-3*\sq) circle (1);
\draw (5,-3*\sq) node {$0$};
\draw[fill=col2!50!col1] (7,-3*\sq) circle (1);
\draw (7,-3*\sq) node {$1$};
\draw[fill=col2] (9,-3*\sq) circle (1);
\draw (9,-3*\sq) node {$0$};
\draw[fill=col2] (11,-3*\sq) circle (1);
\draw (11,-3*\sq) node {$0$};
\draw[fill=col2] (4,-4*\sq) circle (1);
\draw (4,-4*\sq) node {$0$};
\draw[fill=col2!50!col1] (6,-4*\sq) circle (1);
\draw (6,-4*\sq) node {$1$};
\draw[fill=col2!50!col1] (8,-4*\sq) circle (1);
\draw (8,-4*\sq) node {$1$};
\draw[fill=col2] (10,-4*\sq) circle (1);
\draw (10,-4*\sq) node {$0$};
\draw[fill=col2!50!col1] (12,-4*\sq) circle (1);
\draw (12,-4*\sq) node {$1$};
\end{tikzpicture}}
\caption{$\gamma\left(\Strig{(110010100)}\right)=\Ptrig{(11100)}{(11001)}$}\label{fig23}
\end{figure}

Let $\gamma$ be the linear map defined by
$$
\gamma : \ST{2n-1} \longrightarrow \PT{n} \quad\text{by}\quad \gamma\left(\left(a_{i,j}\right)_{1\le i\le j\le 2n-1}\right) = \left(a_{i,n-1+j}\right)_{1\le j\le i\le n}.
$$
The linear map $\gamma$ is well defined since the generalized Pascal triangles and the Steinhaus triangles share the same local rule.

\begin{prop}\label{prop13}
The linear map $\gamma : \ST{2n-1}\longrightarrow\PT{n}$ is an isomorphism.
\end{prop}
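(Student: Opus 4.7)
The plan is to prove injectivity of $\gamma$ and then invoke a dimension count. The introduction gives $\dim\ST{2n-1} = 2n-1$, and a generalized Pascal triangle is determined by its left and right sides, which share only the apex, yielding $\dim\PT{n} = 2n-1$ as well. Linearity of $\gamma$ is immediate because generalized Pascal triangles and Steinhaus triangles share the local rule \eqref{eq13}, so restricting to a sub-array respects sums. Hence it suffices to show that $\ker\gamma = 0$.

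Suppose $\gamma(\Strig{}) = 0$ for some $\Strig{} = (a_{i,j})_{1 \le i \le j \le 2n-1}$, that is, $a_{i,n-1+j} = 0$ for all $1 \le j \le i \le n$. First I would specialize to $j = 1$, which yields $a_{i,n} = 0$ for $1 \le i \le n$. The sub-array $(a_{i,j})_{1 \le i \le j \le n}$ is itself a Steinhaus triangle of size $n$ (it still satisfies \eqref{eq13}), and its right-side index set $\{(i,n) : 1 \le i \le n\}$ is a generating index set of $\ST{n}$, as recalled in Section~\ref{sec:2}. Hence this sub-triangle vanishes entirely, and in particular $a_{1,k} = 0$ for $k = 1, \ldots, n$.

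Next I would exploit the specialization $j = i$, giving $a_{i,n-1+i} = 0$ for $1 \le i \le n$. Combining this with Lemma~\ref{lem1} and the vanishing of $a_{1,1}, \ldots, a_{1,n}$ just obtained, the substitution $m = k - n$ rewrites the identity as
$$
\sum_{m=1}^{n-1} \binom{i-1}{m} a_{1,n+m} \equiv 0 \pmod{2}, \qquad i = 2, \ldots, n.
$$
The $(n-1) \times (n-1)$ coefficient matrix $\bigl(\binom{i-1}{m}\bigr)$ is lower triangular with unit diagonal entries $\binom{i-1}{i-1} = 1$, hence invertible modulo $2$. It follows that $a_{1,n+m} = 0$ for every $m \in \{1, \ldots, n-1\}$. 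Thus the whole first row of $\Strig{}$ vanishes and $\Strig{} = 0$.

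The main obstacle is spotting that the two natural specializations $j = 1$ and $j = i$ of the kernel condition are exactly what is needed: the first peels off the left half of the top row of $\Strig{}$ via the generating-index-set argument applied to a size-$n$ sub-triangle, and the second peels off the right half via a triangular binomial system modulo $2$. Once this structural observation is in hand, everything else reduces to routine checks.
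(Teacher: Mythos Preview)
Your proof is correct and takes essentially the same approach as the paper: both arguments show that the left and right sides of the generalized Pascal triangle (i.e., the index set $G_{LR}=\{(i,n):1\le i\le n\}\cup\{(i,n-1+i):2\le i\le n\}$) determine the first row of the ambient Steinhaus triangle, via the binomial identities of Lemma~\ref{lem1}. The paper phrases this as ``$G_{LR}$ is a generating index set of $\ST{2n-1}$'' and writes down explicit formulas for $a_{1,j}$ and $a_{1,2n-j}$ in terms of the two sides; you phrase it as a kernel computation plus dimension count, handling the left half of the first row through the sub-triangle $(a_{i,j})_{1\le i\le j\le n}$ and the right half through the triangular binomial system---but the content is the same.
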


\begin{proof}
Let $\Strig{}=\left(a_{i,j}\right)_{1\le i\le j\le 2n-1}\in\ST{2n-1}$ and $\Delta=\gamma\left(\Strig{}\right)=\left(a_{i,n-1+j}\right)_{1\le j\le i\le n}$. The linear map $\gamma$ is an isomorphism since the set $G_{LR}$ of indices of the left and right sides of $\Delta=\left(a_{i,n-1+j}\right)_{1\le j\le i\le n}$; i.e.,
$$
G_{LR} = \left\{ (i,n)\ \middle|\ i\in\{1,\ldots,n\}\right\} \cup \left\{ (i,n-1+i)\ \middle|\ i\in\{2,\ldots,n\}\right\}
$$
is a generating index set of $\ST{2n-1}$. First, all the terms of the first row $\left(a_{1,j}\right)_{1\le j\le 2n-1}$ of $\nabla$ can be expressed as a function of the elements of the left side $\left(a_{i,n}\right)_{1\le i\le n}$ and of the right side $\left(a_{i,n-1+i}\right)_{1\le i\le n}$ of $\Delta$. Indeed, for every $j\in\{1,\ldots,n\}$, we know from Lemma~\ref{lem1} that
$$
a_{1,j} \equiv \sum_{k=0}^{n-j}\binom{n-j}{k}a_{k+1,n} \pmod{2}
$$
and
$$
a_{1,2n-j} \equiv \sum_{k=0}^{n-j}\binom{n-j}{k}a_{k+1,n+k} \pmod{2}.
$$
Since $G_1=\left\{ (1,j)\ \middle|\ j\in\left\{1,\ldots,2n-1\right\}\right\}$ is a generating index set of $\ST{2n-1}$, we conclude that $G_{LR}$ is also a generating index set of $\ST{2n-1}$. Therefore, the linear map $\gamma$ is an isomorphism.
\end{proof}

As for Steinhaus triangles, the action of the dihedral group $D_3=\left\langle r' , h' \right\rangle$ on $\PT{n}$ can be considered, where the automorphisms $r'$ and $h'$ of $\PT{n}$ are defined by
$$
\begin{array}{llll}
r' : \PT{n} \longrightarrow \PT{n} & \text{by} & r'\left((a_{i,j})_{1\le j\le i\le n}\right) = (a_{n+j-i,n+1-i})_{1\le j\le i\le n} & \text{and} \\[1.5ex]
h' : \PT{n} \longrightarrow \PT{n} & \text{by} & h'\left((a_{i,j})_{1\le j\le i\le n}\right) = (a_{i,1-j+i})_{1\le j\le i\le n}.
\end{array}
$$
For instance, for $L=(11100)$ and $R=(11001)$ and for all $g\in D_3$, the generalized Pascal triangles $g\left(\Ptrig{L}{R}\right)$ are depicted in Figure~\ref{fig16}.

\begin{figure}[htbp]
\centerline{\includegraphics[width=\textwidth]{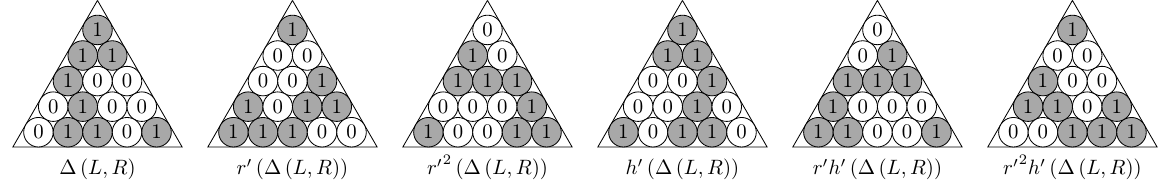}}
\caption{Action of $D_3$ on $\Ptrig{(11100)}{(11001)}$}\label{fig16}
\end{figure}

\begin{prop}\label{prop14}
For any positive integer $n$, we have
$$
\gamma r = r'\gamma \quad\text{and}\quad \gamma h = h'\gamma.
$$
\end{prop}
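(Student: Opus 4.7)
The plan is to verify both identities by direct index-chasing, since $\gamma$, $r$, $h$, $r'$, $h'$ are all explicitly defined as reindexings of the underlying arrays. Because all these maps are linear, checking equality on a generic element $\Strig{} = (a_{i,j})_{1\le i\le j\le 2n-1}\in\ST{2n-1}$ amounts to comparing the $(i,j)$-th entry of each side for $1\le j\le i\le n$.

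For the identity $\gamma r = r'\gamma$, first I would compute the right-hand side: let $b_{i,j}:=a_{i,n-1+j}$ so that $\gamma\left(\Strig{}\right)=(b_{i,j})_{1\le j\le i\le n}$, and then apply the definition of $r'$ to get
\[
r'\gamma\left(\Strig{}\right) = (b_{n+j-i,n+1-i})_{1\le j\le i\le n} = (a_{n+j-i,\,2n-i})_{1\le j\le i\le n}.
\]
For the left-hand side, I would use the definition of $r$ on a triangle of size $2n-1$ to obtain $r(\Strig{})=(a_{j-i+1,\,2n-i})_{1\le i\le j\le 2n-1}$, and then apply $\gamma$, which substitutes $j\leftarrow n-1+j$ in the outer index, yielding
\[
\gamma r\left(\Strig{}\right) = (a_{(n-1+j)-i+1,\,2n-i})_{1\le j\le i\le n} = (a_{n+j-i,\,2n-i})_{1\le j\le i\le n}.
\]
The two arrays coincide entrywise.

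For $\gamma h = h'\gamma$, the same method applies. The left-hand side gives $h(\Strig{})=(a_{i,\,2n-1-j+i})$, and composing with $\gamma$ substitutes $j\leftarrow n-1+j$, producing $(a_{i,\,n-j+i})_{1\le j\le i\le n}$. The right-hand side $h'\gamma$ applied to the entry $b_{i,j}=a_{i,n-1+j}$ of $\gamma(\Strig{})$ yields $(b_{i,\,1-j+i})=(a_{i,\,n-j+i})_{1\le j\le i\le n}$, matching the left-hand side.

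There is really no obstacle here: the statement is a compatibility between two combinatorially equivalent embeddings of the $D_3$-action, and the nontrivial content was already invested in Proposition~\ref{prop13} (well-definedness and invertibility of $\gamma$). The one thing I would double-check carefully is the index bounds in each reindexing, to make sure the substitution $j\mapsto n-1+j$ sends the triangle region $\{1\le j\le i\le n\}$ into the region $\{1\le i\le j\le 2n-1\}$ on which $r$ and $h$ are defined; this is immediate from $1\le j\le i\le n$ giving $i\le n-1+j\le 2n-1$. Once the bookkeeping is confirmed, the rest is the entry comparison above.
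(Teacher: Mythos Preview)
Your proof is correct and follows essentially the same approach as the paper: a direct entrywise comparison of the two compositions on a generic triangle, arriving at the same intermediate arrays $(a_{n+j-i,2n-i})_{1\le j\le i\le n}$ and $(a_{i,n+i-j})_{1\le j\le i\le n}$. Your added verification of the index bounds under the substitution $j\mapsto n-1+j$ is a helpful sanity check that the paper leaves implicit.
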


\begin{proof}
First, we have
$$
\gamma\left(r\left(\left(a_{i,j}\right)_{1\le i\le j\le 2n-1}\right)\right) = \gamma\left(\left(a_{j-i+1,2n-i}\right)_{1\le i\le j\le 2n-1}\right) = \left(a_{n+j-i,2n-i}\right)_{1\le j\le i\le n}
$$
and
$$
r'\left(\gamma\left(\left(a_{i,j}\right)_{1\le i\le j\le 2n-1}\right)\right) = r'\left(\left(a_{i,n-1+j}\right)_{1\le j\le i\le n}\right) = \left(a_{n+j-i,2n-i}\right)_{1\le j\le i\le n},
$$
for all $\left(a_{i,j}\right)_{1\le i\le j\le 2n-1}\in\ST{2n-1}$. Moreover,
$$
\gamma\left(h\left(\left(a_{i,j}\right)_{1\le i\le j\le 2n-1}\right)\right) = \gamma\left(\left(a_{i,2n-1-j+i}\right)_{1\le i\le j\le 2n-1}\right) = \left(a_{i,n+i-j}\right)_{1\le j\le i\le n}
$$
and
$$
h'\left(\gamma \left(\left(a_{i,j}\right)_{1\le i\le j\le 2n-1}\right)\right) = h'\left(\left(a_{i,n-1+j}\right)_{1\le j\le i\le n}\right) = \left(a_{i,n+i-j}\right)_{1\le j\le i\le n},
$$
for all $\left(a_{i,j}\right)_{1\le i\le j\le 2n-1}\in\ST{2n-1}$. This completes the proof.
\end{proof}

A generalized Pascal triangle $\Delta$ of size $n$ is said to be
\begin{itemize}
\item
{\em rotationally symmetric} if $r'(\Delta)=\Delta$,
\item
{\em horizontally symmetric} if $h'(\Delta)=\Delta$,
\item
{\em dihedrally symmetric} if $r'(\Delta)=h'(\Delta)=\Delta$.
\end{itemize}
The sets of horizontally symmetric, rotationally symmetric and dihedrally symmetric generalized Pascal triangles of size $n$ are denoted by $\HPT{n}$, $\RPT{n}$ and $\DPT{n}$, respectively. In other words, the sets $\HPT{n}$, $\RPT{n}$ and $\DPT{n}$ are simply the linear subspaces $\ker\left(h'-id_{\PT{n}}\right)$, $\ker\left(r'-id_{\PT{n}}\right)$ and $\ker\left(h'-id_{\PT{n}}\right)\cap\ker\left(r'-id_{\PT{n}}\right)$, respectively, where $id_{\PT{n}}$ is the identity map on $\PT{n}$. Examples of such triangles appear in Figure~\ref{fig13}.

\begin{figure}[htbp]
\centerline{
\begin{tabular}{c@{\quad\quad\quad}c@{\quad\quad\quad}c}
\begin{tikzpicture}[scale=0.25]
\figPT{{1,1,0,1,1,1,0,1,1}}{black}{col0}{col1}{white}
\end{tikzpicture}
&
\begin{tikzpicture}[scale=0.25]
\figPT{{0,0,1,1,1,0,1,0,0}}{black}{col0}{col1}{white}
\end{tikzpicture}
&
\begin{tikzpicture}[scale=0.25]
\figPT{{0,1,1,0,1,0,1,1,0}}{black}{col0}{col1}{white}
\end{tikzpicture}
\end{tabular}}
\caption{Triangles of $\HPT{5}$, $\RPT{5}$ and $\DPT{5}$.}\label{fig13}
\end{figure}

It is now easy to see that a symmetric generalized Pascal triangles of size $n$ corresponds to a symmetric Steinhaus triangle of size $2n-1$.

\begin{prop}\label{prop15}
For any positive integer $n$, a Steinhaus triangle $\nabla$, of size $2n-1$, is horizontally, rotationally, or dihedrally symmetric if and only if the generalized Pascal triangle $\gamma\left(\nabla\right)$, of size $n$, is horizontally, rotationally, or dihedrally symmetric, respectively.
\end{prop}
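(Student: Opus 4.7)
The plan is to exploit the two structural results just established: Proposition~\ref{prop13} says that $\gamma$ is an isomorphism (in particular injective and bijective), and Proposition~\ref{prop14} gives the intertwining relations $\gamma r = r'\gamma$ and $\gamma h = h'\gamma$. Together these say that $\gamma$ is an equivariant isomorphism with respect to the two $D_3$-actions. Symmetry of a triangle means it is a fixed point of a certain subgroup, so the statement will reduce to the general principle that an equivariant bijection induces a bijection on fixed-point sets of each subgroup.

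Concretely, I would first handle horizontal symmetry. Let $\nabla\in\ST{2n-1}$. The triangle $\nabla$ is horizontally symmetric iff $h(\nabla)=\nabla$. Applying $\gamma$ to both sides and using $\gamma h=h'\gamma$ from Proposition~\ref{prop14} gives $h'(\gamma(\nabla))=\gamma(\nabla)$, i.e.\ $\gamma(\nabla)$ is horizontally symmetric. Conversely, if $h'(\gamma(\nabla))=\gamma(\nabla)$, then by Proposition~\ref{prop14} again $\gamma(h(\nabla))=\gamma(\nabla)$, and injectivity of $\gamma$ (Proposition~\ref{prop13}) forces $h(\nabla)=\nabla$. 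The same two-line argument, with $r$ and $r'$ in place of $h$ and $h'$, settles the rotational case.

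For the dihedral case I would simply observe that, by the definitions, $\nabla$ is dihedrally symmetric iff it lies in $\ker(r-id)\cap\ker(h-id)$, and $\gamma(\nabla)$ is dihedrally symmetric iff it lies in $\ker(r'-id)\cap\ker(h'-id)$. Combining the horizontal and rotational equivalences proved above yields the dihedral equivalence immediately. Alternatively, one could rephrase the whole argument as: $\gamma$ restricts to linear isomorphisms
\[
\HST{2n-1}\xrightarrow{\sim}\HPT{n},\quad \RST{2n-1}\xrightarrow{\sim}\RPT{n},\quad \DST{2n-1}\xrightarrow{\sim}\DPT{n},
\]
since these subspaces are exactly the kernels of $h-id$, $r-id$, and $(h-id,r-id)$, which are intertwined by $\gamma$.

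There is no real obstacle in this proof: all the work was done in Propositions~\ref{prop13} and~\ref{prop14}. The only thing to be careful about is using bijectivity of $\gamma$ in the ``converse'' direction so that equalities in $\PT{n}$ can be pulled back to equalities in $\ST{2n-1}$, rather than only asserting one implication from the intertwining identities. This is the sole subtlety, and it is entirely routine once Proposition~\ref{prop13} is invoked.
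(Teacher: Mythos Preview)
Your proposal is correct and follows exactly the paper's approach: the paper's proof consists solely of the line ``From Propositions~\ref{prop13} and \ref{prop14},'' and you have simply spelled out the routine details of how the intertwining identities together with bijectivity of $\gamma$ yield the fixed-point correspondence.
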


\begin{proof}
This follows directly from Propositions~\ref{prop13} and \ref{prop14}.
\end{proof}

\begin{cor}\label{cor12}
For all positive integers $n$, the linear map $\gamma$ induces isomorphisms of $\HST{2n-1}$ upon $\HPT{n}$, of $\RST{2n-1}$ upon $\RPT{n}$ and of $\DST{2n-1}$ upon $\DPT{n}$, respectively.
\end{cor}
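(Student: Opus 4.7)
The plan is to observe that Corollary~\ref{cor12} is an essentially immediate consequence of the two preceding results. By Proposition~\ref{prop13}, the linear map $\gamma : \ST{2n-1} \longrightarrow \PT{n}$ is already known to be an isomorphism, so for any vector subspace $\mathcal{V} \subseteq \ST{2n-1}$ the restriction $\restriction{\gamma}{\mathcal{V}} : \mathcal{V} \longrightarrow \PT{n}$ is automatically an injective linear map. All that remains is to identify the image of this restriction when $\mathcal{V}$ is one of $\HST{2n-1}$, $\RST{2n-1}$ or $\DST{2n-1}$, and this is precisely the content of Proposition~\ref{prop15}.

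More concretely, I would treat the three cases in parallel. Denoting by $(\mathcal{V},\mathcal{W})$ any one of the pairs $(\HST{2n-1},\HPT{n})$, $(\RST{2n-1},\RPT{n})$ or $(\DST{2n-1},\DPT{n})$, Proposition~\ref{prop15} gives the biconditional $\nabla\in\mathcal{V} \iff \gamma(\nabla)\in\mathcal{W}$, which unpacks into the two set-theoretic inclusions $\gamma(\mathcal{V}) \subseteq \mathcal{W}$ and $\gamma^{-1}(\mathcal{W}) \subseteq \mathcal{V}$. The first inclusion ensures that $\restriction{\gamma}{\mathcal{V}}$ has codomain $\mathcal{W}$ as claimed, and injectivity is inherited from $\gamma$. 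For surjectivity, given any $\Delta\in\mathcal{W}$, its unique preimage $\nabla = \gamma^{-1}(\Delta)\in\ST{2n-1}$ lies in $\mathcal{V}$ by the second inclusion, and satisfies $\restriction{\gamma}{\mathcal{V}}(\nabla)=\Delta$.

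There is no real obstacle here, since Proposition~\ref{prop15} is exactly tailored to intertwine the $D_3$-actions on $\ST{2n-1}$ and $\PT{n}$ via $\gamma$ (itself a consequence of Proposition~\ref{prop14}, which records $\gamma r = r'\gamma$ and $\gamma h = h'\gamma$). As a bookkeeping remark, combining this corollary with the dimension formulas already established (Corollary~\ref{cor13} for horizontal symmetry, Corollary~\ref{cor14} for rotational symmetry and Corollary~\ref{cor15} for dihedral symmetry, applied at size $2n-1$) would immediately yield explicit dimension formulas for $\HPT{n}$, $\RPT{n}$ and $\DPT{n}$, and transporting the explicit bases of Section~\ref{sec:3}, Section~\ref{sec:4} and Section~\ref{sec:5} through $\gamma$ would produce explicit bases for each of these three symmetric subspaces of generalized Pascal triangles.
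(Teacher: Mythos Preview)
Your proposal is correct and follows essentially the same approach as the paper, which simply cites Proposition~\ref{prop15}; you have merely unpacked the argument in more detail by making explicit the role of the bijectivity of $\gamma$ from Proposition~\ref{prop13} and the biconditional from Proposition~\ref{prop15}.
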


\begin{proof}
This follows directly from Proposition~\ref{prop15}.
\end{proof}

Using the isomorphism $\gamma$ and the results of the previous sections, we obtain the dimension and a basis for each linear subspace of symmetric generalized Pascal triangles of size $n$.

\begin{prop}
For any positive integer $n$, we have
\begin{itemize}
\item
$\dim\HPT{n} = n$,
\item
$\dim\RPT{n} = 2\left\lfloor\frac{n-1}{3}\right\rfloor+1$,
\item
$\dim\DPT{n} = \left\lceil\frac{n}{3}\right\rceil$.
\end{itemize}
\end{prop}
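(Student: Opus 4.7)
The plan is to reduce the three dimension formulas directly to the dimension formulas for symmetric Steinhaus triangles that were established in Sections~\ref{sec:4}, \ref{sec:3} and \ref{sec:5}. By Corollary~\ref{cor12}, the linear map $\gamma$ restricts to isomorphisms $\HST{2n-1}\cong\HPT{n}$, $\RST{2n-1}\cong\RPT{n}$ and $\DST{2n-1}\cong\DPT{n}$, so it suffices to evaluate $\dim\HST{2n-1}$, $\dim\RST{2n-1}$ and $\dim\DST{2n-1}$ using Corollaries~\ref{cor13}, \ref{cor14} and \ref{cor15}, and to check that they match the closed forms claimed for $\HPT{n}$, $\RPT{n}$ and $\DPT{n}$.

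The horizontal case is immediate: Corollary~\ref{cor13} gives $\dim\HST{2n-1}=\left\lceil\frac{2n-1}{2}\right\rceil=n$. For the rotational and dihedral cases the remaining work is purely arithmetic, and I would organise it by residue of $n$ modulo $3$ (respectively modulo $6$), since those are the residues on which the floor and Kronecker-delta terms change. For $\RPT{n}$, I would check the three cases $n\equiv 0,1,2\pmod 3$, noting that $2n-1$ then lies respectively in residue classes $2,1,0$ modulo $3$, and verify that in each case $\left\lfloor\frac{2n-1}{3}\right\rfloor+\delta_{1,(2n-1\bmod 3)}$ coincides with $2\left\lfloor\frac{n-1}{3}\right\rfloor+1$. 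For $\DPT{n}$, a similar case split modulo $3$ (tracking the residue of $2n-1$ modulo $6$) shows that $\left\lfloor\frac{2n+2}{6}\right\rfloor+\delta_{1,(2n-1\bmod 6)}=\left\lceil\frac{n}{3}\right\rceil$ in each of the three cases.

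There is no real obstacle: the content lies entirely in the isomorphism $\gamma$ from Corollary~\ref{cor12}, which has already been proved, together with the explicit dimension counts from the earlier corollaries. The only thing to be careful about is the bookkeeping of the Kronecker-delta terms when rewriting $\lfloor(2n-1)/3\rfloor$ and $\lfloor(2n+2)/6\rfloor$ as functions of $n$; once the three residue classes modulo $3$ are handled separately, both target formulas fall out immediately.
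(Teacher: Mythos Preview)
Your proposal is correct and follows essentially the same approach as the paper: both proofs invoke Corollary~\ref{cor12} to identify $\HPT{n}$, $\RPT{n}$, $\DPT{n}$ with $\HST{2n-1}$, $\RST{2n-1}$, $\DST{2n-1}$ respectively, and then apply Corollaries~\ref{cor13}, \ref{cor14}, \ref{cor15} to read off the dimensions. The only cosmetic difference is that the paper writes the arithmetic simplifications as short chains of equalities rather than as an explicit case split modulo $3$, but the content is identical.
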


\begin{proof}
Let $n$ be a positive integer. From Corollary~\ref{cor12} and Corollary~\ref{cor13}, we have that
$$
\dim\HPT{n} = \dim\HST{2n-1} = \left\lceil\frac{2n-1}{2}\right\rceil = n.
$$
Moreover, from Corollary~\ref{cor12} and Corollary~\ref{cor14}, we have
$$
\dim\RPT{n} = \dim\RST{2n-1}\begin{array}[t]{l}
 = \displaystyle\left\lfloor\frac{2n-1}{3}\right\rfloor + \deltamod{1}{2n-1}{3} \\ \ \\
 = \displaystyle\left\lfloor\frac{2n-1}{3}\right\rfloor+\deltamod{1}{n}{3} = 2\left\lfloor\frac{n-1}{3}\right\rfloor+1.
\end{array}
$$
Finally, from Corollary~\ref{cor12} and Corollary~\ref{cor15}, we obtain
$$
\dim\DPT{n} = \dim\DST{2n-1} \begin{array}[t]{l}
= \displaystyle\left\lfloor\frac{2n+2}{6}\right\rfloor + \deltamod{1}{2n-1}{6} \\ \ \\
= \displaystyle\left\lfloor\frac{n+1}{3}\right\rfloor+\deltamod{1}{n}{3} = \left\lceil\frac{n}{3}\right\rceil.
\end{array}
$$
This completes the proof.
\end{proof}

\begin{thm}\label{thm10}
Let $n$ and $m$ be positive integers such that $m=2\left\lfloor\frac{n-1}{3}\right\rfloor+1$. For any integers $\ell_0,\ldots,\ell_{m-1}$, the set
$$
\left\{ \gamma\rho\left(\Strig{\BS{2n-1}{k}{\ell_k}}\right)\ \middle|\ k\in\left\{0,\ldots,m-1\right\}\right\}
$$
is a basis of $\RPT{n}$, where
\begin{equation*}
\resizebox{\textwidth}{!}{
$
\displaystyle\gamma\rho\left(\Strig{\BS{2n-1}{k}{\ell_k}}\right) = \left( \left(\binom{\ell_k+j-i+n-1}{k+1-i} + \binom{\ell_k+n-j}{k+i-j-n+1} + \binom{\ell_k+i-1}{k+j-n}\right) \tpmod{2} \right)_{1\le j\le i\le n},
$}
\end{equation*}
for all $k\in\{0,\ldots,m-1\}$.
\end{thm}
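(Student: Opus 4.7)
The plan is to deduce this theorem directly from two earlier results: Theorem~\ref{thm4}, which provides an explicit basis of $\RST{n}$ in terms of triangles $\rho\left(\Strig{\BS{n}{k}{l_k}}\right)$, and Corollary~\ref{cor12}, which states that the linear map $\gamma$ restricts to an isomorphism $\RST{2n-1}\longrightarrow\RPT{n}$. Since an isomorphism carries bases to bases, it suffices to apply $\gamma$ to the basis of $\RST{2n-1}$ provided by Theorem~\ref{thm4}, and then to compute the image explicitly using the definition of $\gamma$.

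First I would verify that the index $m=2\left\lfloor\frac{n-1}{3}\right\rfloor+1$ appearing in the statement coincides with the index $\left\lfloor\frac{2n-1}{3}\right\rfloor+\delta_{1,(2n-1\bmod 3)}$ that would be produced by applying Theorem~\ref{thm4} to the size $2n-1$. This is a short case analysis on $n\bmod 3$: when $n\equiv 0\bmod 3$ we have $2n-1\equiv 2\bmod 3$ and both expressions equal $\frac{2n-2}{3}$; when $n\equiv 1\bmod 3$ we have $2n-1\equiv 1\bmod 3$ and both equal $\frac{2n+1}{3}$; when $n\equiv 2\bmod 3$ we have $2n-1\equiv 0\bmod 3$ and both equal $\frac{2n-1}{3}$. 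Once this is verified, Theorem~\ref{thm4} applied to $2n-1$ yields that
$$
\left\{\rho\left(\Strig{\BS{2n-1}{k}{l_k}}\right)\ \middle|\ k\in\{0,\ldots,m-1\}\right\}
$$
is a basis of $\RST{2n-1}$, for any choice of integers $l_0,\ldots,l_{m-1}$.

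Next, by Corollary~\ref{cor12}, the restriction $\gamma\vert_{\RST{2n-1}}:\RST{2n-1}\longrightarrow\RPT{n}$ is an isomorphism, and therefore the image of the above basis under $\gamma$ is a basis of $\RPT{n}$. This immediately gives the first assertion of the theorem.

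It remains to make the image explicit. From the formula \eqref{eq17} of Theorem~\ref{thm4} applied to size $2n-1$, the $(i,j)$-entry of $\rho\left(\Strig{\BS{2n-1}{k}{l_k}}\right)$ is
$$
\binom{l_k+j-i}{k+1-i}+\binom{l_k+(2n-1)-j}{k+i-j}+\binom{l_k+i-1}{k+j-(2n-1)}\bmod 2,
$$
for $1\le i\le j\le 2n-1$. By definition of $\gamma$, the $(i,j)$-entry (now with $1\le j\le i\le n$) of $\gamma\rho\left(\Strig{\BS{2n-1}{k}{l_k}}\right)$ is obtained by replacing $j$ by $n-1+j$ in the previous expression. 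A direct substitution simplifies the three summands to
$$
\binom{l_k+j-i+n-1}{k+1-i},\quad \binom{l_k+n-j}{k+i-j-n+1},\quad \binom{l_k+i-1}{k+j-n},
$$
which matches the claimed formula. No obstacle is really expected here: the entire argument is a transport of structure via $\gamma$ followed by a routine index shift, so the only point that needs care is the bookkeeping on $m$ and on the substitution $j\mapsto n-1+j$ inside the binomial coefficients.
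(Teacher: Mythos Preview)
Your proposal is correct and follows exactly the paper's approach: the paper's proof consists of the single line ``From Theorem~\ref{thm4} and Corollary~\ref{cor12}.'' Your additional case analysis verifying that $m=2\left\lfloor\frac{n-1}{3}\right\rfloor+1$ coincides with $\left\lfloor\frac{2n-1}{3}\right\rfloor+\delta_{1,(2n-1\bmod 3)}$, and your explicit substitution $j\mapsto n-1+j$ in the formula from \eqref{eq17}, are welcome details that the paper leaves implicit.
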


\begin{proof}
This follows directly from Theorem~\ref{thm4} and Corollary~\ref{cor12}.
\end{proof}

\begin{thm}\label{thm11}
Let $n$ be a positive integer. The set
$$
\left\{ \gamma\left(\BS{2n-1}{2(n-k)-1}{-k}\right)\ \middle|\ k\in\left\{1,\ldots,n-1\right\} \right\} \cup \left\{\gamma\left(\Strig{(1)_{2n-1}}\right)\right\}
$$
is a basis of $\HPT{n}$, where
$$
\gamma\left(\Strig{\BS{2n-1}{2(n-k)-1}{-k}}\right) = \left(\binomd{-k+j-i+n-1}{2(n-k)-i}\right)_{1\le j\le i\le n},
$$
for all $k\in\left\{1,\ldots,\left\lfloor\frac{n}{2}\right\rfloor\right\}$ and $\gamma\left(\Strig{(1)_{2n-1}}\right)=\Ptrig{(1)\cdot(0)_{n-1}}{(1)\cdot(0)_{n-1}}$. 
\end{thm}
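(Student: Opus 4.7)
The plan is to transport a known basis of $\HST{2n-1}$ to $\HPT{n}$ via the isomorphism $\gamma$, and then simply rewrite the resulting formulas. Since $2n-1$ is odd, Theorem~\ref{thm5} (applied with $n$ replaced by $2n-1$, so $\lceil(2n-1)/2\rceil = n$ and $\lfloor(2n-1)/2\rfloor = n-1$) provides the basis
\[
\left\{ \Strig{\BS{2n-1}{2(n-k)-1}{-k}} \,\middle|\, k\in\{1,\ldots,n-1\} \right\} \cup \left\{ \Strig{(1)_{2n-1}} \right\}
\]
of $\HST{2n-1}$, using the identity $2n-1-2k = 2(n-k)-1$. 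By Corollary~\ref{cor12}, the linear map $\gamma$ restricts to an isomorphism $\HST{2n-1} \to \HPT{n}$, so applying $\gamma$ to this basis yields a basis of $\HPT{n}$. This is the whole structural content of the theorem; the remaining work is purely computational.

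Next I would make the explicit formulas explicit. Proposition~\ref{prop18} gives
\[
\Strig{\BS{2n-1}{2(n-k)-1}{-k}} = \left(\binomd{-k+j-i}{2(n-k)-i}\right)_{1\le i\le j\le 2n-1},
\]
and by the very definition of $\gamma$, which sends $(a_{i,j})_{1\le i\le j\le 2n-1}$ to $(a_{i,n-1+j})_{1\le j\le i\le n}$, the substitution $j \mapsto n-1+j$ yields
\[
\gamma\!\left(\Strig{\BS{2n-1}{2(n-k)-1}{-k}}\right) = \left(\binomd{-k+j-i+n-1}{2(n-k)-i}\right)_{1\le j\le i\le n},
\]
which is exactly the stated formula.

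For the remaining basis vector $\gamma\left(\Strig{(1)_{2n-1}}\right)$, I would use the elementary fact that $\partial (1)_{2n-1} = (0)_{2n-2}$, so all derived rows of $\Strig{(1)_{2n-1}}$ vanish and $\Strig{(1)_{2n-1}} = (a_{i,j})$ with $a_{1,j}=1$ for all $j$ and $a_{i,j}=0$ for $i\ge 2$. Reading off the left side $(a_{i,n})_{1\le i\le n}$ and the right side $(a_{i,n-1+i})_{1\le i\le n}$ of $\gamma\left(\Strig{(1)_{2n-1}}\right)$, both equal $(1)\cdot(0)_{n-1}$, giving the claimed $\Ptrig{(1)\cdot(0)_{n-1}}{(1)\cdot(0)_{n-1}}$.

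No serious obstacle is expected: everything reduces to combining Theorem~\ref{thm5}, Corollary~\ref{cor12}, Proposition~\ref{prop18}, and the index shift in the definition of $\gamma$. The only minor point of care is matching the two parameterizations $k \mapsto 2n-1-2k$ and $k \mapsto 2(n-k)-1$, and verifying that the image of the final basis element $\Strig{(1)_{2n-1}}$ under $\gamma$ is correctly identified in terms of its left and right sides rather than a general formula.
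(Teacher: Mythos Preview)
Your proposal is correct and follows exactly the paper's approach: the paper's proof is the single line ``From Theorem~\ref{thm5} and Corollary~\ref{cor12},'' and you have simply unpacked that reference, including the computational verification via Proposition~\ref{prop18} and the definition of $\gamma$.
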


\begin{proof}
This follows directly from Theorem~\ref{thm5} and Corollary~\ref{cor12}.
\end{proof}

\begin{thm}\label{thm12}
Let $n$ and $m$ be positive integers such that $m=\left\lceil\frac{n}{3}\right\rceil$. Then, the set
$$
\left\{\gamma\left(U_{2n-1}\right)\right\}\cup\left\{ \gamma\rho\left(\BS{2n-1}{2k+1}{k-n+1}\right)\ \middle|\ k\in\left\{0,\ldots,m-2\right\}\right\}
$$
is a basis of $\DST{n}$, where $\gamma\left(U_{2n-1}\right)=\Ptrig{(1)\cdot(0)_{n-2}\cdot(1)}{(1)\cdot(0)_{n-2}\cdot(1)}$ and
\begin{equation*}
\resizebox{\textwidth}{!}{$
\displaystyle\gamma\rho\left(\BS{2n-1}{2k+1}{k-n+1}\right) = \left(\left(\binom{k+j-i}{2k-i+2}+\binom{k-j+1}{i-j+2k-n+2}+\binom{k-n+i}{2k+j-n+1}\right)\tpmod{2}\right)_{1\le j\le i\le n},
$}
\end{equation*}
for all $k\in\left\{0,\ldots,m-2\right\}$.
\end{thm}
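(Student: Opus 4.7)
The plan is to transfer the basis of $\DST{2n-1}$ supplied by Theorem~\ref{thm8} across the isomorphism $\gamma$ from Corollary~\ref{cor12}. Since $\dim\DPT{n}=\lceil n/3\rceil$ matches the cardinality of the proposed family, all that is really at stake is (a) checking that $\lceil n/3\rceil$ agrees with the index bound $m'=\lfloor(2n+2)/6\rfloor+\delta_{1,(2n-1\bmod 6)}$ coming from $\DST{2n-1}$, and (b) unfolding the formula for $\gamma$ applied to the two types of basis elements.

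First I would verify (a) by a short case analysis on $n\bmod 3$. Since $2n-1\equiv 5,1,3\pmod 6$ according as $n\equiv 0,1,2\pmod 3$, we get $m'=n/3$, $(n-1)/3+1$, $(n-2)/3+1$ respectively, which in all three cases equals $\lceil n/3\rceil=m$. Thus the target cardinality is correct.

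Second, because $2n-1$ is odd, the odd branch of Theorem~\ref{thm8} furnishes the basis
$$
\{U_{2n-1}\}\cup\{\iStrig{2k+1}\mid k\in\{0,\ldots,m-2\}\}
$$
of $\DST{2n-1}$, with $\iStrig{2k+1}=\rho(\Strig{\BS{2n-1}{2k+1}{k-n+1}})$. By Corollary~\ref{cor12} the map $\gamma:\DST{2n-1}\to\DPT{n}$ is an isomorphism, so the image of this family is a basis of $\DPT{n}$; this is exactly the set asserted in the statement.

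Third, I would read off the explicit formulas. For $\gamma(U_{2n-1})$, Proposition~\ref{prop17} gives $U_{2n-1}=(\delta_{i,1}+\delta_{i,j}+\delta_{j,2n-1}\bmod 2)$. Since $\gamma$ sends $(a_{i,j})$ to $(a_{i,n-1+j})_{1\le j\le i\le n}$, the left side of $\gamma(U_{2n-1})$ is the column $j=n$ of $U_{2n-1}$ and the right side is the diagonal $j=n-1+i$; both evaluate, for $n\ge 2$, to $(1)\cdot(0)_{n-2}\cdot(1)$, which establishes the claimed $\Ptrig{}{}$ description. For the remaining basis elements, I would substitute $n'=2n-1$ and then $k'=2k+1$ into the explicit formula of Theorem~\ref{thm8}, obtaining
$$
\iStrig{2k+1}=\left(\binom{k-n+1+j-i}{2k+2-i}+\binom{k+n-j}{2k+1+i-j}+\binom{k-n+i}{2k+2+j-2n}\bmod 2\right),
$$
and then apply $\gamma$ by replacing the column index $j$ with $n-1+j$ and restricting to $1\le j\le i\le n$. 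Simplifying the three binomial arguments yields exactly the stated formula for $\gamma\rho(\Strig{\BS{2n-1}{2k+1}{k-n+1}})$.

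The only real obstacle is bookkeeping: tracking the substitution $j\mapsto n-1+j$ through each binomial coefficient and confirming the case analysis that identifies $m'$ with $\lceil n/3\rceil$. No new structural idea is needed beyond Corollary~\ref{cor12} and Theorem~\ref{thm8}.
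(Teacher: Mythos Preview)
Your proposal is correct and follows exactly the paper's approach: the paper's own proof is the one-line ``From Theorem~\ref{thm8} and Corollary~\ref{cor12},'' and you have simply written out the bookkeeping (the case analysis showing $m=\lceil n/3\rceil$ coincides with the $\DST{2n-1}$ bound, and the substitution $j\mapsto n-1+j$ in the explicit binomial formulas) that the paper leaves implicit.
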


\begin{proof}
This follows directly from Theorem~\ref{thm8} and Corollary~\ref{cor12}.
\end{proof}

We end this section by giving bases obtained from Theorem~\ref{thm10} for $\RPT{7}$, from Theorem~\ref{thm11} for $\HPT{4}$ and from Theorem~\ref{thm12} for $\DPT{11}$.

For $n=7$ and $\ell_0=\ell_1=\ell_2=\ell_3=\ell_4=0$, we obtain the following basis
$$
\left\{\gamma\rho\left(\Strig{\BS{13}{k}{0}}\right)\ \middle|\ k\in\{0,1,2,3,4\}\right\}
$$
of $\RPT{7}$. These are given in Table~\ref{tab4}.
\begin{table}[htbp]
$$
\begin{array}{|c|c|c|c|}
\hline
\vspace{-2ex} & & & \\
k & \BS{13}{k}{0} & \rho\left(\Strig{\BS{13}{k}{0}}\right) & \gamma\rho\left(\Strig{\BS{13}{k}{0}}\right) \\[1.5ex]
\hline
0 & (1111111111111) & \Strig{(0111111111110)} & \Delta_0=\Ptrig{(1000001)}{(1000001)} \\
\hline
1 & (0101010101010) & \Strig{(0001010101000)} & \Delta_1=\Ptrig{(0100010)}{(0100010)} \\
\hline
2 & (0011001100110) & \Strig{(0101001100010)} & \Delta_2=\Ptrig{(1110101)}{(1010111)} \\
\hline
3 & (0001000100010) & \Strig{(0100000101010)} & \Delta_3=\Ptrig{(0000010)}{(0100000)} \\
\hline
4 & (0000111100001) & \Strig{(1111011110001)} & \Delta_4=\Ptrig{(1011001)}{(1001101)} \\
\hline
\end{array}
$$
\caption{A basis of $\RPT{7}$}\label{tab4}
\end{table}
All the rotationally symmetric generalized Pascal triangles of size $7$ are depicted in Figure~\ref{fig17} where the elements of the basis $\left\{\Delta_0,\Delta_1,\Delta_2,\Delta_3,\Delta_4\right\}$ are in red and, for every $\Delta\in\RPT{7}$, the coordinate vector $(x_0,x_1,x_2,x_3,x_4)$ of $\Delta=x_0\Delta_0+x_1\Delta_1+x_2\Delta_2+x_3\Delta_3+x_4\Delta_4$ is given.

\begin{figure}[htbp]
\centerline{\includegraphics[width=\textwidth]{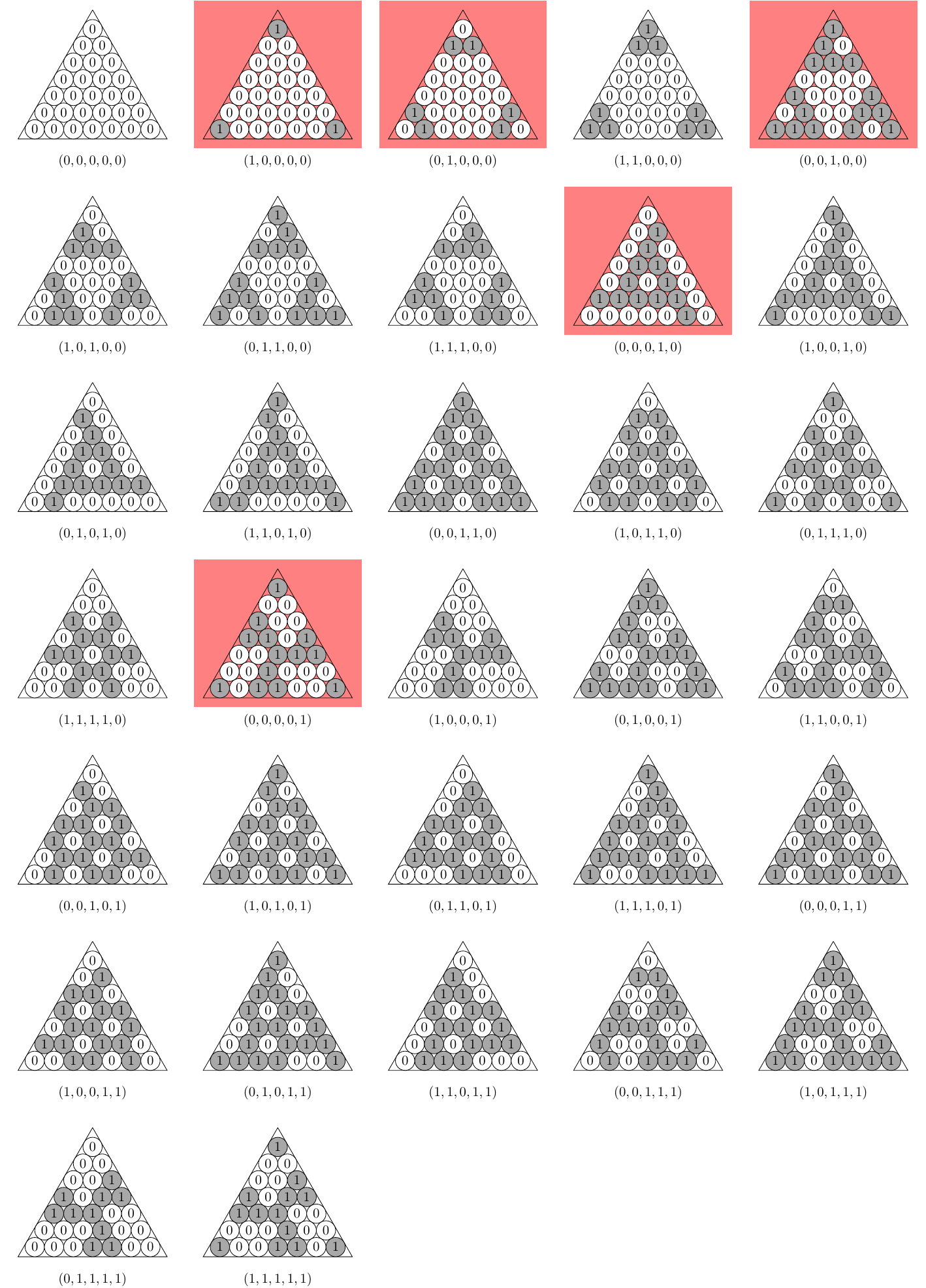}}
\caption{The $32$ triangles of $\RPT{7}$ where the $5$ red triangles form a basis}\label{fig17}
\end{figure}

For $n=4$, we obtain the following basis
$$
\left\{ \gamma\left(\BS{7}{7-2k}{-k}\right)\ \middle|\ k\in\left\{1,2,3\right\} \right\} \cup \left\{\gamma\left(\Strig{(1)_{7}}\right)\right\}
$$
of $\HPT{4}$. These are given in Table~\ref{tab5} with $\Delta_4=\gamma\left(\Strig{(1)_{7}}\right)=\Ptrig{(1000)}{(1000)}$.
\begin{table}[htbp]
$$
\begin{array}{|c|c|c|}
\hline
\vspace{-2ex} & & \\
k & \BS{7}{7-2k}{-k} & \gamma\rho\left(\Strig{\BS{7}{7-2k}{-k}}\right) \\[1.5ex]
\hline
1 & (1000001) & \Delta_1=\Ptrig{(0001)}{(0001)} \\
\hline
2 & (0100010) & \Delta_2=\Ptrig{(0011)}{(0011)} \\
\hline
3 & (1010101) & \Delta_3=\Ptrig{(0100)}{(0100)} \\
\hline
\hline
 & & \Delta_4=\Ptrig{(1000)}{(1000)} \\
\hline
\end{array}
$$
\caption{A basis of $\HPT{4}$}\label{tab5}
\end{table}
All the horizontally symmetric generalized Pascal triangles of size $4$ are depicted in Figure~\ref{fig18} where the elements of the basis $\left\{\Delta_1,\Delta_2,\Delta_3,\Delta_4\right\}$ are in red and for every $T\in\HPT{4}$, the coordinate vector $(x_1,x_2,x_3,x_4)$ of $\Delta=x_1\Delta_1+x_2\Delta_2+x_3\Delta_3+x_4\Delta_4$ is given.

\begin{figure}[htbp]
\centerline{\includegraphics[width=\textwidth]{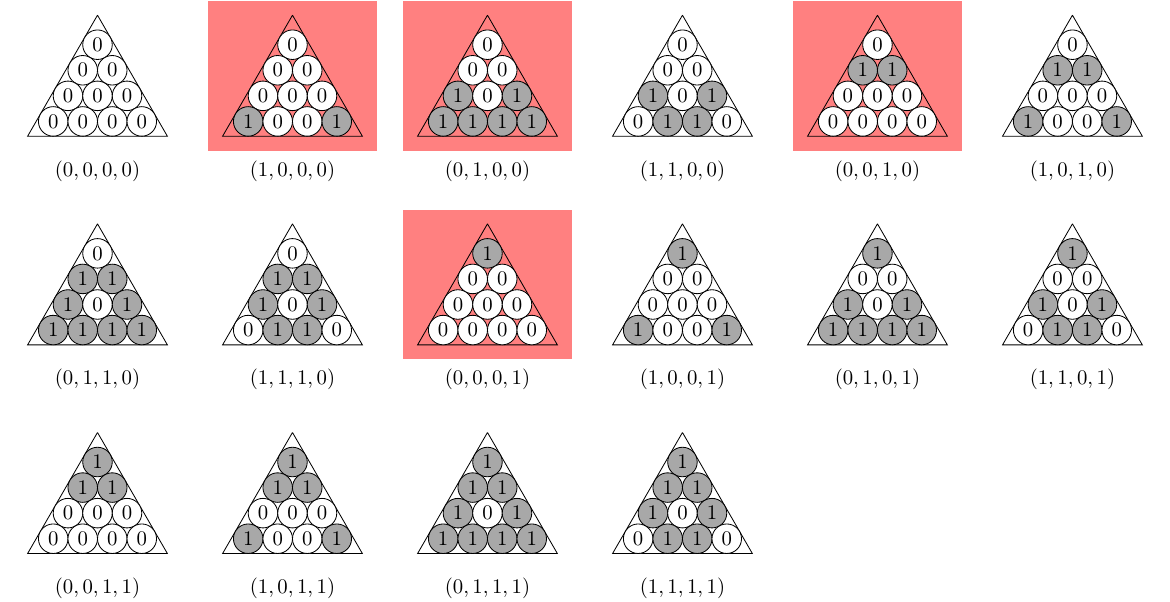}}
\caption{The $16$ triangles of $\HPT{4}$ where the $4$ red triangles form a basis}\label{fig18}
\end{figure}

For $n=11$, we obtain the following basis
$$
\left\{\gamma\left(U_{21}\right)\right\}\cup\left\{\gamma\rho\left(\Strig{\BS{21}{2k+1}{k-10}}\right)\ \middle|\ k\in\left\{0,1,2\right\}\right\}
$$
of $\DPT{11}$. These are given in Table~\ref{tab6} with $\Delta_0=\gamma\left(U_{21}\right)=\Ptrig{(10\cdots01)}{(10\cdots01)}$.
\begin{table}[htbp]
\resizebox{\textwidth}{!}{$
\begin{array}{|c|c|c|c|}
\hline
\vspace{-2ex} & & & \\
k & \BS{21}{2k+1}{k-10} & \rho\left(\Strig{\BS{21}{2k+1}{k-10}}\right) & \gamma\rho\left(\Strig{\BS{21}{2k+1}{k-10}}\right) \\[1.5ex]
\hline
0 & (010101010101010101010) & \Strig{(000101010101010101000)} & \Delta_1=\Ptrig{(01000000010)}{(01000000010)} \\
\hline
1 & (100010001000100010001) & \Strig{(011110001000100011110)} & \Delta_2=\Ptrig{(00110001100)}{(00110001100)} \\
\hline
2 & (000001010000010100000) & \Strig{(000000010000010000000)} & \Delta_3=\Ptrig{(00010001000)}{(00010001000)} \\
\hline
\hline
& & & \Delta_0=\Ptrig{(10000000001)}{(10000000001)} \\
\hline
\end{array}
$}
\caption{A basis of $\DPT{11}$}\label{tab6}
\end{table}
All the dihedrally symmetric generalized Pascal triangles of size $11$ are depicted in Figure~\ref{fig19} where the elements of the basis $\left\{\Delta_0,\Delta_1,\Delta_2,\Delta_3\right\}$ are in red and for every $\Delta\in\DPT{11}$, the coordinate vector $(x_0,x_1,x_2,x_3)$ of $\Delta=x_0\Delta_0+x_1\Delta_1+x_2\Delta_2+x_3\Delta_3$ is given.

\begin{figure}[htbp]
\centerline{\includegraphics[width=\textwidth]{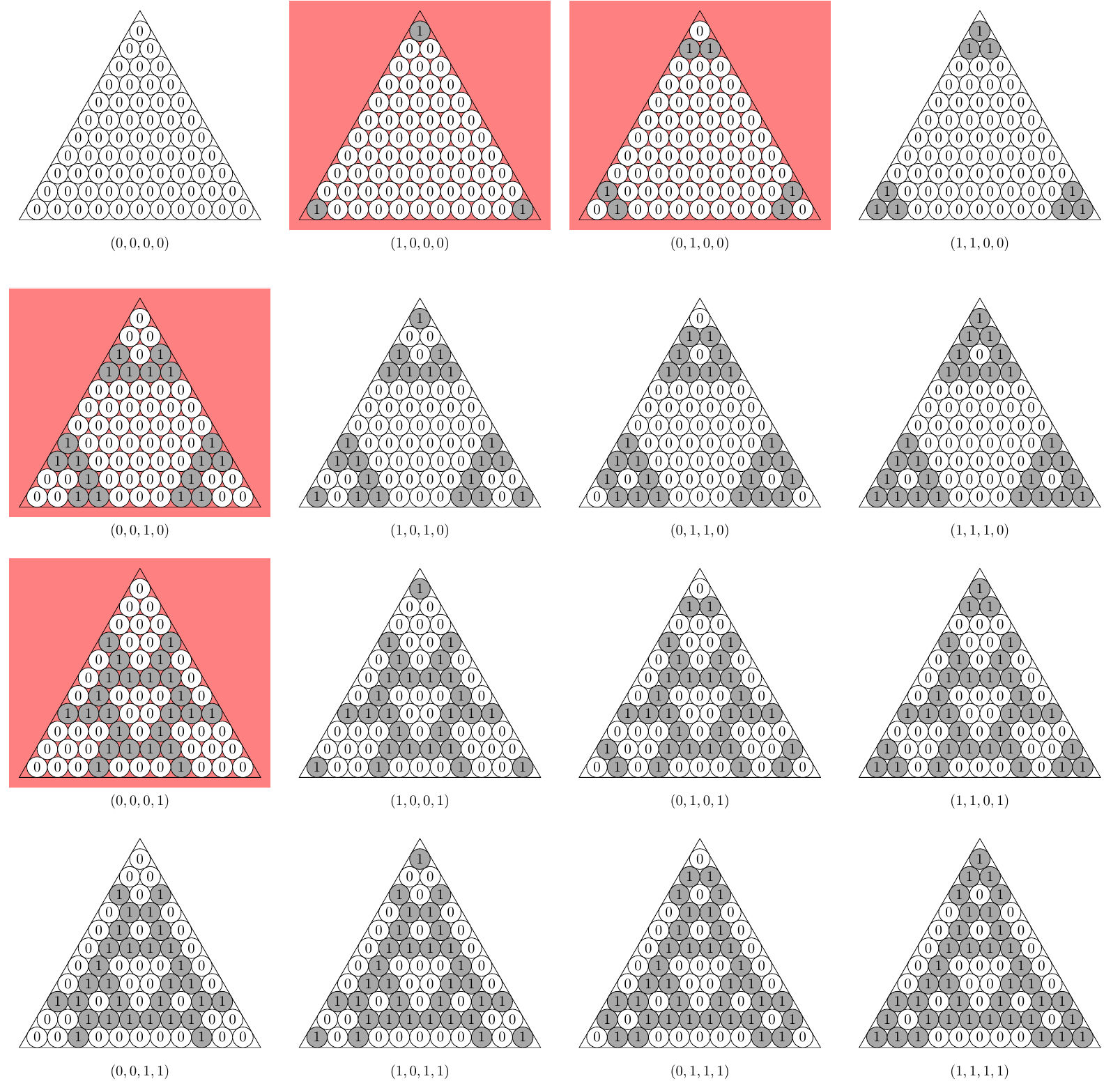}}
\caption{The $16$ triangles of $\DPT{11}$ where the $4$ red triangles form a basis}\label{fig19}
\end{figure}

\section*{Acknowledgements}
The author would like to thank the two anonymous referees for the time spent reading this manuscript and for useful comments and remarks, which improved the presentation of the paper.

\addcontentsline{toc}{section}{References}
\bibliographystyle{plain}
\bibliography{biblio}

\end{document}